\newcommand{\ifempty}[3]{\ifthenelse{\equal{#1}{}}{#2}{#3}}
\tikzstyle{labelnode}=[font=\footnotesize]
\tikzstyle{dist}=[circle,  inner sep=0.8pt, solid, draw,fill]
\tikzstyle{ma}=[every label/.style={labelnode}]
\tikzstyle{mdp}=[ma]
\tikzset{ms/.style={draw,rectangle,text centered,minimum size=6mm,text width=#1},ms/.default=3mm}
\tikzset{ps/.style={draw,circle,text centered,minimum size=6mm,text width=#1},ps/.default=3mm}
\tikzset{pswide/.style={draw,ellipse,minimum size=6mm,text centered,text width=#1},pswide/.default=7mm}
\tikzstyle{mtrans}=[->,densely dotted, semithick,labelnode]
\tikzstyle{ptrans}=[->,semithick,labelnode]
\tikzstyle{every initial by arrow}=[inner sep=0pt] 
\tikzset{init/.style={initial #1, initial text={}, initial distance=4mm},init/.default=left}
\newcommand*{\toptionalfrac}[2][]{\ifempty{#1}{#2}{\nicefrac{#1}{#2}}}
\newcommand*{\tprob}[2][]{{\ensuremath{\color{colorProbability}\toptionalfrac[#1]{#2}}}}
\newcommand*{\tact}[2][]{{\ensuremath{\color{colorAction}\toptionalfrac[#1]{#2}}}}
\newcommand*{\tactext}[2][]{{\ensuremath{\color{colorextendedAction}\toptionalfrac[#1]{#2}}}}
\newcommand{\modelType}{pa}
\newcommand{\extended}{true} 
\newcommand{\prodAutomata}{true} 
\newcommand{\mless}{\mathit{mless}}
\newcommand{\comp}{\mathit{cmp}}
\newcommand{\safe}{\mathit{safe}}
\newcommand{\prt}{\mathit{prt}}
\newcommand{\fairWrtRegionModel}[2]{{{\mathit{fair}}_{#2\ }^{#1}}\!\!} 
\newcommand{\fairC}{\fairWrtRegionModel{}{\decomp}} 
\colorlet{colorRate}{orange}
\colorlet{colorProbability}{teal}
\colorlet{colorAction}{magenta}
\colorlet{colorReward}{green}
\colorlet{colorextendedAction}{blue}
\newcommand{\domain}{\mathit{dom}}
\newcommand{\rats}{\mathbb{Q}}
\newcommand{\nats}{\mathbb{N}}
\newcommand{\reals}{\mathbb{R}}
\newcommand{\cupdot}{\mathbin{\mathaccent\cdot\cup}}
\newcommand{\iverson}[1]{\llbracket\,#1\,\rrbracket}
\newcommand{\set}[1]{\{#1\}}
\newcommand{\bigset}[1]{\big\{#1\big\}}
\newcommand{\subDist}[1]{\mathit{SubDist}(#1)}
\newcommand{\dist}[1]{\mathit{Dist}(#1)}
\newcommand{\paramDist}[2]{\mathit{Dist}_{#1}(#2)}
\newcommand{\modelsWrt}[1]{
    \def\temp{#1}
    \ifx\temp\empty%
      \models{}\!
    \else
      \models^{#1}
    \fi
  }
\newcommand{\last}[1]{\mathit{last}(#1)}
\newcommand{\pref}[1]{\mathit{pref}(#1)}
\newcommand{\indicatorFct}[1]{\mathbbm{1}_{#1}}
\newcommand{\setOfdecompsOf}[1]{2^{#1}}
\newcommand{\enabledActsOf}[1]{\mathit{Act}\left(#1 \right)}
\newcommand{\decomp}[0]{\mathcal{C}}
\newcommand{\lab}{\mathsf{a}}
\newcommand{\altlab}{\mathsf{b}}
\newcommand{\altaltlab}{\mathsf{c}}
\newcommand{\actSetOf}[1]{
    \def\temp{#1}
    \ifx\temp\empty%
      \mathit{Act}
    \else
      \mathit{Act}_{#1}
    \fi
  }
  \newcommand{\syncOf}[1]{
    \def\temp{#1}
    \ifx\temp\empty%
      L
    \else
      L_{#1}
    \fi
  }
\newcommand{\alphabetOf}[1]{
    \def\temp{#1}
    \ifx\temp\empty%
      \Sigma
    \else
      \Sigma_{#1}
    \fi
  }
\newcommand{\stateSetOf}[1]{
    \def\temp{#1}
    \ifx\temp\empty%
      S 
    \else
      S_{#1}
    \fi
  }
\newcommand{\transRelationOf}[1]{
  \def\temp{#1}
  \ifx\temp\empty%
    \delta 
  \else
    \delta_{#1}
  \fi
}
\newcommand{\transFctOf}[1]{
	\def\temp{#1}
	\ifx\temp\empty%
	\mathbf{P} 
	\else
  \mathbf{P}_{#1}
	\fi
}
\newcommand{\initialOf}[1]{
  \def\temp{#1}
  \ifx\temp\empty%
    {s^{\mathit{init}}}
  \else
    {s^{\mathit{init}}_{#1}}
  \fi
}
\newcommand{\parameterSetOf}[1]{
    \def\temp{#1}
    \ifx\temp\empty%
      V 
    \else
      V_{#1}
    \fi
  }
\newcommand{\idSetOf}[1]{
    \def\temp{#1}
    \ifx\temp\empty%
      ID
    \else
      ID_{#1}
    \fi
  }
\newcommand{\paTupleOf}[1]{\left(\stateSetOf{#1}, \initialOf{#1}, \actSetOf{#1}, \transFctOf{#1},   \syncOf{#1}  \right)} %
\newcommand{\ppaTupleOf}[1]{\left(\stateSetOf{#1}, \initialOf{#1},\parameterSetOf{#1},  \actSetOf{#1} , \transFctOf{#1}, \syncOf{#1} \right)} 
\newcommand{\bpAutomatonOf}[1]{\mathcal{A}^{bad}_{#1}}
\newcommand{\QstateSetOf}[1]{
    \def\temp{#1}
    \ifx\temp\empty%
      Q
    \else
      Q_{#1}
    \fi
  }
\newcommand{\qInitialOf}[1]{
      \def\temp{#1}
      \ifx\temp\empty%
        {q^{init}}
      \else
        {q^{init}_{#1}}
      \fi
}
\newcommand{\regionIntersection}[0]{\cap} 
\newcommand{\regionUnion}[0]{\cup} 
\newcommand{\regionIntersectionOf}[3]{
	\def\temp{#2}
	\ifx\temp\empty%
    {{#1} \! \regionIntersection{} \! {#3}}
	\else
    {{#1 \! \regionIntersection{} #2 \! \regionIntersection{} #3} }
	\fi
  }
\newcommand{\stratProjOfToValuation}[3]{
	\def\temp{#3}
	\ifx\temp\empty%
	{\restrOfTo{#1}{#2}}
	\else
	{{\restrOfToVal{#1}{#2}{#3}}}
	\fi}
\newcommand{\regionUnionOf}[3]{
	\def\temp{#2}
	\ifx\temp\empty%
	{{#1} \!\regionUnion\! {#3}}
	\else
	{{#1}\! \regionUnion \!{#2} \regionUnion\! {#3}}
	\fi}
\newcommand{\APOf}[1]{
	\def\temp{#1}
	\ifx\temp\empty%
	{AP}
	\else
	{AP_{#1}}
	\fi}
\newcommand{\labellingOf}[2]{
    \def\temp{#2}
      \ifx\temp\empty%
      {StateLab_{#1}}
      \else
       {StateLab_{#1}}\left(#2\right)
      \fi}
\newcommand{\PrOf}[3]{Pr_{#1}^{#2}
    \def\temp{#3}
      \ifx\temp\empty%
      \else
        \!\left( #3 \right)
      \fi}
\newcommand{\bdownarrow}{\big \downarrow}  
\newcommand{\buparrow}{\big \uparrow}  
\newcommand{\budarrow}{\big \updownarrow}  
\newcommand{\monotonicOnRegionParameter}[5]{
  \def\temp{#3}
  {#2}{\!#1}_{
    {\ifx\temp\empty%
    {#4}
    \else
    {#3}, {#4}
    \fi}
  }^{#5}
}
\newcommand{\ExpTot}[3]{Ex_{#1}^{#2}
	\def\temp{#3}
	\ifx\temp\empty%
	\else
	\!\left( #3 \right)
	\fi}
\newcommand{\probPredicate}[2]{\mathbb{P}_{#1}\!(#2)}  
\newcommand{\expPredicate}[2]{\mathbb{E}_{#1}\!(#2)}
\newcommand{\generalPredicate}[0]{\varphi} 
\newcommand{\infpath}{\pi}
\newcommand{\ppath}{\pi}
\newcommand{\finpath}{\widehat{\pi}}
\newcommand{\infPathsOf}[1]{Paths_{#1}^{inf}}
\newcommand{\finPathsOf}[1]{Paths_{#1}^{fin}}
\newcommand{\liftedpaths}[2]{ #1 \otimes #2 }
\newcommand{\traceOf}[1]{\mathit{tr}({#1})}
\newcommand{\trace}{\rho}
\newcommand{\alphabetExtensionOfTo}[2]{#1{\langle#2\rangle}}
\newcommand{\restrOfToVal}[3]{{#1{\upharpoonright}_{\!#2}^{#3}}}
\newcommand{\restrOfTo}[2]{{#1{\upharpoonright}_{\!#2}}} 
\newcommand{\strategy}{\sigma} 
\newcommand{\strategyset}{Str}
\newcommand{\strategysetOf}[2]{\strategyset_{\!#1}^{\!#2}}
\newcommand{\agTriple}[4]{ {#1} \modelsWrt{#2} {#3}  {\rightarrow}  {#4}}  
\newcommand{\multiobjectiveQuery}[2]{\textsf{#2}^{#1}}
\newcommand{\solutionFctMdpObjective}[2]{\textsf{sol}_{#1}^{#2}}
\newcommand{\cyl}[0]{\textsf{Cyl}}
\newcommand{\ppa}{\mathcal{M}}
\newcommand{\pa}{\mathcal{N}}
\newcommand{\product}{{\otimes}}
\newcommand{\regLang}{\mathcal{L}}
\newcommand{\rewFct}{\mathcal{R}}
\newcommand{\valuation}{\valuationVector{v}}
\newcommand{\valuationVector}[1]{\mathpzc{#1}}
\DeclareFontFamily{OT1}{pzc}{}
\DeclareFontShape{OT1}{pzc}{m}{it}{<-> s * [1.10] pzcmi7t}{}
\DeclareMathAlphabet{\mathpzc}{OT1}{pzc}{m}{it}
\newcommand{\region}{{R}}   
\definecolor{myorange}{RGB}{230,159,0}
\definecolor{myblue}{RGB}{55,126,184}
\definecolor{mygreen}{RGB}{77,175,74}
\definecolor{myred}{RGB}{228,26,28}
\definecolor{mypink}{RGB}{204,121,167}
\definecolor{mygray}{rgb}{0.66, 0.66, 0.66}
\title{Compositional Reasoning for Parametric Probabilistic Automata} 
\author{Hannah Mertens\footnote{Corresponding author}}{RWTH Aachen University, Aachen, Germany}{hannah.mertens@cs.rwth-aachen.de}{https://orcid.org/0009-0009-6815-3285}{}
\author{Tim Quatmann}{RWTH Aachen University, Aachen, Germany}{tim.quatmann@cs.rwth-aachen.de}{https://orcid.org/0000-0002-2843-5511}{This research was funded by a KI-Starter grant from the Ministerium für Kultur
und Wissenschaft NRW}
\author{Joost-Pieter Katoen}{RWTH Aachen University, Aachen, Germany}{katoen@cs.rwth-aachen.de}{https://orcid.org/0000-0002-6143-1926}{}
\authorrunning{H. Mertens, T.Quatmann and J.\,P. Katoen} 
\keywords{Verification, Probabilistic systems, Assume-guarantee reasoning, Parametric Probabilistic Automata, Parameter synthesis} 
\begin{document}

\nottoggle{extended}{
    \todo{add link and add bibtex extendedMer+25}
    \relatedversiondetails[cite={extendedMer+25}]{Full Version}{https://arxiv.org/abs/...}
}{
    \relatedversion{This is the full version of the paper accepted for publication at CONCUR 2025}
}

\maketitle

\begin{abstract}
We establish an assume-guarantee (AG) framework for compositional reasoning about multi-objective queries in parametric probabilistic automata (pPA)\textemdash{}an extension to probabilistic automata (PA), where transition probabilities are functions over a finite set of parameters. 
We lift an existing framework for PA to the pPA setting, incorporating asymmetric, circular, and interleaving proof rules. 
Our approach enables the verification of a broad spectrum of multi-objective queries for pPA, encompassing probabilistic properties and (parametric) expected total rewards. 
Additionally, we introduce a rule for reasoning about monotonicity in composed pPAs.
\end{abstract}

\section{Introduction}\label{sec:intro}

\emph{Probabilistic Model Checking}~\cite{For+11,Kat16} studies the automated verification of Markov models for systems with random behavior.
Applications include network and security protocols, biochemical processes, and planning under uncertainty~\cite{NS06,KNP08,FWHT15}.
Common properties such as reachability probabilities in Markov decision processes (MDPs) can often be verified efficiently in PTIME~\cite{BK08}.

When the probabilities with which the system evolves are not known exactly, verification results must be robust towards slight perturbations.
\emph{Parametric Markov models}~\cite{Daw04,Jun+24} allow representing uncertain model quantities---for example the bias of a coin-flip or the probability of a sensor misreading---using parameters.
\emph{Feasibility} is a fundamental verification problem for parametric systems and asks whether there is an instantiation of the parameters under which a given specification holds.
Deciding feasibility for reachability probability specifications in parametric MDPs is ETR-complete, i.e., at least NP-hard and within PSPACE~\cite{JKPW21}.
The dual \emph{verification} problem that asks if the specification holds under all instantiations is co-ETR complete.
Checking parametric Markov models is therefore significantly more complex compared to Markov models without parameters.

The number of system states grows exponentially with the number of system components.
The resulting \emph{state-space explosion} is an omnipresent challenge when model checking complex systems, often rendering analysis computationally infeasible. 
Compositional verification techniques such as \emph{assume-guarantee} (AG) reasoning \cite{Jon83,Pnu84} address this problem by decomposing the verification task into smaller sub-tasks that consider individual components in isolation. 
This modular verification approach has been successfully applied in various domains, including service-based workflow verification \cite{Bou+16}, large-scale IT systems \cite{Cal+12}, and autonomous systems incorporating deep neural networks \cite{Pas+18,Pas+23}. 
Recent advancements include circular AG reasoning \cite{Elk+18} and verification-repair techniques \cite{Fre+22}. 
Extensions to probabilistic systems have further expanded the scope of AG reasoning, as demonstrated in works such as \cite{Kwi+13}, where AG reasoning was applied to Segala's probabilistic automata (PA) \cite{Seg+95}---a compositional extension of Markov decision processes.
Automated approaches to AG reasoning for probabilistic systems have also been explored \cite{Fen+11, LL19}, enabling more scalable verification. 
Other works have considered parametric, but non-probabilistic timed automata~\cite{Lac+16} as well as parameterized programs \cite{BR+06, SH13, NT16}---where the concurrent system is parameterized by the number of processes or threads in a configured instance.

\emph{This work introduces a framework for compositional reasoning about parametric probabilistic automata (pPA).} 
The case studies presented by Kwiatkowska et al.~\cite{Kwi+13} demonstrate the practical applicability of AG reasoning within a non-parametric setting. These findings provide strong motivation for extending this approach to the parametric domain. In this work, we develop the theoretical foundations of an AG reasoning framework for pPAs, leveraging results from (non-parametric) PAs~\cite{Kwi+13}. 
Due to the aforementioned ETR-hardness, compositional reasoning has a large potential and can be crucial to verify complex parameterized systems that are too large to handle monolithically.

\begin{example}\label{ex:high-level}
Consider a communication system, where a \emph{sender} $\mathcal{S}$ broadcasts messages to a \emph{receiver} $\mathcal{R}$ through a \emph{broadcast channel} $\mathcal{B}$. 
The system is modeled by the parallel composition $\mathcal{S} \parallel \mathcal{B} \parallel \mathcal{R}$.
The components are faulty: $\mathcal{S}$ might face a collision, broadcasting in $\mathcal{B}$ might fail due to message loss, and $\mathcal{R}$ might miss broadcasts.
The reliability of $\mathcal{S}$, $\mathcal{B}$, and $\mathcal{R}$ is influenced by parameters and the precise values of these parameters vary depending on network conditions, interference, or other factors. 
Our goal is to verify that under all parameter instantiations in a given parameter space $R$, the message is successfully \emph{received} with at least probability $0.7$, formally denoted by 
\[
(\mathcal{S} \parallel \mathcal{B} \parallel \mathcal{R}), R ~\modelsWrt{}~ \probPredicate{\geq 0.7}{\mathit{received}}.
\]
AG reasoning allows to verify the specification without explicitly considering the (potentially large) composition $\mathcal{S} \parallel \mathcal{B} \parallel \mathcal{R}$.
To this end, assume that we have established the following statements:
\begin{itemize}
    \item
    $\mathcal{S}, R \modelsWrt{} \probPredicate{<0.1}{\mathit{collision}}$---the probability that $\mathcal{S}$ faces a collision is below $0.1$
    \item
    $\agTriple{\mathcal{B}, R}{}{\probPredicate{<0.1}{\mathit{collision}}}{\probPredicate{\ge0.8}{\mathit{broadcast}}}$---if $\mathcal{B}$ observes a collision with probability below $0.1$, the message is broadcast with probability at least $0.8$
    \item
    $\agTriple{\mathcal{R}, R}{}{\probPredicate{\ge0.8}{\mathit{broadcast}}}{\probPredicate{\ge0.7}{\mathit{received}}}$---If the message is broadcast with probability at least $0.8$, then $\mathcal{R}$ receives the message with probability at least $0.7$

\end{itemize}
We reason about the composed system using the AG rule stated in \Cref{theo:pag_asym_rule} in \Cref{sec:pag}:
\[
\infer{(\mathcal{S} \parallel \mathcal{B}), R \modelsWrt{} \probPredicate{\ge0.8}{\mathit{broadcast}}}
  {\deduce{\agTriple{\mathcal{B}, R}{}{\probPredicate{<0.1}{\mathit{collision}}}{\probPredicate{\ge0.8}{\mathit{broadcast}}}}
   {\mathcal{S}, R \modelsWrt{} \probPredicate{<0.1}{\mathit{collision}}}
  }
\qquad
\infer{(\mathcal{S} \parallel \mathcal{B} \parallel \mathcal{R}), R \modelsWrt{} \probPredicate{\ge0.7}{\mathit{received}}}
  {\deduce{\agTriple{\mathcal{R}, R}{}{\probPredicate{\ge0.8}{\mathit{broadcast}}}{\probPredicate{\ge0.7}{\mathit{received}}}}
   {(\mathcal{S} \parallel \mathcal{B}), R \modelsWrt{} \probPredicate{\ge0.8}{\mathit{broadcast}}}
  }
\]
\end{example}

\noindent\textbf{Contributions.}
To the best of our knowledge, \emph{we provide the first framework for compositional reasoning of parametric Markov models}.
Our main contributions are as follows. 
\begin{itemize}
    \item We introduce and formalize pPAs, i.e., compositional probabilistic automata with parametric transitions.
    \item We provide a conservative extension of \emph{strategy projections}~\cite{Seg95,Kwi+13} to pPAs, including a more natural definition based on conditional probabilities.
    Strategy projections are essential for correctness of compositional reasoning as they allow to link measures of a composed model to measures of its constituting components.
    \item We present rules for assume-guarantee reasoning, generalizing an established framework by Kwiatkowska et al.~\cite{Kwi+13} to the parametric setting---which requires some technically intricate proofs.
    The framework applies to $\omega$-regular and expected total reward properties as well as multi-objective combinations thereof.
\item 
We provide a new rule for compositional reasoning about monotonicity in pPAs.
Knowing that a measure of interest---either the probability to satisfy an $\omega$-regular specification or an expected total reward---is monotone in one or more parameters can significantly speed up verification~\cite{Jun+24,Spe+19}.
Our rule allows to derive monotonicity w.r.t.\ a composed pPA by only determining monotonicity for its components. 
\end{itemize}

We introduce pPAs in \Cref{sec:preliminaries_math} and discuss strategy projections in \Cref{sec:strat_projections}.
\Cref{sec:verification} outlines properties of interest and \Cref{sec:pag} presents our AG rules.
We outline results for monotonicity in \Cref{sec:pag_mono} and related work in \Cref{sec:related_work}.
\Cref{sec:conclusion} concludes the paper.
\iftoggle{extended}{
    Proofs omitted in the main part of the paper are given in \Cref{app:proofs}.
}{
    Proofs omitted in the main part of the paper are given in the extended version~\cite{extendedMer+25}. 
}

\section{Preliminaries}
\label{sec:preliminaries_math}
For sets $X$ and $Y$, let $f \colon X \hookrightarrow Y$ denote a \emph{partial function} from $X$ to $Y$ with domain $\domain(f) \subseteq X$.
The projection of $f$ to a set $Z$ is written as $\restrOfTo{f}{Z} \colon (X \cap Z) \to Y$.
Iverson brackets $\iverson{\varphi} \in \{0,1\}$ map a Boolean condition $\varphi$ to $1$ if $\varphi$ holds and $0$ otherwise.

$\rats[\parameterSetOf{}]$ denotes the set of (multivariate) \emph{polynomials} over a finite set of real-valued \emph{parameters} $\parameterSetOf{}  = \{p_1, \dots, p_n\}$.  
A (parameter) \emph{valuation} for $\parameterSetOf{}$ is a function $\valuation \colon \parameterSetOf{} \to \reals$. 
Evaluating a polynomial $f \in \rats[\parameterSetOf{{}}]$ at $\valuation$ yields $f[\valuation] \in \reals$. 
A \emph{region} $\region$ for $\parameterSetOf{}$ is a set of valuations.
For $p \in \parameterSetOf{}$, we define the valuation $\valuationVector{e}_p$ with $\valuationVector{e}_p(q) = \iverson{p{=}q}$ for $q \in \parameterSetOf{}$.

A \emph{parametric distribution}\footnote{We use the term parametric \emph{distribution}\textemdash{}rather than parametric \emph{function}\textemdash{}to emphasize that we are typically interested in functions $\mu$ where $\mu[\valuation]$ is a (sub)probability distribution.} for $\parameterSetOf{}$ over a finite set $S$ is a function $\mu \colon \stateSetOf{} \to (\rats[\parameterSetOf{}] \cup \reals)$.
Applying valuation $\valuation$ to $\mu$ yields $\mu[\valuation] \colon S \to \reals$ with $\mu[\valuation](s) = \mu(s)[\valuation]$ for all $s \in S$. 
We call $\mu \colon S \to [0,1]$ a \emph{subdistribution} if $\sum_{s \in \stateSetOf{}} \mu(s) \leq 1$ and a \emph{distribution} if $\sum_{s \in \stateSetOf{}} \mu(s) =1$.
The sets of parametric distributions, subdistributions, and distributions over $S$ are denoted by $\paramDist{\parameterSetOf{}}{S}$, $\subDist{\stateSetOf{}}$, and $\dist{\stateSetOf{}}$, respectively. 
For $s \in S$, $\indicatorFct{s} \in \dist{\stateSetOf{}}$ is the \emph{Dirac} distribution with
$\indicatorFct{s}(s') = \iverson{s'{=}s}$.
For sets $S_1,S_2$, the \emph{product} of $\mu_1 \in \paramDist{\parameterSetOf{}}{S_1}$ and $\mu_2 \in \paramDist{\parameterSetOf{}}{S_2}$ is the distribution $\mu_1 {\times} \mu_2 \in \paramDist{\parameterSetOf{}}{S_1\times S_2}$ with $(\mu_1 {\times} \mu_2)(s_1,s_2) = \mu_1(s_1) \cdot \mu_2(s_2)$.

\subsection{Parametric Probabilistic Automata}
\label{sec:preliminaries_ppa}
We combine probabilistic automata (PA)~\cite{Seg+95,Sto+02} with parametric Markov models~\cite{Jun+24}.
\begin{definition}
A \emph{parametric probabilistic automaton (pPA)} over a finite alphabet $\alphabetOf{}$ is a tuple $\ppa = \ppaTupleOf{}$, where 
$\stateSetOf{}$, $\parameterSetOf{}$, and $\actSetOf{}$ are finite sets of states, parameters, and actions, respectively, $\initialOf{} \in \stateSetOf{}$ is an initial state,
		 $\transFctOf{} \colon  (\stateSetOf{} \times \actSetOf{}) \hookrightarrow \paramDist{\parameterSetOf{}}{\stateSetOf{}}$ is a parametric transition function, and  
	 $\syncOf{} \colon \domain(\transFctOf{}) \to  \alphabetOf{}$ is a labeling function.
\end{definition}
Let $\ppa = \ppaTupleOf{}$ be a pPA. For $s \in S$, $\enabledActsOf{s} = \set{\alpha \in \actSetOf{}  \mid (s,\alpha) \in \domain(\transFctOf{})}$ denotes the set of enabled actions in $s$ and $\ppa_s$ is the pPA where the initial state is set to~$s$.
We set $\transFctOf{}(s, \alpha, s') = \transFctOf{}(s, \alpha)(s')$ if $(s,\alpha) \in \domain(\transFctOf{})$ and otherwise $\transFctOf{}(s, \alpha, s') = 0$.
$\ppa$ is a \emph{(non-parametric)} PA if $\transFctOf{}(s, \alpha) \in \dist{\stateSetOf{}}$ for all $(s,\alpha) \in \domain(\transFctOf{})$.
In this case, $\transFctOf{}(s,\alpha,s')$ is the probability to transition to successor state $s'$ when action $\alpha$ is selected at state $s$.

The \emph{instantiation} of $\ppa$ at valuation $\valuation$ for $\parameterSetOf{}$ is the pPA 
	$\ppa[\valuation] = \left(\stateSetOf{}, \initialOf{},\emptyset,  \actSetOf{} , \transFctOf{}[\valuation], \syncOf{} \right)$, 
    where $\domain(\transFctOf{}[\valuation]) = \domain(\transFctOf{})$ and $\transFctOf{}[\valuation](s,\alpha) = \transFctOf{}(s,\alpha)[\valuation]$.
If $\ppa[\valuation]$ is a non-parametric PA, we say $\valuation$ is \emph{well-defined} for $\ppa$.
A valuation $\valuation$ is \emph{graph-preserving} for $\ppa$ if it is well-defined and for all $s,  s' \in \stateSetOf{}$ and $\alpha \in \actSetOf{\ppa}$: $\transFctOf{}(s,\alpha,s')[\valuation] = 0$ iff $\transFctOf{}(s,\alpha,s') = 0$.
A region $\region$ is well-defined (graph-preserving) if all its valuations $\valuation \in \region$ are well-defined (graph-preserving).

\begin{figure}[t] 
	\begin{subfigure}{.35\textwidth}
		\renewcommand{\modelType}{\ppa}
		\centering 	
		\begin{tikzpicture}[mdp]

	\node[ps, init=left] (0)  {$s_0$};

	\node[ps,below=1 of 0] (1)  {$s_1$};

	\path[ptrans]
	
	(0) edge[loop right] node[pos=0.45,below right] {\tact{\altlab},\tact{\altaltlab}} node[dist] (d1a) {} (0)
	
	(0) edge[bend right=0] node[pos=0.5,right] {\tact{\lab}} node[dist] (d0a) {} node[pos=0.75,left] {\tprob{{\ifthenelse{\equal{\modelType}{pa}}{\frac{9}{10}}{1-p}}}} (1)
	
	(d0a) edge[bend left=60] node[left,pos=0.45] {\tprob{{\ifthenelse{\equal{\modelType}{pa}}{\frac{1}{10}}{p}}}} node[above left=3pt,pos=0.4] {} (0)
	
	(1) edge[loop right] node[pos=0.5,above right] {\tact{\altlab}} node[dist] (d1c) {} node[pos=0.75,below] {\tprob{}} node[pos=0.5,below] {} (1)

	;
\end{tikzpicture}%
		\caption{pPA $\ppa_1$}\label{fig:pPAM1}%
	\end{subfigure}\hfill
	\begin{subfigure}{.55\textwidth} 
		\renewcommand{\extended}{false}
		\renewcommand{\modelType}{\ppa}
		\centering%
		\begin{tikzpicture}[mdp]
	\node[ps, init=left] (t0)  {$t_0$};
	
	\node[ps,above right=0.25 and 2 of t0] (t1)  {$t_1$};

	\node[ps,below right=0.25 and 2 of t0] (t2)  {$t_2$};
	
	\node[ps,right=2 of t1, fill=lightgray] (t3)  {$t_3$};
	
	\node[ps,right=2 of t2] (t4)  {$t_4$};
	
	\node[dist, right=0.5 of t0] (d0a) {};
		
	\path[ptrans]

		(t0) edge[-] node[pos=0.95, below] {\tact{\lab}} (d0a)
		(d0a) edge[bend right=0] node[pos=0.5,above,yshift=0.5ex,xshift=-0.5ex] {\tprob{\ifthenelse{\equal{\modelType}{pa}}{\frac{9}{10}}{1-p}}} (t1)
		(d0a) edge[bend right=0] node[below,pos=0.55,yshift=-0.5ex,xshift=-0.5ex] {\tprob{\ifthenelse{\equal{\modelType}{pa}}{\frac{1}{10}}{p}}} node[above,pos=0.4] {} (t2)

		(t1) edge[bend right=0] node[pos=0.25,above] {\tact{\lab}} node[dist, pos=0.25] (d1a) {} node[pos=0.6,above] {\tprob{\ifthenelse{\equal{\modelType}{pa}}{\frac{9}{10}}{q}}} (t3)
		(d1a) edge[bend left=25] node[below,pos=0.25,xshift=-1ex] {\tprob{\ifthenelse{\equal{\modelType}{pa}}{\frac{1}{10}}{1-q}}} node[above,pos=0.4] {} (t4)
		
		(t2) edge[bend right=0] node[pos=0.25,below] {\tact{\altaltlab}} node[dist, pos=0.25] (d1b) {} node[pos=0.6,below] {\tprob{\ifthenelse{\equal{\modelType}{pa}}{\frac{9}{10}}{\frac{9}{10}}}} (t4)
		(d1b) edge[bend right=25] node[above, pos=0.25] {\tprob{\ifthenelse{\equal{\modelType}{pa}}{\frac{1}{10}}{\frac{1}{10}}}} node[above,pos=0.4] {} (t3)

		(t3) edge[loop right] node[pos=0.45,below right] {{\ifthenelse{\equal{\extended}{true}}{\tactext{\altlab},}{}}\tact{\small \frownie}} node[dist] (d3frown) {} node[pos=0.15,below] {\tprob{}} node[pos=0.25,below] {} (t3)
		
		(t4) edge[loop right] node[pos=0.5,above right] {\ifthenelse{\equal{\extended}{true}}{\tactext{\altlab},}{}{\tact{\altaltlab}}} node[dist] (d4b) {} node[pos=0.15,below] {\tprob{}} node[pos=0.25,below] {} (t4)

	;	
	
	{\ifthenelse{\equal{\extended}{true}}{\path[ptrans] (t2) edge[loop above] node[pos=0.75,right] {{\ifthenelse{\equal{\extended}{true}}{\tactext{\altlab}}{}}} node[dist] (d1c) {} (t2);}{}}
	
	{\ifthenelse{\equal{\extended}{true}}{\path[ptrans] (t1) edge[loop above] node[pos=0.75,right] {{\ifthenelse{\equal{\extended}{true}}{\tactext{\altlab}}{}}} node[dist] (d1c) {} (t2);}{}}
	
	{\ifthenelse{\equal{\extended}{true}}{\path[ptrans] (t0) edge[loop above] node[pos=0.75,right] {{\ifthenelse{\equal{\extended}{true}}{\tactext{\altlab}}{}}} node[dist] (d1c) {} (t0);}{}}

\end{tikzpicture}%
		\caption{pPA $\ppa_2$}\label{fig:pPAM2}%
	\end{subfigure} 
	\caption{Example pPAs $\ppa_1$ and $\ppa_2$.}\label{fig:ppa1_and_ppa2}
\end{figure}
\begin{example}[pPA]\label{ex:pPAM1M2}
	Consider the pPA $\ppa_1$ in \Cref{fig:pPAM1} and $\ppa_2$ in \Cref{fig:pPAM2}. 
	$\ppa_1 = \ppaTupleOf{1}$, where $\stateSetOf{1} = \{s_0,s_1\}$, $\initialOf{1} = s_0$ and $\parameterSetOf{1}=\{p\}$. 
	Actions $\actSetOf{1} = \{\lab,\altlab,\altaltlab\}$ and alphabet $\alphabetOf{1}=\{\lab,\altlab,\altaltlab\}$. 
	In this example, the alphabet coincides with actions as these uniquely define the transitions. 
	Similarly, $\ppa_2 = \ppaTupleOf{2}$,  where $\stateSetOf{2} = \{t_0,\dots, t_4\}$, $\initialOf{2} = t_0$, and $\parameterSetOf{2}=\{p,q\}$. 
	Again, $\actSetOf{2}= \alphabetOf{2}=\{\lab,\altaltlab,\frownie\}$. 
\end{example}

An \emph{infinite path} of $\ppa$ is an alternating sequence $\infpath = s_0, \alpha_0, s_1, \alpha_1, \dots$ of states $s_i \in \stateSetOf{}$ and actions $\alpha_i \in \actSetOf{}$ such that 
$(s_i, \alpha_i) \in \domain(\transFctOf{})$ for all $i \ge 0$.
A finite path of length $n \in \nats$ is a prefix $\finpath = s_0, \alpha_0, \dots, s_n$ of an infinite path, ending in a state $\last{\finpath} = s_n \in \stateSetOf{}$.
$\infPathsOf{\ppa}{}$ and $\finPathsOf{\ppa}{}{}$ are the sets of infinite and finite paths of $\ppa$, respectively.
For a (finite or infinite) path $\infpath \in \infPathsOf{\ppa}{} \cup \finPathsOf{\ppa}{}$, we write $\vert \infpath \vert \in \nats \cup \set{\infty}$ for its length and $\pi[0,j]$ for its prefix of length $j \le \vert \infpath \vert$.
We deliberately allow paths that take transitions with probability 0. 
As a consequence, a path of a pPA M is always also a path of any of its instantiations $\ppa[\valuation]$\textemdash{}even if $\valuation$ is not graph-preserving.

Strategies\textemdash{}also known as schedulers or adversaries\textemdash{}resolve nondeterminism by assigning (sub-)distributions over enabled actions based on the history---i.e., a finite path---observed so far.
We allow for partial strategies that, intuitively, can choose none of the enabled actions to reflect the case that no further transition is executed.
\begin{definition}
	A \emph{(partial) strategy} for $\ppa$ is a function $\strategy \colon \finPathsOf{\ppa}{} \to \subDist{\actSetOf{}}$ such that 
	$\strategy(\finpath)(\alpha) > 0 $ implies $(\last{\finpath},\alpha) \in \domain(\transFctOf{})$.
	A strategy $\strategy \colon \finPathsOf{\ppa} \to \dist{\actSetOf{\ppa}}$ is called \emph{complete}. 
	The set of all partial and complete strategies on $\ppa$ are denoted by $\strategysetOf{\ppa}{\star}$, where $\star \in \{\prt, \comp\}$, respectively. 
	A \emph{memoryless strategy} only depends on the last state of $\finpath$.
	The set of memoryless strategies on $\ppa$ is denoted by $\strategysetOf{\ppa}{\mless,\star}$.
\end{definition}
For a strategy $\strategy$ for $\ppa$, we may write $\strategy(\finpath, \alpha)$ instead of $\strategy(\finpath)(\alpha)$.
If $\strategy$ is memoryless, we write $\strategy(s_n, \alpha)$ instead of $\strategy(\finpath, \alpha)$, where $s_n = \last{\finpath}$.

A well-defined instantiation $\valuation$ and a strategy $\strategy$ for $\ppa$ yield a purely probabilistic process described by the \emph{(sub)probability measure} 
$\PrOf{\ppa}{\valuation,\strategy}{}$ on the measurable subsets of $\infPathsOf{\ppa}{}$, which is obtained by a standard cylinder set construction~\cite{BK08}: 
\[\cyl(\finpath) = \{ \pi \in \infPathsOf{\ppa}{} \mid \finpath \text{ is a prefix of } \pi \}\] is the \emph{cylinder set} of a \emph{finite} path $\finpath = s_0, \alpha_0,  \dots, s_n$ of $\ppa$ and we set
	\[
	\PrOf{\ppa}{\valuation,\strategy}{\cyl (\finpath)}
	~=~ 
	\iverson{s_0 = \initialOf{}} 
	\cdot \prod_{ i= 0}^{n-1} \strategy(\finpath[0,i], \alpha_i) \cdot \transFctOf{}(s_{i}, \alpha_{i}, s_{i+1})[\valuation].
\] 
This definition extends uniquely to a probability measure on \emph{all} measurable sets of infinite paths. 
We further lift $\PrOf{\ppa}{\valuation,\strategy}{}$ to (sets of) finite paths and write, e.g., $\PrOf{\ppa}{\valuation,\strategy}{\finpath}$ for $\finpath \in \finPathsOf{\ppa}$ or $\PrOf{\ppa}{\valuation,\strategy}{\Pi}$ for $\Pi \subseteq \finPathsOf{\ppa}$---implicitly referring to (unions of) cylinder sets.
If $\ppa$ is a (non-parametric) PA, we may omit $\valuation$ and write $\PrOf{\ppa}{\strategy}{}$.
Well-defined $\valuation$ yields $\PrOf{\ppa[\valuation]}{\strategy}{} = \PrOf{\ppa}{\valuation,\strategy}{}$.
\begin{example}
    For the pPA $\ppa_2$ from \Cref{fig:pPAM2} and a well-defined valuation $\valuation$, the probability to reach the state $t_3$ under valuation $\valuation$ is $(1-\valuation(p)) \cdot \valuation(q) + \valuation(p) \cdot \frac{1}{10}$. 
\end{example}
\begin{remark}\label{rem:padef}
Our definition of PA slightly deviates from related work \cite{Seg+95,Kwi+13,Kom+12,LL19}, which commonly define a transition \emph{relation} $\transRelationOf{} \subseteq \stateSetOf{} \times \alphabetOf{} \times \dist{\stateSetOf{}}$ instead of functions $\transFctOf{}$ and $\syncOf{}$.
In our setting, a pair $(s,\alpha) \in \domain(\transFctOf{})$ uniquely identifies both, a label $\syncOf{}(s,\alpha) \in \alphabetOf{}$, and a distribution over successor states $\transFctOf{}(s,\alpha) \in \dist{\stateSetOf{}}$, which significantly simplifies formalizations related to pPAs.
In particular, any strategy for $\ppa$ immediately also applies to instantiations of $\ppa$ and vice versa, i.e., we have $\strategysetOf{\ppa}{} = \strategysetOf{\ppa[\valuation]}{}$ for any valuation $v$. 
On the other hand, our variant does not affect expressiveness of non-parametric PA as one can convert between the two formalisms.
\end{remark}
We lift parallel composition of PA~\cite{Seg+95} to pPAs. Composed pPAs synchronize on common transition labels while behaving autonomously on non-common labels.
For simplicity, we assume that composed pPAs consider a common set of parameters $\parameterSetOf{}$.\footnote{If two pPAs have different parameter sets $\parameterSetOf{1} \neq \parameterSetOf{2}$, the assumption can be established by considering $\parameterSetOf{} = \parameterSetOf{1} \cup \parameterSetOf{2}$ instead, potentially adding (unused) parameters to the individual pPAs.}
\begin{definition}[Parallel Composition]
\label{def_ppa_composition}
	For $i=1,2$, let $\ppa_{i} = (\stateSetOf{i}, \initialOf{i}, \parameterSetOf{}, \actSetOf{i}, \transFctOf{i}, \syncOf{i})$  be two pPAs over alphabets $\alphabetOf{i}$ with $\actSetOf{i} \cap (\alphabetOf{1} \cup \alphabetOf{2}) = \emptyset$.
The \emph{parallel composition} of $\ppa_1$ and $\ppa_2$ is given by the pPA 
$\ppa_{1} \parallel \ppa_{2} = \big(\stateSetOf{1} \times \stateSetOf{2}, (\initialOf{1}, \initialOf{2}), \parameterSetOf{}, \actSetOf{\parallel}, \transFctOf{\parallel}, \syncOf{\parallel}\big)$
 over $\alphabetOf{1} \cup \alphabetOf{2}$, where 
\begin{itemize}
\item $\actSetOf{\parallel} = (\actSetOf{1} \times \actSetOf{2}) \cupdot (\actSetOf{1} \times \alphabetOf{1} \setminus \alphabetOf{2}) \cupdot (\alphabetOf{2} \setminus \alphabetOf{1} \times \actSetOf{2})$,
\item
for each $(s_1, \alpha_1) \in \domain(\transFctOf{1})$, $(s_2, \alpha_2) \in \domain(\transFctOf{2})$ with $\syncOf{1}(s_1, \alpha_1) = \syncOf{2}(s_2, \alpha_2) \in \alphabetOf{1} \cap \alphabetOf{2}$:
\[\transFctOf{\parallel}((s_1,s_2), (\alpha_1, \alpha_2)) = \transFctOf{1}(s_1,\alpha_1) \times \transFctOf{2}(s_2,\alpha_2) 
\quad\text{and}\quad \syncOf{\parallel}((s_1,s_2), (\alpha_1, \alpha_2)) = \syncOf{1}(s_1, \alpha_1),\]
\item for each $(s_1, \alpha_1) \in \domain(\transFctOf{1})$, $s_2 \in \stateSetOf{2}$ with $\syncOf{1}(s_1, \alpha_1) = \lab_1 \in \alphabetOf{1} \setminus \alphabetOf{2}$: 
\[ \transFctOf{\parallel}((s_1,s_2), (\alpha_1, \lab_1)) = \transFctOf{1}(s_1,\alpha_1) \times \indicatorFct{s_2}
\quad\text{and}\quad \syncOf{\parallel}((s_1,s_2), (\alpha_1, \lab_1)) = \lab_1 = \syncOf{1}(s_1, \alpha_1),\]
\item for each $s_1 \in \stateSetOf{1}$, $(s_2, \alpha_2) \in \domain(\transFctOf{2})$ with $\syncOf{2}(s_2, \alpha_2) = \lab_2 \in \alphabetOf{2} \setminus \alphabetOf{1}$: 
\[ \transFctOf{\parallel}((s_1,s_2), (\lab_2, \alpha_2)) = \indicatorFct{s_1} \times \transFctOf{2}(s_2,\alpha_2)
\quad\text{and}\quad \syncOf{\parallel}((s_1,s_2), (\lab_2, \alpha_2)) = \lab_2 = \syncOf{2}(s_2, \alpha_2).\]
\end{itemize}
\end{definition}
The parallel composition of parametric probabilistic automata (pPAs) is associative, meaning that $(\ppa_{1} \parallel \ppa_{2}) \parallel \ppa_3$ and $\ppa_{1} \parallel (\ppa_{2}\parallel \ppa_3)$ are equivalent up to state renaming. 
Therefore, we denote this composition as $\ppa_{1} \parallel \ppa_{2}\parallel \ppa_3$. 
\renewcommand{\modelType}{\ppa}
\renewcommand{\prodAutomata}{false}
\begin{figure}[!t]
    \centering
	\begin{tikzpicture}[mdp]
	\node[pswide, init=left] (00)  {$s_0, t_0\ifthenelse{\equal{\prodAutomata}{true}}{,p_0}{}$};	
	\node[pswide,right=6 of 00,color=white] (10)  {};
	
	\node[pswide,below right=1.5 and 0.6 of 00] (01)  {$s_0,t_1\ifthenelse{\equal{\prodAutomata}{true}}{,p_0}{}$};
	\node[pswide,below left=1.5 and 0.6 of 00] (02)  {$s_0,t_2\ifthenelse{\equal{\prodAutomata}{true}}{,p_0}{}$};
	
	\node[pswide,below =1.5 of 01, fill=lightgray] (03)  {$s_0,t_3\ifthenelse{\equal{\prodAutomata}{true}}{,p_0}{}$};
	\node[pswide,below =1.5 of 02] (04)  {$s_0,t_4\ifthenelse{\equal{\prodAutomata}{true}}{,p_0}{}$};

	\node[pswide,below right=1.5 and 0.6 of 10] (11)  {$s_1,t_1\ifthenelse{\equal{\prodAutomata}{true}}{,p_0}{}$};
	\node[pswide,below left=1.5 and 0.6 of 10] (12)  {$s_1,t_2\ifthenelse{\equal{\prodAutomata}{true}}{,p_0}{}$};
	
  	\node[pswide,below =1.5 of 11, fill=lightgray] (13)  {$s_1,t_3\ifthenelse{\equal{\prodAutomata}{true}}{,p_0}{}$};
	\node[pswide,below =1.5 of 12] (14)  {$s_1,t_4\ifthenelse{\equal{\prodAutomata}{true}}{,p_0}{}$};


	\path[ptrans]

	(00) edge[loop above] node[pos=0.65, right]  {\tact{\altlab}} node[dist] (d00b) {} node[pos=0.5,left] {\tprob{}} node[pos=0.35,below] {} (00)
		
	(03) edge[loop left] node[pos=0.55, below left] {\tact{\altlab\ifthenelse{\equal{\prodAutomata}{true}}{}{,\small \frownie}}} node[dist] (d01c) {} node[pos=0.5,left] {\tprob{}} node[pos=0.35,below] {} (03)
	
	(02) edge[loop left] node[pos=0.55, below left]  {\tact{\altlab}} node[dist] (d01c) {} node[pos=0.5,left] {\tprob{}} node[pos=0.35,below] {} (03)
	
	(04) edge[loop left] node[pos=0.55, below left]  {\tact{\altlab},\tact{\altaltlab}} node[dist] (d01c) {} node[pos=0.5,left] {\tprob{}} node[pos=0.35,below] {} (04)
	
	(01) edge[loop left] node[pos=0.55, below left]  {\tact{\altlab}} node[dist] (d01c) {} node[pos=0.5,below] {\tprob{}} node[pos=0.35,below] {} (01)
	
	(11) edge[loop left] node[pos=0.55, below left]  {\tact{\altlab}} node[dist] (d01c) {} node[pos=0.5,left] {\tprob{}} node[pos=0.35,below] {} (11)
	
	(12) edge[loop left] node[pos=0.55, below left]  {\tact{\altlab}} node[dist] (d01c) {} node[pos=0.5,left] {\tprob{}} node[pos=0.35,below] {} (12)
	
	(13) edge[loop left] node[pos=0.55, below left]  {\tact{\altlab\ifthenelse{\equal{\prodAutomata}{true}}{}{,\small \frownie}}} node[dist] (d01c) {} node[pos=0.5,left] {\tprob{}} node[pos=0.35,below] {} (13)
	
	(14) edge[loop left] node[pos=0.55, below left]  {\tact{\altlab}} node[dist] (d01c) {} node[pos=0.5,left] {\tprob{}} node[pos=0.35,below] {} (14)

	(00) edge[bend left=20] node[pos=0.15,above] {\tact{\lab}} node[dist, pos=0.25] (d00a) {} node[pos=0.8,above,yshift=0.8ex,xshift=0.9ex] {\tprob{\ifthenelse{\equal{\modelType}{pa}}{\frac {81}{100}}{(1-p)^2}}} (11)
	(d00a) edge[bend right=0] node[left, pos=0.75] {\tprob{\ifthenelse{\equal{\modelType}{pa}}{\frac{9}{100}}{p\cdot(1-p)}}} node[above,pos=0.75] {} (01)
	(d00a) edge[bend right=10] node[above, pos=0.75] {\tprob{\ifthenelse{\equal{\modelType}{pa}}{\frac{1}{100}}{p^2}}} node[above,pos=0.75] {} (02)
	(d00a) edge[bend left=10] node[right, pos=0.75,yshift=0.1ex,xshift=0.1ex] {\tprob{\ifthenelse{\equal{\modelType}{pa}}{\frac{9}{10}}{(1-p)\cdot p}}} node[above,pos=0.75] {} (12)

	(01) edge[bend left=0] node[pos=0.2,right, yshift=0.2ex] {\tact{\lab}} node[dist, pos=0.3] (d01a) {} node[pos=0.65,left] {\tprob{\ifthenelse{\equal{\modelType}{pa}}{\frac {9}{100}}{p \cdot q}}} (03)
	(d01a) edge[bend left=15] node[above, pos=0.25] {\tprob{\ifthenelse{\equal{\modelType}{pa}}{\frac{81}{100}}{(1-p) \cdot q}}} node[above,pos=0.75] {} (13)
	(d01a) edge[bend right=10] node[above, pos=0.25,rotate=15] {\tprob{\ifthenelse{\equal{\modelType}{pa}}{\frac{1}{100}}{p\cdot(1-q)}}} node[below,pos=0.15] {} (04)
	(d01a) edge[bend left=10] node[below, pos=0.35,rotate=-15] {\tprob{\ifthenelse{\equal{\modelType}{pa}}{\frac{9}{100}}{(1-p)\cdot(1-q)}}} node[above,pos=0.75] {} (14)

	(02) edge[bend right=5] node[pos=0.25,left] {\tact{\altaltlab}} node[dist, pos=0.35] (d02c) {} node[pos=0.58,left] {\tprob{\ifthenelse{\equal{\modelType}{pa}}{\frac{9}{10}}{\frac{9}{10}}}} (04)
	(d02c) edge[bend left=15] node[above, pos=0.25] {\tprob{\ifthenelse{\equal{\modelType}{pa}}{\frac{1}{10}}{\frac{1}{10}}}} node[left,pos=0.75] {} (03)
	;
	{\ifthenelse{\equal{\prodAutomata}{true}}{
			\node[pswide,below =2 of 13,  fill=lightgray, label=-180:{$\{bad_{\regLang_G}\}$}] (13bad)  {$s_1,t_3,p_1$};
			\node[pswide,below =2 of 03,  fill=lightgray, label=-180:{$\{bad_{\regLang_G}\}$}] (03bad)  {$s_0,t_3,p_1$};
			
			\path[ptrans] 
			(13) edge[bend right=0] node[pos=0.25,left] {\tact{\frownie}} node[dist, pos=0.25] (d13frown) {} (13bad)

			(03) edge[bend right=0] node[pos=0.25,left] {\tact{\frownie}} node[dist, pos=0.25] (d03frown) {} (03bad)
			(03bad) edge[loop right] node[pos=0.25, above] {\tact{\lab},\tact{\altlab},\tact{\altaltlab},\tact{\frownie}} node[dist] (d01c) {} node[pos=0.25,above] {\tprob{}} node[pos=0.35,below] {} (03bad)
			(13bad) edge[loop right] node[pos=0.25, above] {\tact{\lab},\tact{\altlab},\tact{\altaltlab},\tact{\frownie}} node[dist] (d01c) {} node[pos=0.25,above] {\tprob{}} node[pos=0.35,below] {} (03bad)
			;
		}{}}
\end{tikzpicture}\caption{Parallel composition of pPAs $\ppa_{1}$ and $\ppa_{2}$ from \Cref{fig:ppa1_and_ppa2}.}\label{fig:pPA_composition}
\end{figure}
\begin{example}[Parallel Composition]
	\Cref{fig:pPA_composition} shows the composition of pPAs $\ppa_{1}$ and $\ppa_{2}$ from \cref{fig:ppa1_and_ppa2}. Actions with labels $\lab$ or $\altaltlab$ are synchronized while $\altlab$ and $\frownie$ are asynchronous. 
\end{example}
Similar to \cite[Section 3.5]{Kwi+13}, we sometimes assume fairness of strategies, meaning that specific sets of labels $\alphabetOf{i} \subseteq \alphabetOf{}$ are visited infinitely often. 
\begin{definition}[Fair Strategy] 
    Let $\decomp \subseteq \setOfdecompsOf{\alphabetOf{}}$ and $\valuation$ be a well-defined valuation for $\ppa$ over $\alphabetOf{}$.  
    A complete strategy $\strategy \in \strategysetOf{\ppa[\valuation]}{\comp}$ is \emph{fair w.r.t. $\decomp$} (denoted $\fairC$) if 
    \[
        \PrOf{\ppa}{\valuation, \strategy}{\bigset{ s_0,\alpha_0,s_1,\alpha_1,\dots  \in \infPathsOf{\ppa}{} \mathbin{\big|} \forall \alphabetOf{i} \in \decomp \colon\, \forall j\in\nats\colon\, \exists k\ge j\colon\, \syncOf{}(s_k,\alpha_k) \in \alphabetOf{i} } }= 1.
    \]  
    The set of all $\fairWrtRegionModel{}{\decomp}$ strategies of $\ppa[\valuation]$ is denoted $\strategysetOf{\ppa[\valuation]}{\fairWrtRegionModel{}{\decomp}}$. 
\end{definition}
Almost-sure repeated reachability in PA only depends on the graph structure, which yields: 
\begin{proposition}\label{theo:graph_preserving_fairness}
For any graph-preserving valuations $\valuation, \valuation'$ for $\ppa$ we have $\strategysetOf{\ppa[\valuation]}{\fairC} = \strategysetOf{\ppa[\valuation']}{\fairC}$, i.e., a strategy is $\fairC$ for $\ppa[\valuation]$ iff it is $\fairC$  for $\ppa[\valuation']$.
\end{proposition}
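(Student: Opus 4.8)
The plan is to fix an arbitrary complete strategy $\strategy$ together with two graph-preserving valuations $\valuation, \valuation'$ for $\ppa$ and to prove that $\strategy$ is $\fairC$ for $\ppa[\valuation]$ iff it is $\fairC$ for $\ppa[\valuation']$; the stated set equality then follows. Recall from \Cref{rem:padef} that a strategy does not depend on the valuation, so the very same $\strategy$ is a legal complete strategy for both instantiations and only the induced measures differ. Writing $F \subseteq \infPathsOf{\ppa}{}$ for the set of paths appearing on the right-hand side of the fairness definition (those taking a $\Sigma_i$-labeled transition infinitely often for every $\Sigma_i \in \decomp$), the goal becomes $\PrOf{\ppa}{\valuation,\strategy}{F} = 1 \Leftrightarrow \PrOf{\ppa}{\valuation',\strategy}{F} = 1$.

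First I would show that graph-preservation makes the support of the induced process valuation-independent. For a finite path $\finpath = s_0,\alpha_0,\dots,s_n$ the cylinder probability $\PrOf{\ppa}{\valuation,\strategy}{\cyl(\finpath)}$ equals $\iverson{s_0 = \initialOf{}}\cdot\prod_{i<n}\strategy(\finpath[0,i],\alpha_i)\cdot\transFctOf{}(s_i,\alpha_i,s_{i+1})[\valuation]$, so it is strictly positive iff $s_0=\initialOf{}$, every $\strategy$-factor is positive, and every occurring polynomial $\transFctOf{}(s_i,\alpha_i,s_{i+1})$ is not identically zero. Since $\transFctOf{}(s,\alpha,s')[\valuation]=0 \Leftrightarrow \transFctOf{}(s,\alpha,s')=0 \Leftrightarrow \transFctOf{}(s,\alpha,s')[\valuation']=0$ for graph-preserving $\valuation,\valuation'$, the factor $\transFctOf{}(s_i,\alpha_i,s_{i+1})[\valuation]$ is positive exactly when $\transFctOf{}(s_i,\alpha_i,s_{i+1})[\valuation']$ is. Hence the set of finite paths carrying positive probability, and with it the reachable fragment of the $\strategy$-induced Markov chain together with its edge relation and transition labels, is literally the same under $\valuation$ and $\valuation'$.

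The crux is then the principle stated just before the proposition: almost-sure repeated reachability is determined by this graph structure alone. I would make it precise by regarding $\strategy$ on $\ppa[\valuation]$ as a Markov chain over histories and noting that $F$ is a prefix-independent generalized B\"uchi condition on transition labels. The intended conclusion is that, because the reachable labelled graph fragment coincides for $\valuation$ and $\valuation'$, the event $F$ receives probability $1$ under $\valuation$ exactly when it does under $\valuation'$; formally this rests on reducing the question to the finite reachable sub-PA and its end components and applying the classical insensitivity of probability-one repeated reachability to the exact transition probabilities, so that $\PrOf{\ppa}{\valuation,\strategy}{F}=1 \Leftrightarrow \PrOf{\ppa}{\valuation',\strategy}{F}=1$ and thus $\strategysetOf{\ppa[\valuation]}{\fairC} = \strategysetOf{\ppa[\valuation']}{\fairC}$.

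I expect this last step to be the main obstacle: one must turn ``same positive-probability edges'' into ``same probability-one value of a repeated-reachability event''. A direct comparison of the two measures is not available---because $\strategy$ may depend on the full history, $\PrOf{\ppa}{\valuation,\strategy}{}$ and $\PrOf{\ppa}{\valuation',\strategy}{}$ need not be mutually absolutely continuous---so the argument has to route through the finite graph rather than relate the measures pointwise. Additional care is needed to confirm measurability of $F$ and to treat arbitrary (history-dependent) strategies, for which the natural tool is the theorem that, almost surely, the set of state--action pairs visited infinitely often forms an end component of the finite PA; isolating the graph-theoretic characterisation of $\PrOf{\ppa}{\valuation,\strategy}{F}=1$ that is stable across graph-preserving valuations is the technically delicate part.
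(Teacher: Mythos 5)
Your reduction of the claim to showing, for each fixed complete strategy $\strategy$, that $\PrOf{\ppa}{\valuation,\strategy}{F}=1 \Leftrightarrow \PrOf{\ppa}{\valuation',\strategy}{F}=1$, together with your observation that graph-preservation makes the set of positive-probability cylinders (hence the reachable labelled support of the induced process) identical under $\valuation$ and $\valuation'$, is correct and matches the intent of the paper, which itself offers no argument beyond the one-line remark that almost-sure repeated reachability depends only on the graph structure. The problem is precisely the step you flag as delicate and then leave open: passing from ``same positive-probability edges'' to ``same qualitative answer for the repeated-reachability event''. This is a genuine gap, and the end-component theorem does not close it: that theorem says the inf-set of a run is almost surely an end component, but it says nothing about \emph{which} end components carry positive probability under a fixed history-dependent strategy, and that distribution can depend on the numerical transition probabilities rather than only on the graph.

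Indeed, the implication you need is false for unrestricted strategies. Take states $s,s_1,s_2$ with a single action in $s$ labelled $\lab$ and $\mathbf{P}(s,\cdot)=\{s_1\mapsto p,\; s_2\mapsto 1{-}p\}$, and from each of $s_1,s_2$ two Dirac transitions back to $s$, one labelled $\altlab$ and one labelled $\altaltlab$. Let $\strategy$ track the empirical frequency $k_n/n$ of visits to $s_1$ and choose the $\altlab$-labelled action iff $|k_n/n-\tfrac12|\le n^{-1/4}$, and the $\altaltlab$-labelled one otherwise. For $\valuation(p)=\tfrac12$ the law of the iterated logarithm gives $|k_n/n-\tfrac12|=o(n^{-1/4})$ almost surely, so $\altlab$ occurs infinitely often almost surely and $\strategy$ is fair w.r.t.\ $\decomp=\{\{\altlab\}\}$; for $\valuation'(p)=\tfrac13$ the frequency converges to $\tfrac13$, so $\altlab$ occurs only finitely often almost surely and $\strategy$ is not fair. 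Both valuations are graph-preserving. So the statement your argument aims at does not hold in the generality in which it is claimed: a strategy with unbounded memory can statistically identify the valuation from observed frequencies and change its long-run behaviour accordingly. To rescue the proposition (and your proof) one must either restrict the strategy class (e.g.\ to finite-memory strategies, where the product with the memory structure is a finite Markov chain and the BSCC characterisation of probability-one repeated reachability genuinely is graph-invariant) or define fairness structurally rather than through the induced measure; as written, no completion of your final step can succeed.
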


\section{Strategy Projections}\label{sec:strat_projections}
In this section, we define the projection of a strategy of a composite pPA $\ppa = \ppa_1 \parallel \ppa_2$ onto a single component $\ppa_i$ for $i=1,2$.
Projections for PA are defined in \cite[Definition 6]{Kwi+13} and originate from \cite[page 65, Definition of Projection]{Seg95}.
They are intuitively used to relate probability measures for $\ppa$ and $\ppa_i$.

The projection of a finite path $\ppath \in \finPathsOf{\ppa}$ onto component $\ppa_i$ is the finite path $\restrOfTo{\ppath}{i}\in \finPathsOf{\ppa_i}$ obtained by restricting $\pi$ to the steps performed by $\ppa_i$.
Formally, $\restrOfTo{(\initialOf{1},\initialOf{2})}{i} = \initialOf{i}$ and for $\ppath = \ppath',(\alpha_1,\alpha_2),(s_1,s_2)$:
\[
\restrOfTo{\ppath}{i} = 
\begin{cases}
\restrOfTo{\ppath'}{i}, \alpha_i, s_i &\text{if } \alpha_i \in \actSetOf{i}\\
\restrOfTo{\ppath'}{i}                &\text{otherwise.} 
\end{cases}
\]
Path projections are neither injective nor surjective, i.e., we might have $\restrOfTo{\ppath}{i	} = \restrOfTo{\ppath'}{i}$ for two distinct paths $\ppath \neq \ppath'$ of $\ppa$, and for some $\ppath_i \in \finPathsOf{\ppa_i}$ there might not be any $\ppath \in \finPathsOf{\ppa}$ with $\ppath_i= \restrOfTo{\ppath}{i}$.
We define the set of paths of $\ppa_1 \parallel \ppa_2$ that are projected to $\ppath_i \in \finPathsOf{\ppa_i}$ as 
\[
\liftedpaths{\ppath_i}{\ppa_{3-i}} = \bigset{\ppath \in \finPathsOf{\ppa} \mid \ppath_i= \restrOfTo{\ppath}{i}}. 
\]
We first focus on strategy projections for non-parametric PA.
Then, we lift our notions to the parametric setting. 

\subsection{Projections for non-parametric PAs}
We fix the parallel composition $\pa = \pa_1 \parallel \pa_2 = \paTupleOf{\parallel}$ of (non-parametric) PAs $\pa_1$ and $\pa_2$ with $\pa_i = \paTupleOf{i}$ and alphabets $\alphabetOf{i}$ for $i=1,2$.

\begin{definition}[Strategy Projection for PA]\label{def:projectionStrategyPA}
The \emph{projection} of a strategy $\strategy \in \strategysetOf{\pa}{\prt}$ to $\pa_{i}$ is the strategy 
    $\stratProjOfToValuation{\strategy}{i}{\pa}\in\strategysetOf{\pa_i}{\prt}$,
where for $\ppath_i \in \finPathsOf{\ppa_i}$ and $\alpha_i \in \actSetOf{i}$:
    \begin{align*}
		\stratProjOfToValuation{\strategy}{i}{\pa}(\ppath_i , \alpha_i) 
		= 
		\begin{cases}\displaystyle
			\frac{\sum_{s_i \in \stateSetOf{i}}
					\PrOf{\pa}{\strategy}{ \liftedpaths{(\ppath_i,\alpha_i,s_i)}{\pa_{3-i}}}
			}{\PrOf{\pa}{\strategy}{\liftedpaths{\ppath_i}{\pa_{3-i}}}}
				& \text{if } \PrOf{\pa}{\strategy}{\liftedpaths{\ppath_i}{\pa_{3-i}}} > 0 \\
			0 & \text{otherwise.}
		\end{cases}
	\end{align*}
\end{definition}
If $\pa$ is clear, we simply write $\stratProjOfToValuation{\strategy}{i}{}$ instead of $\stratProjOfToValuation{\strategy}{i}{\pa}$.
\Cref{lem:projection:alternative,lem:projection:measure} below yield that \Cref{def:projectionStrategyPA} is equivalent to the projection defined in \cite[Def.\ 6]{Kwi+13}.
We argue that our variant is more intuitive since the numerator and denominator of the fraction consider the same probability measure $\PrOf{\pa}{\strategy}{}$.
Intuitively, $\stratProjOfToValuation{\strategy}{i}{}(\ppath_i , \alpha_i)$ coincides with the conditional probability that---under $\PrOf{\pa}{\strategy}{}$ and given that a path $\ppath$ with projection $\restrOfTo{\ppath}{i} = \ppath_i$ is observed---the next action of component $\pa_i$ is $\alpha_i$.
We now provide an alternative characterization for the numerator given in \Cref{def:projectionStrategyPA}.
\begin{restatable}{lemma}{projectionAlternative}
\label{lem:projection:alternative}
For $\strategy \in \strategysetOf{\pa}{\prt}$, $\ppath_i \in \finPathsOf{\pa_i}$, and $\alpha_i \in \actSetOf{i}$:
\[
\sum_{s_i \in \stateSetOf{i}} \PrOf{\pa}{\strategy}{\liftedpaths{(\ppath_i,\alpha_i,s_i)}{\pa_{3-i}}} 
~=~\sum_{\ppath \in (\liftedpaths{\ppath_i}{\pa_{3-i}})}~\sum_{\substack{(\hat\alpha_1,\hat\alpha_2) \in \actSetOf{\parallel},\,\hat\alpha_i = \alpha_i}}  \PrOf{\pa}{\strategy}{\ppath} \cdot \strategy(\ppath,(\hat\alpha_1,\hat\alpha_2)).
\]
\end{restatable}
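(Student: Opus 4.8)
The plan is to prove the identity by unfolding both sides down to the level of cylinder probabilities and matching them termwise, after carefully identifying which paths of $\pa$ actually contribute to the left-hand side. First I would fix $s_i \in \stateSetOf{i}$ and analyse the set $\liftedpaths{(\ppath_i,\alpha_i,s_i)}{\pa_{3-i}}$. Since $\PrOf{\pa}{\strategy}{\cdot}$ on a set of finite paths refers to the measure of the union of their cylinders, $\PrOf{\pa}{\strategy}{\liftedpaths{(\ppath_i,\alpha_i,s_i)}{\pa_{3-i}}}$ is the measure of $\bigcup \cyl(\ppath)$ over all $\ppath$ in this set. The key structural observation is that $\ppath$ projects to $(\ppath_i,\alpha_i,s_i)$ precisely when its sequence of $\actSetOf{i}$-steps spells out $\ppath_i$ followed by an $\alpha_i$-step ending in a state whose $i$-component is $s_i$; any such $\ppath$ may, however, carry additional trailing asynchronous $\pa_{3-i}$-steps that are dropped by the projection and keep the $i$-component fixed (this follows from the $\indicatorFct{s_i}$ factor in \Cref{def_ppa_composition}).

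Consequently, every element of $\liftedpaths{(\ppath_i,\alpha_i,s_i)}{\pa_{3-i}}$ has a unique prefix $\ppath^\star$ obtained by truncating after its last $\actSetOf{i}$-step, and $\ppath^\star$ is itself in the set with $\cyl(\ppath) \subseteq \cyl(\ppath^\star)$. Hence the union of cylinders collapses to the union over the prefix-minimal paths $\ppath^\star$, i.e.\ those whose final transition is the $\alpha_i$-step into $s_i$. These prefix-minimal paths are pairwise prefix-incomparable, so their cylinders are disjoint and $\PrOf{\pa}{\strategy}{\liftedpaths{(\ppath_i,\alpha_i,s_i)}{\pa_{3-i}}} = \sum_{\ppath^\star} \PrOf{\pa}{\strategy}{\ppath^\star}$. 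Each such $\ppath^\star$ has the shape $\ppath,(\hat\alpha_1,\hat\alpha_2),(s_1',s_2')$ with $\ppath \in \liftedpaths{\ppath_i}{\pa_{3-i}}$, $\hat\alpha_i = \alpha_i$, and the $i$-component of $(s_1',s_2')$ equal to $s_i$.

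Next I would expand the cylinder probability via the definition of $\PrOf{\pa}{\strategy}{\cdot}$ as $\PrOf{\pa}{\strategy}{\ppath^\star} = \PrOf{\pa}{\strategy}{\ppath}\cdot \strategy(\ppath,(\hat\alpha_1,\hat\alpha_2)) \cdot \transFctOf{\parallel}(\last{\ppath},(\hat\alpha_1,\hat\alpha_2),(s_1',s_2'))$ and sum the left-hand side over all $s_i \in \stateSetOf{i}$. Summing over $s_i$ together with the unconstrained successor component $s_{3-i}'$ ranges over all successor states $(s_1',s_2') \in \stateSetOf{1}\times\stateSetOf{2}$, so the transition factors collect into $\sum_{(s_1',s_2')}\transFctOf{\parallel}(\last{\ppath},(\hat\alpha_1,\hat\alpha_2),(s_1',s_2'))$. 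For a non-parametric PA this sum equals $1$ whenever $(\last{\ppath},(\hat\alpha_1,\hat\alpha_2)) \in \domain(\transFctOf{\parallel})$ and $0$ otherwise; since $\strategy(\ppath,(\hat\alpha_1,\hat\alpha_2)) > 0$ forces the action to be enabled, the product $\strategy(\ppath,(\hat\alpha_1,\hat\alpha_2))\cdot\sum_{(s_1',s_2')}\transFctOf{\parallel}(\last{\ppath},(\hat\alpha_1,\hat\alpha_2),(s_1',s_2'))$ simplifies to $\strategy(\ppath,(\hat\alpha_1,\hat\alpha_2))$ in every case. What remains is exactly $\sum_{\ppath \in \liftedpaths{\ppath_i}{\pa_{3-i}}} \sum_{(\hat\alpha_1,\hat\alpha_2)\in\actSetOf{\parallel},\,\hat\alpha_i=\alpha_i} \PrOf{\pa}{\strategy}{\ppath}\cdot\strategy(\ppath,(\hat\alpha_1,\hat\alpha_2))$, the right-hand side.

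I expect the main obstacle to lie in the measure-theoretic bookkeeping of the first two steps rather than in the final algebra. The set $\liftedpaths{(\ppath_i,\alpha_i,s_i)}{\pa_{3-i}}$ is in general infinite, because arbitrarily many asynchronous $\pa_{3-i}$-steps can be appended without altering the projection, so one must justify rigorously that the cylinder measure is captured entirely by the prefix-minimal paths and that these yield pairwise disjoint cylinders. Once the reduction to prefix-minimal paths is established, the decomposition into a path $\ppath \in \liftedpaths{\ppath_i}{\pa_{3-i}}$ followed by a final $\alpha_i$-step, and the telescoping of the transition probabilities to $1$, are routine.
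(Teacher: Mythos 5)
Your proof is correct and follows essentially the same route as the paper: both decompose the union of cylinders over $\liftedpaths{(\ppath_i,\alpha_i,s_i)}{\pa_{3-i}}$ into the disjoint cylinders of one-step extensions $\ppath,(\hat\alpha_1,\hat\alpha_2),(s_1,s_2)$ with $\ppath\in\liftedpaths{\ppath_i}{\pa_{3-i}}$ and $\hat\alpha_i=\alpha_i$, then sum out the successor states so the transition probabilities collapse to $1$. The only difference is that you justify the partition explicitly via prefix-minimal paths (and handle the non-enabled-action case where the transition sum is $0$ rather than $1$), details the paper's proof asserts without elaboration.
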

The next lemma is the key observation for strategy projections as it connects the probability measure $\PrOf{\pa}{\strategy}{}$ for $\pa$ and $\PrOf{\pa_i}{\stratProjOfToValuation{\strategy}{i}{}}{}$ for each component $\pa_i$.
\begin{restatable}{lemma}{projectionMeasure}
\label{lem:projection:measure}
For $\strategy \in \strategysetOf{\pa}{\prt}$ and $\ppath_i \in \finPathsOf{\pa_i}$: 
$\PrOf{\pa_i}{\stratProjOfToValuation{\strategy}{i}{}}{\ppath_i} = \PrOf{\pa}{\strategy}{\liftedpaths{\ppath_i}{\pa_{3-i}}}$.
\end{restatable}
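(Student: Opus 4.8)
The plan is to prove the identity by induction on the length $n = \vert\ppath_i\vert$ of the projected path; throughout I write $\strategy_i$ for the projection $\stratProjOfToValuation{\strategy}{i}{}$. For the base case $n = 0$ we have $\ppath_i = \initialOf{i}$, so the left-hand side is $\PrOf{\pa_i}{\strategy_i}{\initialOf{i}} = \iverson{\initialOf{i} = \initialOf{i}} = 1$. The right-hand side $\PrOf{\pa}{\strategy}{\liftedpaths{\initialOf{i}}{\pa_{3-i}}}$ is the measure of all paths of $\pa$ whose $i$-projection is $\initialOf{i}$; since this set contains the length-$0$ path $(\initialOf{1},\initialOf{2})$, whose cylinder has $\PrOf{\pa}{\strategy}{}$-measure $1$, the right-hand side equals $1$ as well.

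For the inductive step, let $\ppath_i = \ppath_i', \alpha_i, s_i$ with $\vert\ppath_i'\vert = n$, $\alpha_i \in \actSetOf{i}$, and $s_i = \last{\ppath_i}$. Expanding the cylinder-set definition yields
\[
\PrOf{\pa_i}{\strategy_i}{\ppath_i} = \PrOf{\pa_i}{\strategy_i}{\ppath_i'} \cdot \strategy_i(\ppath_i', \alpha_i) \cdot \transFctOf{i}(\last{\ppath_i'}, \alpha_i, s_i).
\]
I would first dispose of the degenerate case $\PrOf{\pa}{\strategy}{\liftedpaths{\ppath_i'}{\pa_{3-i}}} = 0$: here $\strategy_i(\ppath_i', \alpha_i) = 0$ by \Cref{def:projectionStrategyPA} and the induction hypothesis forces $\PrOf{\pa_i}{\strategy_i}{\ppath_i'} = 0$, so the left-hand side vanishes; since every path projecting to $\ppath_i$ has a prefix projecting to $\ppath_i'$, monotonicity of the measure makes the right-hand side vanish too. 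Otherwise I substitute the induction hypothesis $\PrOf{\pa_i}{\strategy_i}{\ppath_i'} = \PrOf{\pa}{\strategy}{\liftedpaths{\ppath_i'}{\pa_{3-i}}}$ together with the definition of $\strategy_i(\ppath_i', \alpha_i)$; the denominator of the projection cancels against the induction-hypothesis factor, leaving
\[
\PrOf{\pa_i}{\strategy_i}{\ppath_i} = \transFctOf{i}(\last{\ppath_i'}, \alpha_i, s_i) \cdot \sum_{s_i' \in \stateSetOf{i}} \PrOf{\pa}{\strategy}{\liftedpaths{(\ppath_i', \alpha_i, s_i')}{\pa_{3-i}}}.
\]
It then remains to show that the right-hand side equals $\PrOf{\pa}{\strategy}{\liftedpaths{\ppath_i}{\pa_{3-i}}}$.

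This final identity is where the main work lies, and it is the step I expect to be the chief obstacle. The difficulty is that $\liftedpaths{\ppath_i}{\pa_{3-i}}$ is \emph{not} prefix-free\textemdash{}a path projecting to $\ppath_i$ may be extended by asynchronous moves of $\pa_{3-i}$ and still project to $\ppath_i$\textemdash{}so its measure is not the sum of cylinder probabilities over all its elements. I would therefore pass to the \emph{minimal} elements, i.e.\ the paths ending exactly with the step in which $\pa_i$ makes its $(n{+}1)$-th move into $s_i$. These are pairwise non-prefix, so their cylinders are disjoint and their union covers $\bigcup_{\ppath \in \liftedpaths{\ppath_i}{\pa_{3-i}}} \cyl(\ppath)$, whence $\PrOf{\pa}{\strategy}{\liftedpaths{\ppath_i}{\pa_{3-i}}}$ is the sum of their cylinder probabilities. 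Each minimal path has the form $\tilde\ppath, (\hat\alpha_1, \hat\alpha_2), (r_1, r_2)$ with $\tilde\ppath \in \liftedpaths{\ppath_i'}{\pa_{3-i}}$, $\hat\alpha_i = \alpha_i$, and $r_i = s_i$. By the product structure of $\transFctOf{\parallel}$ in \Cref{def_ppa_composition}, the composed transition probability factorizes into $\transFctOf{i}(\last{\ppath_i'}, \alpha_i, s_i)$ times a component-$(3{-}i)$ factor; summing the latter over all successors $r_{3-i}$ gives $1$, since $\pa$ is a non-parametric PA and the $\pa_{3-i}$-marginal is a genuine distribution (a Dirac in the asynchronous case). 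Pulling out the constant $\transFctOf{i}(\last{\ppath_i'}, \alpha_i, s_i)$ leaves precisely the double sum $\sum_{\tilde\ppath \in \liftedpaths{\ppath_i'}{\pa_{3-i}}} \sum_{(\hat\alpha_1,\hat\alpha_2),\, \hat\alpha_i = \alpha_i} \PrOf{\pa}{\strategy}{\tilde\ppath} \cdot \strategy(\tilde\ppath, (\hat\alpha_1, \hat\alpha_2))$, which by \Cref{lem:projection:alternative} (applied to $\ppath_i'$) equals $\sum_{s_i'} \PrOf{\pa}{\strategy}{\liftedpaths{(\ppath_i', \alpha_i, s_i')}{\pa_{3-i}}}$. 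This matches the expression derived above, closing the induction.
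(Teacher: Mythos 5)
Your proof is correct and follows essentially the same route as the paper's: induction on $\vert\ppath_i\vert$, separate treatment of the degenerate case $\PrOf{\pa}{\strategy}{\liftedpaths{\ppath_i'}{\pa_{3-i}}}=0$, and an appeal to \Cref{lem:projection:alternative} to close the inductive step. You merely make explicit the minimal-path/disjoint-cylinder decomposition and the factorization of $\transFctOf{\parallel}$ that the paper compresses into its final equality, so the two arguments coincide in substance.
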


The following result lifts \cite[Lemma 2]{Kwi+13} to our setting and states that fair (and therefore also complete) strategies have fair and complete projections.
\begin{lemma}
	\label{theo:strategy_projection_partial_fair_preserved}
Let $\decomp_1 \subseteq \setOfdecompsOf{\alphabetOf{1}}$ and $\decomp_2 \subseteq \setOfdecompsOf{\alphabetOf{2}}$. 
If $\strategy \in \strategysetOf{\pa}{\fairWrtRegionModel{}{\decomp_1 \cup \decomp_2}}$, then $\stratProjOfToValuation{\strategy}{i}{} \in \strategysetOf{\pa_i}{\fairWrtRegionModel{}{\decomp_i}}$. 
\end{lemma}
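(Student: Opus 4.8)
The plan is to transfer the almost-sure fairness of $\strategy$ from $\pa$ to its projection $\stratProjOfToValuation{\strategy}{i}{}$ on $\pa_i$ by relating the relevant tail events through the path projection, using \Cref{lem:projection:measure} as the bridge between the two probability measures. Fix $i \in \{1,2\}$ and extend the path projection $\restrOfTo{\cdot}{i}$ from finite to infinite paths in the obvious way (the result may be a finite path if $\pa_i$ performs only finitely many steps). Write $F \subseteq \infPathsOf{\pa}{}$ for the set of paths of $\pa$ that are fair w.r.t.\ $\decomp_1 \cup \decomp_2$ and $F_i \subseteq \infPathsOf{\pa_i}{}$ for the paths of $\pa_i$ that are fair w.r.t.\ $\decomp_i$. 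By assumption $\PrOf{\pa}{\strategy}{F} = 1$, and we must establish both that $\stratProjOfToValuation{\strategy}{i}{}$ is complete and that $\PrOf{\pa_i}{\stratProjOfToValuation{\strategy}{i}{}}{F_i} = 1$.

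The combinatorial core is a label-preservation observation about $\parallel$. For $i=1$, every label $\ell \in \alphabetOf{1}$ stems either from a synchronizing transition ($\ell \in \alphabetOf{1} \cap \alphabetOf{2}$) or from a $\pa_1$-autonomous transition ($\ell \in \alphabetOf{1}\setminus\alphabetOf{2}$); in both cases the chosen composite action $(\alpha_1,\alpha_2)$ has $\alpha_1 \in \actSetOf{1}$, so the step is kept by $\restrOfTo{\cdot}{1}$ and carries the same label $\syncOf{1}(s_1,\alpha_1) = \ell$. Conversely, the discarded ($\pa_2$-autonomous) steps carry labels in $\alphabetOf{2}\setminus\alphabetOf{1}$, hence outside $\alphabetOf{1}$. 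Therefore, for any $\Lambda \subseteq \alphabetOf{1}$ and any $\infpath \in \infPathsOf{\pa}{}$, the $\Lambda$-labelled steps of $\infpath$ are in bijection with those of $\restrOfTo{\infpath}{1}$; in particular $\infpath$ visits $\Lambda$ infinitely often iff $\restrOfTo{\infpath}{1}$ does, and in that case $\restrOfTo{\infpath}{1}$ is infinite. The case $i=2$ is symmetric.

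From this observation two things follow. First, since every $\Lambda \in \decomp_i \subseteq \decomp_1 \cup \decomp_2$ satisfies $\Lambda \subseteq \alphabetOf{i}$, fairness of $\strategy$ forces $\pa_i$ to move infinitely often along $\PrOf{\pa}{\strategy}{}$-almost every path; hence $\restrOfTo{\infpath}{i}$ is almost surely infinite, and a short computation with \Cref{def:projectionStrategyPA} shows that on every history of positive measure the projected subdistribution sums to one, so that $\stratProjOfToValuation{\strategy}{i}{}$ is complete (on the remaining null set of histories it is completed arbitrarily, which changes no measure). Second, if $\infpath \in F$ then for every $\Lambda \in \decomp_i$ the path $\infpath$ visits $\Lambda$ infinitely often, so $\restrOfTo{\infpath}{i}$ does too and is infinite; thus $\restrOfTo{\infpath}{i} \in F_i$. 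This yields the inclusion $F \subseteq \{\infpath \in \infPathsOf{\pa}{} \mid \restrOfTo{\infpath}{i} \in F_i\}$.

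It remains to push this measure-one statement through the projection. By \Cref{lem:projection:measure} the two measures agree on all cylinders, $\PrOf{\pa_i}{\stratProjOfToValuation{\strategy}{i}{}}{\ppath_i} = \PrOf{\pa}{\strategy}{\liftedpaths{\ppath_i}{\pa_{3-i}}}$, and the right-hand side is exactly $\PrOf{\pa}{\strategy}{\{\infpath \mid \ppath_i \text{ is a prefix of } \restrOfTo{\infpath}{i}\}}$. Since the cylinders form a $\pi$-system generating the $\sigma$-algebra, uniqueness of measure extension gives $\PrOf{\pa_i}{\stratProjOfToValuation{\strategy}{i}{}}{E_i} = \PrOf{\pa}{\strategy}{\{\infpath \mid \restrOfTo{\infpath}{i} \in E_i\}}$ for every measurable $E_i \subseteq \infPathsOf{\pa_i}{}$ (the paths of $\pa$ with finite projection are $\PrOf{\pa}{\strategy}{}$-null by the progress argument and may be ignored). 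Applying this with $E_i = F_i$ and the inclusion above yields $\PrOf{\pa_i}{\stratProjOfToValuation{\strategy}{i}{}}{F_i} \ge \PrOf{\pa}{\strategy}{F} = 1$, hence equality, as required. I expect the main obstacle to be precisely this last transfer step: turning the cylinder-level identity of \Cref{lem:projection:measure} into an identity for the tail event $F_i$ while correctly accounting for the infinite paths of $\pa$ whose projection is finite, and simultaneously securing completeness of $\stratProjOfToValuation{\strategy}{i}{}$ from the same progress property.
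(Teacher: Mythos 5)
The paper never actually proves this lemma: it is stated in the main text as a lift of \cite[Lemma 2]{Kwi+13} and no proof appears in \Cref{app:proofs}, so there is no in-paper argument to compare yours against. On its own merits your proof is sound and takes the route one would expect: the label-preserving bijection between the $\Lambda$-labelled steps of $\pi$ and of $\restrOfTo{\pi}{i}$ for $\Lambda \subseteq \alphabetOf{i}$ is correct by inspection of \Cref{def_ppa_composition} (synchronised and $\pa_i$-autonomous steps are kept with their label, $\pa_{3-i}$-autonomous steps carry labels outside $\alphabetOf{i}$ and are dropped), this gives $F \subseteq \{\pi \mid \restrOfTo{\pi}{i} \in F_i\}$, and pushing the measure-one statement through the projection via the cylinder identity of \Cref{lem:projection:measure} plus uniqueness of extension on the generating $\pi$-system is exactly the right mechanism.

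Two caveats you should make explicit. First, your completeness argument tacitly assumes $\decomp_i \neq \emptyset$: only the existence of some $\Lambda \in \decomp_i$ with $\Lambda \subseteq \alphabetOf{i}$ forces component $i$ to move infinitely often almost surely, which is what makes the numerator and denominator in \Cref{def:projectionStrategyPA} coincide. For $\decomp_i = \emptyset$ the argument yields nothing, and the conclusion is then genuinely problematic: $\strategysetOf{\pa_i}{\fairWrtRegionModel{}{\emptyset}}$ is just the set of complete strategies, and the paper's own \Cref{ex:projectionPA} exhibits a complete strategy whose projection is not complete. So either state the nonemptiness assumption or note that the lemma must be read with it. Second, \Cref{def:projectionStrategyPA} assigns the zero subdistribution on every history $\ppath_i$ with $\PrOf{\pa}{\strategy}{\liftedpaths{\ppath_i}{\pa_{3-i}}} = 0$ (e.g.\ every non-initial $\ppath_i$), so the projection is never \emph{literally} complete; your repair of completing it arbitrarily on this null set of histories is necessary for the statement to hold as written and deserves to be a declared convention rather than a parenthetical remark. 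With these two points addressed, the proof is complete.
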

The converse of \Cref{theo:strategy_projection_partial_fair_preserved} does not hold:
The projection $\restrOfTo{\strategy}{i}$ might not be a complete strategy for $\pa_i$, even though $\strategy$ is complete for $\pa$.
Furthermore, $\restrOfTo{\strategy}{i}$ might not be memoryless, even if $\strategy$ is memoryless.
The following example shows both cases.
\begin{example}[Strategy Projection for pPA]\label{ex:projectionPA}
    Consider the pPA \( \ppa =  \ppa_1 \parallel \ppa_2 \) from \Cref{fig:pPA_composition}, using \( \ppa_1 \) and \( \ppa_2 \) depicted in \Cref{fig:ppa1_and_ppa2}.  
    We fix the valuation \( \valuation \) with \( \valuation(p) = \valuation(q) = 0.1 \) and set $\pa = \ppa[\valuation]$ and $\pa_i = \ppa_i[\valuation]$ for $i=1,2$.
    Let \( \strategy \in \strategysetOf{\pa_1\parallel\pa_2}{\mless,\comp} \) that always selects the actions with label \( \lab \), \( \altaltlab \), or \( \frownie \) with probability 1 when available; otherwise, it chooses \( \altlab \).
    
    We compute the projection \( {\stratProjOfToValuation{\strategy}{2}{}} \) to the PA \( \ppa_2[\valuation] \): 
    \begin{itemize}
        \item \( {\stratProjOfToValuation{\strategy}{2}{}}(t_0,\lab) = 1 \)
        \item \( {\stratProjOfToValuation{\strategy}{2}{}}((t_0,\lab, t_2), \altaltlab) = \frac{(\valuation(p))^2}{\valuation(p)} = \valuation(p) = 0.1 \)
        \item \( {\stratProjOfToValuation{\strategy}{2}{}}((t_0,\lab, t_2, \altaltlab, t_3), \frownie) = \frac{(\valuation(p))^2 \cdot \frac{1}{10}}{\valuation(p) \cdot \frac{1}{10}} = \valuation(p) = 0.1 \)
    \end{itemize}
    Similarly,  \( {\stratProjOfToValuation{\strategy}{2}{}}((t_0, \lab, t_1), \lab) = \valuation(p) = 0.1 \) and \( {\stratProjOfToValuation{\strategy}{2}{}}((t_0, \lab, t_1, \lab, t_3), \frownie) = \valuation(p) = 0.1 \). 
    We observe that the projection is not a complete strategy as $\frownie$ is the only available action in $t_3$ for $\ppa_2$. 
    Moreover, the projection is not memoryless, because, for example, ${\stratProjOfToValuation{\strategy}{2}{}}((t_0, \lab, t_2, \altaltlab, t_4), \altaltlab) = \valuation(p) = 0.1$
    is not equal to ${\stratProjOfToValuation{\strategy}{2}{}}((t_0, \lab, t_1, \lab, t_4), \altaltlab) = 1.$
\end{example}
	
\subsection{Projections for Parametric PAs}
We now lift strategy projections to the parametric case. 
We fix the parallel composition $\ppa = \ppa_1 \parallel \ppa_2$ of two pPAs $\ppa_1$ and $\ppa_2$ with a common set of parameters
$\parameterSetOf{}$. Let $i=1,2$ and let $\valuation_i \colon \parameterSetOf{} \to \reals$ be a well-defined valuation for $\ppa_i$.
We define strategy projections for pPAs in terms of the instantiated PAs.
\begin{definition}[Strategy Projection for pPA]\label{def:projectionStrategy}
The \emph{projection} of a strategy $\strategy \in \strategysetOf{\ppa}{\prt}$ to pPA $\ppa_{i}$ w.r.t. $\valuation_1,\valuation_2$ is the strategy $\stratProjOfToValuation{\strategy}{i}{\valuation_1,\valuation_2} \in \strategysetOf{\ppa_i}{\prt}$, defined by
$\stratProjOfToValuation{\strategy}{i}{\valuation_1,\valuation_2} = \stratProjOfToValuation{\strategy}{i}{\ppa_1[\valuation_1] \parallel \ppa_2[\valuation_2]}$.
\end{definition}
If the valuations coincide, i.e., $\valuation_1=\valuation_2=\valuation$, we write $\stratProjOfToValuation{\strategy}{i}{\valuation}$ instead of $\stratProjOfToValuation{\strategy}{i}{\valuation_1,\valuation_2}$.
Strategy projections depend on the parameter instantiations as the following example illustrates.
Such strategies are also referred to as \emph{(parameter) dependent strategies}~\cite[Def.\ {2.6}]{Spel23}. 
\begin{example}[Strategy Projection for pPA]\label{ex:projectionPPA}
    Consider \( \ppa_1 \parallel \ppa_2 \) from \Cref{fig:pPA_composition} and the strategy \( \strategy \) as in \Cref{ex:projectionPA}. 
    Let \( \valuation_1 \) and \( \valuation_2 \) be the valuations used for \( \ppa_1 \) and \( \ppa_2 \), respectively, with \( \valuation_1(p) = \valuation_1(q) = 0.1 \), and \( \valuation_2(p) = \valuation_2(q) = 0.9 \). 
    For the projection \( \stratProjOfToValuation{\strategy}{2}{\valuation_1, \valuation_2} \) to \( \ppa_2 \), the following values are obtained:
    \begin{itemize}
        \item \( {\stratProjOfToValuation{\strategy}{2}{\valuation_1,\valuation_2}}(t_0,\lab) = 1 \)
        \item \( {\stratProjOfToValuation{\strategy}{2}{\valuation_1,\valuation_2}}((t_0,\lab, t_2), \altaltlab) = \frac{\valuation_1(p)\cdot \valuation_2(p)}{\valuation_2(p)} = \valuation_1(p) = 0.1 \)
        \item \( {\stratProjOfToValuation{\strategy}{2}{\valuation_1,\valuation_2}}((t_0,\lab, t_2, \altaltlab, t_3), \frownie) = \frac{\valuation_1(p) \cdot \valuation_2(p) \cdot \frac{1}{10}}{\valuation_2(p) \cdot \frac{1}{10}} = \valuation_1(p) = 0.1 \), 
        \item \( {\stratProjOfToValuation{\strategy}{2}{\valuation_1,\valuation_2}}((t_0, \lab, t_1), \lab) = \valuation_1(p) = 0.1 \), and 
        \item \( {\stratProjOfToValuation{\strategy}{2}{\valuation_1,\valuation_2}}((t_0, \lab, t_1, \lab, t_3), \frownie) = \valuation_1(p) = 0.1 \).
    \end{itemize}
    Note that the resulting projection coincides with the one computed in \Cref{ex:projectionPA}, where both components were instantiated using \( \valuation_1 \), i.e., $
    \stratProjOfToValuation{\strategy}{2}{\valuation_1, \valuation_2} = \stratProjOfToValuation{\strategy}{2}{\valuation_1} = \stratProjOfToValuation{\strategy}{2}{\valuation_1, \valuation_1}
    $.
\end{example}
The next lemma states that---when restricting to valuations that yield the same non-zero transitions---the strategy projection to $\ppa_i$ only depends on the parameter instantiation applied to $\ppa_{3-i}$.
This observation is the key insight for the correctness of the proof rule in \Cref{theo:pag_mono_rule}, which enables compositional reasoning about monotonicity.
\begin{restatable}{lemma}{changeValuationProjection}
\label{theo:change_valuation_projection}
For $i=1,2$, and well-defined valuations $\valuation_i, \valuation_i' \colon \parameterSetOf{} \to \reals$ for $\ppa_i$ such that $\transFctOf{i}(s_i,\alpha_i,s_i')[\valuation_i] = 0$ iff $\transFctOf{i}(s_i,\alpha_i,s_i')[\valuation_i'] = 0$ we have:
$\stratProjOfToValuation{\strategy}{1}{\valuation_1, \valuation_2} = \stratProjOfToValuation{\strategy}{1}{\valuation_1', \valuation_2}$
and 
$\stratProjOfToValuation{\strategy}{2}{\valuation_1, \valuation_2} = \stratProjOfToValuation{\strategy}{2}{\valuation_1, \valuation_2'}$.
\end{restatable}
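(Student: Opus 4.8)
The plan is to reduce to \Cref{def:projectionStrategyPA} and to exploit that, along every path of the composition that projects to a fixed $\ppath_1$, the contribution of the transitions of $\ppa_1$ factors out and depends only on $\ppath_1$ and $\valuation_1$. By \Cref{def:projectionStrategy} it suffices to prove the first equality for the non-parametric compositions $\pa = \ppa_1[\valuation_1] \parallel \ppa_2[\valuation_2]$ and $\pa' = \ppa_1[\valuation_1'] \parallel \ppa_2[\valuation_2]$, since projections for pPAs are defined via their instantiations; the second equality then follows by swapping the roles of the two components. Throughout I use that finite paths and the sets $\liftedpaths{\ppath_1}{\ppa_2}$ are combinatorial objects independent of the chosen valuation, and that strategies are valuation-independent (\Cref{rem:padef}).

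First I would establish the key factorization. Fix $\ppath_1 \in \finPathsOf{\ppa_1}$ and consider any $\ppath \in \liftedpaths{\ppath_1}{\ppa_2}$. Inspecting the three cases of \Cref{def_ppa_composition}, each transition taken along $\ppath$ contributes to $\PrOf{\pa}{\strategy}{\ppath}$ a product of a $\ppa_1$-factor and a $\ppa_2$-factor: synchronized and $\ppa_1$-only steps contribute $\transFctOf{1}(s_1,\alpha_1,s_1')[\valuation_1]$ from $\ppa_1$, whereas $\ppa_2$-only steps contribute the Dirac factor $1$ from $\ppa_1$. Since the projection records exactly the synchronized and $\ppa_1$-only steps, the product of all $\ppa_1$-factors along $\ppath$ equals
\[ w_1(\ppath_1)[\valuation_1] = \prod_j \transFctOf{1}\big((\ppath_1)_j, \beta_j, (\ppath_1)_{j+1}\big)[\valuation_1], \]
ranging over the transitions recorded in $\ppath_1$ with their actions $\beta_j$; crucially, this value is identical for every $\ppath \in \liftedpaths{\ppath_1}{\ppa_2}$. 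Hence $\PrOf{\pa}{\strategy}{\ppath} = w_1(\ppath_1)[\valuation_1] \cdot g(\ppath)$, where $g(\ppath)$ collects the initial-state indicator, the (valuation-independent) strategy weights, and the $\ppa_2$-factors evaluated at $\valuation_2$, so that $g$ does not depend on $\valuation_1$.

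Next I would substitute this factorization into the denominator and into the numerator of \Cref{def:projectionStrategyPA}, the latter first rewritten via \Cref{lem:projection:alternative}. Since $w_1(\ppath_1)[\valuation_1]$ is a common factor of every summand — in the numerator each summand additionally carries a valuation-independent strategy weight $\strategy(\ppath,(\hat\alpha_1,\hat\alpha_2))$ — both numerator and denominator take the form $w_1(\ppath_1)[\valuation_1] \cdot N$ and $w_1(\ppath_1)[\valuation_1] \cdot D$, where $N$ and $D$ depend only on $\valuation_2$, $\strategy$, $\ppath_1$, and $\alpha_1$. Whenever the denominator is positive, the factor $w_1(\ppath_1)[\valuation_1]$ cancels in the quotient and leaves the value $N/D$, which is independent of $\valuation_1$.

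It remains to reconcile the case distinction of \Cref{def:projectionStrategyPA}, which is where the hypothesis on preserved zero transitions enters and which I expect to be the only genuinely delicate point. The denominator equals $w_1(\ppath_1)[\valuation_1] \cdot D$ with $D \ge 0$ independent of $\valuation_1$, so it is positive iff $w_1(\ppath_1)[\valuation_1] > 0$ and $D > 0$. As $w_1(\ppath_1)$ is a product of transition probabilities of $\ppa_1$ and the assumption guarantees $\transFctOf{1}(s_1,\alpha_1,s_1')[\valuation_1] = 0$ iff $\transFctOf{1}(s_1,\alpha_1,s_1')[\valuation_1'] = 0$, we obtain $w_1(\ppath_1)[\valuation_1] > 0$ iff $w_1(\ppath_1)[\valuation_1'] > 0$. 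Thus the denominator is positive under $\valuation_1$ iff it is positive under $\valuation_1'$: in the positive case the quotients coincide by the cancellation above, and otherwise both definitions return $0$. This proves $\stratProjOfToValuation{\strategy}{1}{\valuation_1,\valuation_2} = \stratProjOfToValuation{\strategy}{1}{\valuation_1',\valuation_2}$, and the symmetric argument with the component indices exchanged yields $\stratProjOfToValuation{\strategy}{2}{\valuation_1,\valuation_2} = \stratProjOfToValuation{\strategy}{2}{\valuation_1,\valuation_2'}$.
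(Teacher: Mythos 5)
Your proposal is correct and follows essentially the same route as the paper's proof: factor the probability of every path in $\liftedpaths{\ppath_1}{\ppa_2}$ as $\transFctOf{1}(\ppath_1)[\valuation_1]$ times a $\valuation_1$-independent remainder, cancel this common factor in the quotient from \Cref{def:projectionStrategyPA} (with the numerator rewritten via \Cref{lem:projection:alternative}), and use the zero-preservation hypothesis to align the case distinction. The only cosmetic difference is that the paper makes the denominator explicit as a sum over the set of minimal paths projecting to $\ppath_1$, which you leave implicit.
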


\section{Verification Objectives for pPAs}\label{sec:verification}
We define properties of interest for pPA verification.
Let $\alphabetOf{}$ be a finite alphabet.
$\alphabetOf{}^{\infty} = \alphabetOf{}^* \cup \alphabetOf{}^{\omega}$ denotes the set of all finite and infinite words over $\alphabetOf{}$.
For a word $\trace = \lab_0, \lab_1, \dots \in \alphabetOf{}^\infty$ and another alphabet $\widehat{\alphabetOf{}}$, let $\restrOfTo{\trace}{\widehat{\alphabetOf{}}} \in \widehat{\alphabetOf{}}^\infty$ denote the projection of $\trace$ onto $\widehat{\alphabetOf{}}$---obtained by dropping all $\lab_i \in \alphabetOf{} \setminus \widehat{\alphabetOf{}}$ from $\trace$.
The restriction $\restrOfTo{\trace}{\widehat{\alphabetOf{}}}$ can be finite, even if $\trace$ is infinite.

We fix a pPA $\ppa = \ppaTupleOf{}$.
The trace of $\infpath = s_0, \alpha_0, s_1, \alpha_1, \dots \in \infPathsOf{\ppa}$ is the sequence $\traceOf{\infpath} = \syncOf{}(s_0, \alpha_0), \syncOf{}(s_1, \alpha_1), \dots $ of transition labels. 
The \emph{probability of a language} $\regLang \subseteq \alphabetOf{}^\infty$ at a well-defined valuation $\valuation$ under strategy $\strategy$ of $\ppa$ is given by
	\[
	\PrOf{\ppa}{\valuation,\strategy}{\regLang} = \PrOf{\ppa}{\valuation,\strategy}{\{ \pi \in \infPathsOf{\ppa}{} \mid \restrOfTo{\traceOf{\pi}}{\alphabetOf{}} \in \regLang  \}}. 
	\]

We also consider (parametric) \emph{expected total reward properties}. Let $\parameterSetOf{}$ be a set of parameters.
A \emph{reward function} $\rewFct \colon \alphabetOf{} \to \rats[{\parameterSetOf{}}] \cup \reals_{\geq 0}$ over $\alphabetOf{}$ assigns a (potentially parametric) reward to each symbol $\lab \in \alphabetOf{}$.
\emph{Instantiation} of $\rewFct$ at a valuation $\valuation \colon \parameterSetOf{} \to \reals$ yields $\rewFct[\valuation]$ with $\rewFct[\valuation](\lab) = \rewFct(\lab)[\valuation]$ for all $\lab \in \alphabetOf{}$. Valuation $\valuation$ is \emph{well-defined} for $\rewFct$ if $\rewFct[\valuation] \colon  \alphabetOf{} \to \reals_{\geq 0}$.
In this case, the accumulated reward for a word $\trace = \lab_0, \lab_1, \dots \in \alphabetOf{}^\infty$ is given by $\rewFct[\valuation](\trace) = \sum_{i=0}^{|\trace|} \rewFct[\valuation](\lab_i) \in \reals_{\geq 0} \cup \set{\infty}$.

When applied to a pPA $\ppa$, a reward function $\rewFct$ assigns the reward $\rewFct(\syncOf{}(s,\alpha))$ to the enabled state-action-pairs $(s,\alpha) \in \domain(\transFctOf{})$ with $\syncOf{}(s,\alpha) \in \alphabetOf{}$.
For a well-defined valuation $\valuation$ for $\ppa$ and $\rewFct$, we define the \emph{expected total reward} under strategy $\strategy$ as
\[\ExpTot{\ppa}{\valuation, \strategy}{\rewFct}= \textstyle \int_{\pi \in \infPathsOf{\ppa}{}}^{} \rewFct[\valuation]\big(\restrOfTo{\traceOf{\pi}}{\alphabetOf{}}\big) \,d \PrOf{\ppa}{\valuation, \strategy}{\pi}.\]

We consider probabilistic and reward-based objectives as well as their multi-objective combinations.
\begin{definition}[Objectives]
For ${\sim} \in \{>, \geq, <, \leq\}$, $p \in [0,1]$, and $r \in \reals_{\geq 0}$, we denote 
\begin{itemize}
    \item a \emph{probabilistic objective} over $\regLang \subseteq \alphabetOf{}^\infty$ by $\probPredicate{\sim p}{\regLang}$ and
    \item a \emph{reward objective} over $\rewFct \colon \alphabetOf{} \to \rats[{\parameterSetOf{}}] \cup \reals_{\geq 0}$ by $\expPredicate{\sim r}{\rewFct}$. 
    \end{itemize}
Their satisfaction for a well-defined valuation $\valuation$ and strategy $\strategy$ is defined by
	\[
		\ppa, \valuation, \strategy  \modelsWrt{} \probPredicate{\sim p}{\regLang}  \  \Leftrightarrow  \  \PrOf{\ppa}{\valuation,\strategy}{\regLang} \sim p 
        \quad  \text{ and}  \quad
        \ppa, \valuation, \strategy \modelsWrt{} \expPredicate{\sim r}{\rewFct}  \  \Leftrightarrow  \  \ExpTot{\ppa}{\valuation,\strategy}{\rewFct} \sim r . 
	\]
\end{definition}
Let $\generalPredicate{} \in \set{\probPredicate{\sim p}{\regLang}, \expPredicate{\sim r}{\rewFct}}$ refer to a (probabilistic or reward) objective. 
If neither $\ppa$ nor $\generalPredicate{}$ consider any parameters, we may drop the valuation from the notation and just write $\ppa, \strategy  \modelsWrt{} \generalPredicate{}$.
We lift the satisfaction relation $\modelsWrt{}$ to regions, i.e., sets of valuations.

\begin{definition}[Region Satisfaction Relation]\label{def:satregion}
	Let $\star \in \{\prt, \comp\} \cup \set{\fairC \mid \decomp \subseteq \setOfdecompsOf{\alphabetOf{\ppa}}}$.  
	For objective $\generalPredicate$
    and well-defined region $\region$ for $\ppa$---and $\rewFct$ if $\generalPredicate{} = \expPredicate{\sim r}{\rewFct}$---the \emph{region satisfaction relation} $\modelsWrt{\star}$ is given by: 
	\[
		\ppa, \region \modelsWrt{\star} \generalPredicate{} \quad \Leftrightarrow \quad \forall \valuation \in \region:  \forall \strategy \in \strategysetOf{\ppa}{\star}: \ppa, \valuation, \strategy \modelsWrt{} \generalPredicate{}. 
	\]
    Satisfaction under memoryless strategies---denoted by $\modelsWrt{\mless,\star}$---is defined similarly.
\end{definition}

\begin{remark}
For $\star \in \{\prt, \comp\}$ and any well-defined valuation $\valuation$ we have $\strategysetOf{\ppa[\valuation]}{\star} = \strategysetOf{\ppa}{\star}$.
Thus, for well-defined $\region$, we can swap the quantifiers in \Cref{def:satregion}: 
	\[
		\ppa, \region \modelsWrt{\star} \generalPredicate{} 
		\quad \Leftrightarrow \quad  \forall \strategy \in \strategysetOf{\ppa}{\star}: 
		\forall \valuation \in \region: \ppa, \valuation, \strategy \modelsWrt{} \generalPredicate{}. 
	\]
	However, this is not the case for \emph{fair} strategies and regions that are not graph-preserving: 
A strategy that is not $\fairC$ for $\ppa$ (under graph-preserving instantiations) might be $\fairC$ for $\ppa[\valuation]$ if $\valuation$ is not graph-preserving, because states that violate the fairness condition might not be reachable in $\ppa[\valuation]$.
	For a \emph{graph-preserving} region $\region$ and all $\valuation \in \region$, we have $\strategysetOf{\ppa}{\fairC} = \strategysetOf{\ppa[\valuation]}{\fairC}$. 
	Thus, we can swap quantifiers as above.
\end{remark}

Our framework also handles conjunctions of multiple objectives.
\begin{definition}[MO-Query]
A \emph{multi-objective} query (mo-query) is a set  $\multiobjectiveQuery{}{X}=\{ \generalPredicate{}_1, \dots , \generalPredicate{}_n\}$ of $n$ probabilistic or reward objectives with
	$
		\ppa, \valuation, \strategy  \modelsWrt{} \multiobjectiveQuery{}{X}  \  \Leftrightarrow  \  \ppa,\valuation, \strategy \modelsWrt{} \generalPredicate{}_i$  for all $\generalPredicate{}_i \in \multiobjectiveQuery{}{X}
	$.
\end{definition}
The conjunction of two mo-queries is a union of sets: $\multiobjectiveQuery{}{X}_1 \land \multiobjectiveQuery{}{X}_2 = \multiobjectiveQuery{}{X}_1 \cup \multiobjectiveQuery{}{X}_2$.
We lift objective satisfaction for regions (\Cref{def:satregion}) to mo-queries in a straightforward way.

\begin{remark}
\iftoggle{extended}{
    \Cref{app:equivalence_partial_complete_measures,app:equivalence_partial_complete_qmo} show that model checking under partial strategies in $\ppa$ for probabilistic properties, rewards, and multi-objective queries reduces to model checking under complete strategies in a modified pPA, denoted $\ppa_\tau$. 
    This result extends \cite[Proposition 2]{Kwi+13} to pPA while preserving memorylessness of the strategies. 
}{
    In~\cite{extendedMer+25} it is shown that model checking under partial strategies in $\ppa$ for probabilistic properties, rewards, and multi-objective queries reduces to model checking under complete strategies in a modified pPA, denoted $\ppa_\tau$. 
    This result extends \cite[Proposition 2]{Kwi+13} to pPA while preserving memorylessness of the strategies. 
}
\end{remark}
We consider \emph{safety objectives} as a special type of probabilistic objectives.
\begin{definition}[Safety Objective]\label{def:safety_objective}
        $\probPredicate{\geq p}{\regLang}$ is a \emph{safety objective}\footnote{Note that safety objectives contain all finite prefixes of words in $\regLang$, i.e., they are prefix-closed. 
        This is different in \cite{Kwi+13}, where only infinite words are considered, leading to technical problems. See \Cref{sec:explain_differences}.} if  $\regLang$ can be characterized by a DFA $\bpAutomatonOf{\regLang}$ accepting a language of finite words (bad prefixes):
        \[
        \regLang = \{ w \in  \alphabetOf{}^{\infty} \mid \text{no prefix of $w$ is accepted by } \bpAutomatonOf{\regLang} \}.
        \]
\end{definition}
A mo-query is called \emph{safe}, denoted $\multiobjectiveQuery{\safe}{X}$, if each $\generalPredicate{}_i$ is a probabilistic safety objective. 

For PA, computing the probability for a safety objective reduces to maximal reachability properties in the PA-DFA product \cite[Lemma 1]{Kwi+13}. 
\iftoggle{extended}{
    This result can be lifted to pPA in a straightforward manner; see \Cref{theo:prob_safety_pPA_prop_reach_poduct_bijection} in \Cref{app:proofs_of_verification}. 
}{
    This result can be lifted to pPA in a straightforward manner; see~\cite{extendedMer+25}. 
}%
For reachability and safety objectives, it is equivalent to quantify over complete or partial strategies.
\Cref{lemma_safety_partial_vs_complete} lifts \cite[Proposition 1]{Kwi+13} to pPA. 
\begin{restatable}{lemma}{restatableSafetyPartVsComplete}\label{lemma_safety_partial_vs_complete}
    Let $\ppa$ be a pPA, let $\region$ be a well-defined region and let $\probPredicate{\geq p}{\regLang}$ be a safety objective for $\ppa$. 
    It holds that: 
    $\ppa, \region \modelsWrt{\comp} \probPredicate{\geq p}{\regLang}  \Leftrightarrow 
    \ppa, \region \modelsWrt{\prt} \probPredicate{\geq p}{\regLang}$.
    Same for $\modelsWrt{\mless,\star}$.
\end{restatable}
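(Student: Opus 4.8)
The plan is to reduce the safety probability to a reachability probability on $\ppa$ and to exploit that, for a \emph{prefix-closed} language, stopping early is always safe. The direction $\ppa,\region \modelsWrt{\prt} \probPredicate{\geq p}{\regLang} \Rightarrow \ppa,\region \modelsWrt{\comp}\probPredicate{\geq p}{\regLang}$ is immediate: every complete strategy is in particular a partial one, so $\strategysetOf{\ppa}{\comp} \subseteq \strategysetOf{\ppa}{\prt}$, and a universal guarantee over the larger set entails it over the smaller one. The same containment holds for memoryless strategies, settling the easy direction for $\modelsWrt{\mless,\star}$ as well. The content is thus the converse.

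For the converse I would first rewrite the safety probability via reachability. Let $B \subseteq \finPathsOf{\ppa}$ be the set of \emph{minimal bad-prefix paths}, i.e.\ finite paths $\finpath$ whose trace $\restrOfTo{\traceOf{\finpath}}{\alphabetOf{}}$ is accepted by $\bpAutomatonOf{\regLang}$ for the first time. This set depends only on $\ppa$ and $\bpAutomatonOf{\regLang}$ (not on $\strategy$), and the cylinders $\cyl(\finpath)$ for $\finpath \in B$ are pairwise disjoint, so $\PrOf{\ppa}{\valuation,\strategy}{B} = \sum_{\finpath \in B}\PrOf{\ppa}{\valuation,\strategy}{\cyl(\finpath)}$. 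Invoking the lifting of \cite[Lemma~1]{Kwi+13} to pPA, I would establish $\PrOf{\ppa}{\valuation,\strategy}{\regLang} = 1 - \PrOf{\ppa}{\valuation,\strategy}{B}$ for \emph{both} partial and complete $\strategy$. The crucial point is that this identity survives for partial strategies precisely because $\regLang$ is prefix-closed (\Cref{def:safety_objective}): when a partial strategy stops after a bad-prefix-free history, the resulting finite trace already lies in $\regLang$ and contributes nothing to $B$, so the stopped mass counts as safe. Hence it suffices to bound, for every partial strategy, the reachability probability $\PrOf{\ppa}{\valuation,\strategy}{B}$ by that of a complete strategy.

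Fix $\valuation \in \region$ and $\strategy \in \strategysetOf{\ppa}{\prt}$. I would define a completion $\widehat\strategy \in \strategysetOf{\ppa}{\comp}$ that keeps $\strategy(\finpath)$ wherever it is already a full distribution and otherwise spreads the missing mass $1 - \sum_{\alpha}\strategy(\finpath,\alpha)$ over the enabled actions $\enabledActsOf{\last{\finpath}}$ (assuming, as ensured by the $\ppa_\tau$ construction, that every state has an enabled action; if $\strategy$ is memoryless, the redistribution depends only on $\last{\finpath}$, so $\widehat\strategy$ is memoryless too). By construction $\widehat\strategy(\finpath,\alpha) \geq \strategy(\finpath,\alpha)$ for all $\finpath$ and $\alpha$, so every cylinder probability is monotone, $\PrOf{\ppa}{\valuation,\widehat\strategy}{\cyl(\finpath)} \geq \PrOf{\ppa}{\valuation,\strategy}{\cyl(\finpath)}$, since the defining product consists of non-negative factors that do not decrease (the transition factors being identical). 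Summing over $B$ gives $\PrOf{\ppa}{\valuation,\strategy}{B} \leq \PrOf{\ppa}{\valuation,\widehat\strategy}{B}$. As $\widehat\strategy$ is complete, the assumption $\ppa,\region \modelsWrt{\comp}\probPredicate{\geq p}{\regLang}$ yields $\PrOf{\ppa}{\valuation,\widehat\strategy}{B} \leq 1-p$, whence $\PrOf{\ppa}{\valuation,\strategy}{\regLang} = 1 - \PrOf{\ppa}{\valuation,\strategy}{B} \geq p$. Since $\valuation$ and $\strategy$ were arbitrary, $\ppa,\region \modelsWrt{\prt}\probPredicate{\geq p}{\regLang}$ follows, and the memoryless case is identical via the memoryless completion.

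I expect the main obstacle to be the reduction identity $\PrOf{\ppa}{\valuation,\strategy}{\regLang} = 1 - \PrOf{\ppa}{\valuation,\strategy}{B}$ for \emph{partial} strategies: one must argue carefully that the probability mass lost by stopping is accounted for as safe. This is exactly where prefix-closedness of $\regLang$ is indispensable and where the present formulation departs from \cite{Kwi+13}; without finite words in $\regLang$, stopped behaviour would not register as safe and the statement would fail. Once this identity is secured, the completion-and-monotonicity step is routine.
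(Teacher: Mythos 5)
Your proof is correct and follows essentially the same route as the paper: the paper likewise reduces the safety probability to reachability of bad prefixes via the pPA-lifting of \cite[Lemma 1]{Kwi+13} (there phrased through the pPA--DFA product rather than your set $B$ of minimal bad-prefix paths, which is equivalent) and then argues, as in \cite[Proposition 1]{Kwi+13}, that completing a partial strategy cannot decrease the reachability probability of the bad event. The delicate point you flag---that the identity $\PrOf{\ppa}{\valuation,\strategy}{\regLang} = 1 - \PrOf{\ppa}{\valuation,\strategy}{B}$ for partial strategies hinges on prefix-closedness of $\regLang$---is exactly the issue the paper isolates in its discussion of the differences to \cite{Kwi+13}, so your treatment matches theirs.
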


\subsection{Preservation Under Projection}\label{sec:projection_properties}
We generalize the result from \cite[Lemma 3]{Kwi+13}---originally stated in \cite[Lemma 7.2.6]{Seg95}---to the parametric setting. In particular, we show that probabilistic and reward properties in a composed pPA $\ppa = \ppa_1 \parallel \ppa_2$ under a strategy $\strategy$ are preserved when projecting to either component $\ppa_i$ over $\alphabetOf{i}$, assuming a well-defined valuation $\valuation$. 
\begin{restatable}{theorem}{restatableLemmaThreeNewPre}
    \label{theo:lemma3NewDependentPre}
    \label{lemma_3_aDependent}
    \label{lemma_3_cDependent}
		For $i=1,2$, let $\regLang$ be a language over $\alphabetOf{i}$ and $\rewFct$ be a reward function over $\alphabetOf{i}$. 
		Then, for a  well-defined valuation $\valuation$: 
			\begin{align*}
				& \PrOf{{{\ppa_i}}}{\valuation, {\stratProjOfToValuation{\strategy}{i}{\valuation}}}{\regLang}
				 =  
				\PrOf{{{{\ppa_1}} \parallel \ppa_2}}{\valuation, \strategy}{\regLang}
				\quad  \text{ and } \quad 
				\ExpTot{{\ppa_i}}{\valuation, {\stratProjOfToValuation{\strategy}{i}{\valuation}}}{\rewFct}
				 =  
				\ExpTot{{\ppa_1} \parallel {\ppa_2}}{\valuation, \strategy}{\rewFct}
			\end{align*} 	
\end{restatable}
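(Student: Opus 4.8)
The plan is to peel off the parametric layer by instantiating at $\valuation$ and then reprove the classical measure-preservation result (\cite[Lemma 3]{Kwi+13}, \cite[Lemma 7.2.6]{Seg95}) with \Cref{lem:projection:measure} as the workhorse. First I would observe that instantiation acts pointwise on the (product) transition distributions, so parallel composition commutes with it: $(\ppa_1 \parallel \ppa_2)[\valuation] = \ppa_1[\valuation] \parallel \ppa_2[\valuation]$. By \Cref{def:projectionStrategy} with $\valuation_1 = \valuation_2 = \valuation$, the projected strategy $\stratProjOfToValuation{\strategy}{i}{\valuation}$ is exactly the non-parametric projection $\stratProjOfToValuation{\strategy}{i}{}$ w.r.t.\ $\pa := \ppa_1[\valuation]\parallel\ppa_2[\valuation]$, while $\PrOf{\ppa_i}{\valuation,\cdot}{} = \PrOf{\ppa_i[\valuation]}{\cdot}{}$ and $\rewFct[\valuation]$ is a genuine non-negative reward function. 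It therefore suffices to establish the two equalities for the non-parametric PAs $\pa_i := \ppa_i[\valuation]$ and $\pa := \pa_1 \parallel \pa_2$, i.e.\ the non-parametric preservation statement.

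The combinatorial core is the identity $\restrOfTo{\traceOf{\pi}}{\alphabetOf{i}} = \traceOf{\restrOfTo{\pi}{i}}$ for every finite or infinite path $\pi$ of $\pa$, which I prove by induction on prefix length and pass to the limit. By \Cref{def_ppa_composition}, a synchronizing step carries a label in $\alphabetOf{1}\cap\alphabetOf{2}$ and is kept by both the restriction to $\alphabetOf{i}$ and the path projection; an autonomous step of $\pa_i$ carries a label in $\alphabetOf{i}\setminus\alphabetOf{3-i}$ and is likewise kept by both; and an autonomous step of $\pa_{3-i}$ carries a label in $\alphabetOf{3-i}\setminus\alphabetOf{i}$ and is discarded by both. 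Consequently the event $\{\pi \in \infPathsOf{\pa} \mid \restrOfTo{\traceOf{\pi}}{\alphabetOf{i}} \in \regLang\}$ equals $\{\pi \mid \traceOf{\restrOfTo{\pi}{i}} \in \regLang\}$, so it depends on $\pi$ only through its projection $\restrOfTo{\pi}{i}$.

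For the probabilistic equality I consider the two (sub-)probability measures on $\alphabetOf{i}^\infty$ obtained by pushing $\PrOf{\pa}{\strategy}{}$ forward along $\pi \mapsto \restrOfTo{\traceOf{\pi}}{\alphabetOf{i}}$ and $\PrOf{\pa_i}{\stratProjOfToValuation{\strategy}{i}{}}{}$ forward along $\pi_i \mapsto \traceOf{\pi_i}$. Combining the trace identity with \Cref{lem:projection:measure}, these measures agree on the generating $\pi$-system of trace-cylinders $\{w : u \text{ is a prefix of } w\}$ for $u \in \alphabetOf{i}^*$: each such cylinder pulls back to a countable union of path-cylinders whose measures are matched cylinder-by-cylinder by \Cref{lem:projection:measure}. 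The standard uniqueness-of-measures argument (the $\pi$--$\lambda$ theorem) then forces the two pushforward measures to coincide on the whole trace $\sigma$-algebra, and evaluating at the measurable set $\regLang$ yields $\PrOf{\pa_i}{\stratProjOfToValuation{\strategy}{i}{}}{\regLang} = \PrOf{\pa}{\strategy}{\regLang}$. For the reward equality I rewrite the expected total reward as a non-negative, hence monotone-convergent, sum of expected one-step rewards over finite paths,
\[
\ExpTot{\pa_i}{\stratProjOfToValuation{\strategy}{i}{}}{\rewFct}
= \sum_{\ppath_i \in \finPathsOf{\pa_i}} \PrOf{\pa_i}{\stratProjOfToValuation{\strategy}{i}{}}{\ppath_i} \cdot \sum_{\alpha_i \in \actSetOf{i}} \stratProjOfToValuation{\strategy}{i}{}(\ppath_i,\alpha_i)\,\rewFct[\valuation]\big(\syncOf{i}(\last{\ppath_i},\alpha_i)\big),
\]
apply \Cref{lem:projection:measure} to each $\PrOf{\pa_i}{\stratProjOfToValuation{\strategy}{i}{}}{\ppath_i}$ and \Cref{lem:projection:alternative} to the immediate-reward term (whose composed steps carry exactly the label $\syncOf{i}(\last{\ppath_i},\alpha_i)$ by the trace identity), and reorganize the resulting non-negative double sum into the expected total reward of $\pa$; Tonelli's theorem justifies each interchange.

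The main obstacle is the measure-theoretic bookkeeping forced by \emph{partial} strategies: an infinite path of $\pa$ may project onto a \emph{finite} path of $\pa_i$, which corresponds precisely to the ``stopping'' mass of the partial projected strategy $\stratProjOfToValuation{\strategy}{i}{}$, and one must verify that such finite restricted traces are accounted for consistently on both sides. This is exactly where the conditional-probability formulation underlying \Cref{def:projectionStrategyPA} and the cylinder identity of \Cref{lem:projection:measure} are essential. The second delicate point is ensuring that the cylinder-level agreement genuinely extends to an \emph{arbitrary} measurable language $\regLang$ (and to the reward integral), rather than only to finite prefixes or $\omega$-regular approximants — which the $\pi$--$\lambda$/monotone-convergence arguments above are designed to secure.
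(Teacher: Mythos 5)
Your proposal is correct, and its opening move---instantiate at $\valuation$, note that instantiation commutes with parallel composition so that $(\ppa_1 \parallel \ppa_2)[\valuation] = \ppa_1[\valuation] \parallel \ppa_2[\valuation]$, and observe via \Cref{def:projectionStrategy} that $\stratProjOfToValuation{\strategy}{i}{\valuation}$ is exactly the non-parametric projection in the instantiated product---is precisely the paper's \emph{entire} proof. Where you diverge is in what happens next: the paper stops there and cites \cite[Lemma 3]{Kwi+13} for the non-parametric preservation statement, whereas you reprove that classical result from scratch, using the trace identity $\restrOfTo{\traceOf{\pi}}{\alphabetOf{i}} = \traceOf{\restrOfTo{\pi}{i}}$, \Cref{lem:projection:measure} on the generating cylinders, the $\pi$--$\lambda$ theorem for general measurable $\regLang$, and a Tonelli-based rearrangement (with \Cref{lem:projection:alternative}) for the reward integral. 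What your longer route buys is self-containedness and an explicit treatment of the one genuinely delicate point---infinite paths of the composition whose projection onto $\ppa_i$ is finite, i.e.\ the stopping mass of the partial projected strategy---which the cited lemma handles implicitly; the cost is that you shoulder measure-theoretic bookkeeping that the paper's one-line reduction deliberately outsources. Both arguments are sound, and your reduction step is the only part that is actually new relative to \cite{Kwi+13}, which is exactly the paper's point.
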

\begin{example}\label{ex:projection_same_prob}
    Let $\regLang = \{ w \in \{\lab,\altaltlab,\frownie\}^{\infty} \mid \exists i \in \nats: w_i = \frownie\}$ be the language of words in which $\frownie$ occurs. 
    Reconsider the strategy $\strategy$ of $\ppa_1 \parallel \ppa_2$ from \Cref{ex:projectionPPA} and the projection ${\stratProjOfToValuation{\strategy}{{2}}{\valuation}}$ to $\ppa_2$. 
    We have 
    $\PrOf{\ppa_1 \parallel \ppa_2 }{\valuation, \strategy}{\regLang} 
    = \PrOf{\ppa_2}{\valuation, {\stratProjOfToValuation{\strategy}{2}{\valuation}}}{\regLang} 
    = (\valuation(p))^2 \cdot \frac{1}{10} + \valuation(p) \cdot(1-\valuation(p)) \cdot \valuation(q))$. 
\end{example}
Theorem~\ref{theo:lemma3NewDependentPre} assumes that the property only involves action labels from a single component $\ppa_i$.
To allow objectives over arbitrary alphabets $\alphabetOf{}$, we can add a self-loop labeled $\lab$ at every state, for each label $\lab \notin \alphabetOf{i}$.
\begin{definition}[Alphabet Extension]\label{def_alphabet_extension}
Let $\ppa = \ppaTupleOf{}$ be a pPA over $\alphabetOf{\ppa}$ and let $\alphabetOf{}$ be an alphabet with $\actSetOf{} \cap (\alphabetOf{} \setminus \alphabetOf{\ppa}) = \emptyset$.
The \emph{alphabet extension} of $\ppa$ with respect to $\alphabetOf{}$ is the pPA
$\alphabetExtensionOfTo{\ppa}{\alphabetOf{}} = (\stateSetOf{}, \initialOf{}, \parameterSetOf{}, \actSetOf{} \cupdot (\alphabetOf{} \setminus \alphabetOf{\ppa}), \transFctOf{\alphabetOf{}}, \syncOf{\alphabetOf{}})$ over alphabet $\alphabetOf{\ppa} \cup \alphabetOf{}$, where
\begin{itemize}
\item $\transFctOf{\alphabetOf{}}(s,\alpha) = \transFctOf{}(s,\alpha)$ and $\syncOf{\alphabetOf{}}(s,\alpha) = \syncOf{}(s,\alpha)$ for all $(s,\alpha) \in \domain(\transFctOf{})$ and
\item $\transFctOf{\alphabetOf{}}(s,\lab) = \indicatorFct{s}$ and $\syncOf{\alphabetOf{}}(s,\lab) = \lab$ for all $s \in \stateSetOf{}$ and $\lab \in \alphabetOf{}\setminus\alphabetOf{\ppa}$.
\end{itemize}
\end{definition}
\begin{figure}[ht]
    	\renewcommand{\modelType}{\ppa}
    	\renewcommand{\extended}{true} 
        \centering
    	{\begin{tikzpicture}[mdp]
	\node[ps, init=left] (t0)  {$t_0$};
	
	\node[ps,above right=0.25 and 2 of t0] (t1)  {$t_1$};

	\node[ps,below right=0.25 and 2 of t0] (t2)  {$t_2$};
	
	\node[ps,right=2 of t1, fill=lightgray] (t3)  {$t_3$};
	
	\node[ps,right=2 of t2] (t4)  {$t_4$};
	
	\node[dist, right=0.5 of t0] (d0a) {};
		
	\path[ptrans]

		(t0) edge[-] node[pos=0.95, below] {\tact{\lab}} (d0a)
		(d0a) edge[bend right=0] node[pos=0.5,above,yshift=0.5ex,xshift=-0.5ex] {\tprob{\ifthenelse{\equal{\modelType}{pa}}{\frac{9}{10}}{1-p}}} (t1)
		(d0a) edge[bend right=0] node[below,pos=0.55,yshift=-0.5ex,xshift=-0.5ex] {\tprob{\ifthenelse{\equal{\modelType}{pa}}{\frac{1}{10}}{p}}} node[above,pos=0.4] {} (t2)

		(t1) edge[bend right=0] node[pos=0.25,above] {\tact{\lab}} node[dist, pos=0.25] (d1a) {} node[pos=0.6,above] {\tprob{\ifthenelse{\equal{\modelType}{pa}}{\frac{9}{10}}{q}}} (t3)
		(d1a) edge[bend left=25] node[below,pos=0.25,xshift=-1ex] {\tprob{\ifthenelse{\equal{\modelType}{pa}}{\frac{1}{10}}{1-q}}} node[above,pos=0.4] {} (t4)
		
		(t2) edge[bend right=0] node[pos=0.25,below] {\tact{\altaltlab}} node[dist, pos=0.25] (d1b) {} node[pos=0.6,below] {\tprob{\ifthenelse{\equal{\modelType}{pa}}{\frac{9}{10}}{\frac{9}{10}}}} (t4)
		(d1b) edge[bend right=25] node[above, pos=0.25] {\tprob{\ifthenelse{\equal{\modelType}{pa}}{\frac{1}{10}}{\frac{1}{10}}}} node[above,pos=0.4] {} (t3)

		(t3) edge[loop right] node[pos=0.45,below right] {{\ifthenelse{\equal{\extended}{true}}{\tactext{\altlab},}{}}\tact{\small \frownie}} node[dist] (d3frown) {} node[pos=0.15,below] {\tprob{}} node[pos=0.25,below] {} (t3)
		
		(t4) edge[loop right] node[pos=0.5,above right] {\ifthenelse{\equal{\extended}{true}}{\tactext{\altlab},}{}{\tact{\altaltlab}}} node[dist] (d4b) {} node[pos=0.15,below] {\tprob{}} node[pos=0.25,below] {} (t4)

	;	
	
	{\ifthenelse{\equal{\extended}{true}}{\path[ptrans] (t2) edge[loop above] node[pos=0.75,right] {{\ifthenelse{\equal{\extended}{true}}{\tactext{\altlab}}{}}} node[dist] (d1c) {} (t2);}{}}
	
	{\ifthenelse{\equal{\extended}{true}}{\path[ptrans] (t1) edge[loop above] node[pos=0.75,right] {{\ifthenelse{\equal{\extended}{true}}{\tactext{\altlab}}{}}} node[dist] (d1c) {} (t2);}{}}
	
	{\ifthenelse{\equal{\extended}{true}}{\path[ptrans] (t0) edge[loop above] node[pos=0.75,right] {{\ifthenelse{\equal{\extended}{true}}{\tactext{\altlab}}{}}} node[dist] (d1c) {} (t0);}{}}

\end{tikzpicture}}
    	\caption{Alphabet extension $\alphabetExtensionOfTo{\ppa_2}{\{\lab,\altlab\}}$ of the pPA $\ppa_2$ from \Cref{fig:pPAM2} to the alphabet $\{\lab, \altlab\}$.
        }%
    	\label{fig:ppaExtension}
    \end{figure}
\begin{example}[Alphabet Extension]
 \Cref{fig:ppaExtension} shows $\alphabetExtensionOfTo{\ppa_2}{\{\lab,\altlab\}}$ for pPA $\ppa_2$ over $\alphabetOf{2} = \set{\lab,\altaltlab,\frownie}$ from \Cref{fig:pPAM2}.
 Transitions with label $\lab$ remain unchanged as $\lab \in \alphabetOf{2}$, but an additional self-loop with action $\altlab\not \in \alphabetOf{2}$ is added to every state. 
\end{example}
We now lift \cite[Lemma 3]{Kwi+13} to the parametric setting, covering properties and mo-queries over an alphabet that is not necessarily shared by $\ppa_i$: 
\begin{restatable}{theorem}{restatableLemmaThreeNew}
    \label{theo:lemma3NewDependent}
		Let $\alphabetOf{} \subseteq \alphabetOf{\ppa_1\parallel \ppa_2}$, and $\sigma$ be a strategy for ${{\alphabetExtensionOfTo{\ppa_1}{\alphabetOf{}}} \parallel {\alphabetExtensionOfTo{\ppa_2}{\alphabetOf{}}}}$. 
		Let $\regLang$ be a language over $\alphabetOf{}$ and $\rewFct$ be a reward function over $\alphabetOf{}$. 
		Then, for well-defined valuation $\valuation$:      
        \label{lemma_3_bDependent}
        \label{lemma_3_dDependent}
			\begin{align*}
				& \PrOf{{\alphabetExtensionOfTo{\ppa_i}{\alphabetOf{}}}}{\valuation, {\stratProjOfToValuation{\strategy}{i}{\valuation}}}{\regLang}
				 =  
				\PrOf{{{\alphabetExtensionOfTo{\ppa_1}{\alphabetOf{}}} \parallel {\alphabetExtensionOfTo{\ppa_2}{\alphabetOf{}}}}}{\valuation, \strategy}{\regLang}
				\quad  \text{ and } \quad 
				\ExpTot{{\alphabetExtensionOfTo{\ppa_i}{\alphabetOf{}}}}{\valuation, {\stratProjOfToValuation{\strategy}{i}{\valuation}}}{\rewFct}
				 =  
				\ExpTot{{{\alphabetExtensionOfTo{\ppa_1}{\alphabetOf{}}} \parallel {\alphabetExtensionOfTo{\ppa_2}{\alphabetOf{}}}}}{\valuation, \strategy}{\rewFct}
			\end{align*} 	  
        Let  $\multiobjectiveQuery{}{X}$ be a mo-query over $\alphabetOf{}$. 
	    Then, for any well-defined valuation $\valuation$: 
          \label{theo:lemma3NewMulti}
         \label{lemma_3_eDependent} 
        \label{lemma_3_fDependent} 
		\begin{align*}	
			& { \alphabetExtensionOfTo{\ppa_i}{\alphabetOf{} }}, \valuation, { {\stratProjOfToValuation{\strategy}{i}{\valuation}} } 
            \modelsWrt{}  \multiobjectiveQuery{}{X}
			\quad {\Leftrightarrow} \quad 
			({{\alphabetExtensionOfTo{\ppa_1}{\alphabetOf{}}} \parallel {\alphabetExtensionOfTo{\ppa_2}{\alphabetOf{}}}}), \valuation, {\strategy} \modelsWrt{} \multiobjectiveQuery{}{X} 
		\end{align*}	
\end{restatable}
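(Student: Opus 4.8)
The plan is to reduce the statement to \Cref{theo:lemma3NewDependentPre}, exploiting that after the alphabet extension $\alphabetOf{}$ is contained in each component's alphabet. Write $\ppa_i'$ for $\alphabetExtensionOfTo{\ppa_i}{\alphabetOf{}}$; by \Cref{def_alphabet_extension} its alphabet is $\alphabetOf{\ppa_i} \cup \alphabetOf{} \supseteq \alphabetOf{}$. First I would record that the extension itself is well-defined, since the required $\actSetOf{\ppa_i} \cap (\alphabetOf{} \setminus \alphabetOf{\ppa_i}) = \emptyset$ follows from the composition requirement $\actSetOf{\ppa_i} \cap (\alphabetOf{\ppa_1} \cup \alphabetOf{\ppa_2}) = \emptyset$ together with $\alphabetOf{} \subseteq \alphabetOf{\ppa_1 \parallel \ppa_2}$. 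Since $\strategy$, and hence its projection $\stratProjOfToValuation{\strategy}{i}{\valuation}$, lives on exactly the composition $\ppa_1' \parallel \ppa_2'$, I can instantiate \Cref{theo:lemma3NewDependentPre} with $\ppa_1', \ppa_2'$ in place of $\ppa_1, \ppa_2$; this immediately yields both identities and the objective-wise equivalence for every language $\regLang'$ and reward $\rewFct'$ defined over the full extended alphabet $\alphabetOf{\ppa_i} \cup \alphabetOf{}$.

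It then remains to pass from properties over $\alphabetOf{\ppa_i} \cup \alphabetOf{}$ to properties over the smaller alphabet $\alphabetOf{}$. For a language $\regLang \subseteq \alphabetOf{}^\infty$ I would set $\regLang' = \{ w \mid \restrOfTo{w}{\alphabetOf{}} \in \regLang \}$ over $\alphabetOf{\ppa_i} \cup \alphabetOf{}$ and invoke the projection-commutativity identity $\restrOfTo{(\restrOfTo{\trace}{\widehat{\alphabetOf{}}})}{\alphabetOf{}} = \restrOfTo{\trace}{\alphabetOf{}}$, valid whenever $\alphabetOf{} \subseteq \widehat{\alphabetOf{}}$. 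Taking $\widehat{\alphabetOf{}} = \alphabetOf{\ppa_i} \cup \alphabetOf{}$ gives $\PrOf{\ppa_i'}{\valuation, \stratProjOfToValuation{\strategy}{i}{\valuation}}{\regLang'} = \PrOf{\ppa_i'}{\valuation, \stratProjOfToValuation{\strategy}{i}{\valuation}}{\regLang}$, and likewise on the composed side; chaining these with \Cref{theo:lemma3NewDependentPre} applied to $\regLang'$ proves the probabilistic identity for $\regLang$. For rewards I would instead extend $\rewFct$ by $0$ on $(\alphabetOf{\ppa_i} \cup \alphabetOf{}) \setminus \alphabetOf{}$; as symbols of reward $0$ do not contribute to the accumulated reward of a word, the expected total rewards w.r.t.\ $\rewFct$ and its extension agree on both sides, so the reward identity follows the same way. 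The mo-query equivalence is then immediate: by \Cref{def:satregion} each objective $\generalPredicate{}_j \in \multiobjectiveQuery{}{X}$ is satisfied on $\ppa_i'$ under $\stratProjOfToValuation{\strategy}{i}{\valuation}$ iff its probability or reward meets the threshold, which by the first part equals the value in $\ppa_1' \parallel \ppa_2'$; conjoining over all $j$ yields the claim.

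The main obstacle I anticipate is the careful bookkeeping of which alphabet each trace is projected onto: the same symbolic language or reward denotes genuinely different measures depending on whether it is read over $\alphabetOf{}$ or over $\alphabetOf{\ppa_i} \cup \alphabetOf{}$, so it is precisely the lifting $\regLang \mapsto \regLang'$ (respectively the $0$-extension of $\rewFct$) together with the projection-commutativity identity that makes the reduction to \Cref{theo:lemma3NewDependentPre} sound rather than circular. A secondary point requiring attention is confirming that the self-loops added by the extension compose as intended\textemdash{}every $\lab \in \alphabetOf{}$ becomes a synchronizing label in $\ppa_1' \parallel \ppa_2'$\textemdash{}so that $\ppa_1' \parallel \ppa_2'$ really is the composition $\alphabetExtensionOfTo{\ppa_1}{\alphabetOf{}} \parallel \alphabetExtensionOfTo{\ppa_2}{\alphabetOf{}}$ appearing in the statement and the projection $\stratProjOfToValuation{\strategy}{i}{\valuation}$ is exactly the one to which \Cref{theo:lemma3NewDependentPre} refers.
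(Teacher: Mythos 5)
Your proposal is correct, but it takes a genuinely different route from the paper. The paper proves \Cref{theo:lemma3NewDependentPre} and \Cref{theo:lemma3NewDependent} together in one stroke: since $\valuation$ is well-defined, $\ppa_1[\valuation]$, $\ppa_2[\valuation]$ and $(\ppa_1\parallel\ppa_2)[\valuation]=\ppa_1[\valuation]\parallel\ppa_2[\valuation]$ are non-parametric PAs, and both claims are then read off directly from \cite[Lemma~3]{Kwi+13}, whose non-parametric statement already covers properties over an alphabet not shared by the component (i.e., the alphabet-extension case). You instead derive \Cref{theo:lemma3NewDependent} \emph{from} \Cref{theo:lemma3NewDependentPre}, instantiated at the extended components $\alphabetExtensionOfTo{\ppa_i}{\alphabetOf{}}$, by lifting $\regLang$ to $\regLang'=\set{w \mid \restrOfTo{w}{\alphabetOf{}}\in\regLang}$ over $\alphabetOf{\ppa_i}\cup\alphabetOf{}$ (respectively extending $\rewFct$ by zero) and using the projection-commutativity identity $\restrOfTo{(\restrOfTo{\trace}{\widehat{\alphabetOf{}}})}{\alphabetOf{}}=\restrOfTo{\trace}{\alphabetOf{}}$ for $\alphabetOf{}\subseteq\widehat{\alphabetOf{}}$; your checks that the extension is applicable (the action-alphabet disjointness condition) and that the projection $\stratProjOfToValuation{\strategy}{i}{\valuation}$ is exactly the one taken w.r.t.\ $\alphabetExtensionOfTo{\ppa_1}{\alphabetOf{}}\parallel\alphabetExtensionOfTo{\ppa_2}{\alphabetOf{}}$ are the right things to verify, and the language/reward lifting is sound. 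What the two approaches buy: the paper's proof is a one-line reduction to the cited non-parametric result but leans on that citation covering both the shared-alphabet and extended-alphabet variants; yours is more self-contained in that only the shared-alphabet version needs to be imported, at the cost of the alphabet bookkeeping you correctly identify as the crux.
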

\begin{remark}\label{rem:extension}
    Since alphabet extensions add self-loop transitions for new labels, and thus do not change the system's state, the pPAs ${{\alphabetExtensionOfTo{\ppa_1}{\alphabetOf{}}} \parallel {\alphabetExtensionOfTo{\ppa_2}{\alphabetOf{}}}}$ and $\ppa_1 \parallel \ppa_2$ satisfy the same properties and mo-queries over the alphabet 
    $\alphabetOf{} \subseteq \alphabetOf{\ppa_1 \parallel \ppa_2}$. 
\end{remark}
\Cref{theo:lemma3NewDependentPre,theo:lemma3NewDependent} play a key role in the proof of the AG framework for reasoning about mo-queries and monotonicity, which will be established in \Cref{sec:pag,sec:pag_mono}.

\section{Assume-Guarantee Reasoning for pPA}\label{sec:pag}
Kwiatkowska et al.\ \cite{Kwi+13} introduced assume-guarantee (AG) reasoning proof rules for PA. 
This section extends their proof rules to the parametric setting. 
We first generalize the concept of AG triples to pPAs in \Cref{sec:triples_for_pPA}. 
Then, we extend the asymmetric and circular proof rule in \Cref{sec:AG_for_pPA}. 
\iftoggle{extended}{
    Additional proof rules from \cite{Kwi+13} are presented in \Cref{app:pag_extension}. 
}
{
    Additional proof rules from \cite{Kwi+13} are presented in~\cite{extendedMer+25}. 
}

\subsection{Assume-Guarantee Triples for pPA}\label{sec:triples_for_pPA}
We extend compositional reasoning to the parametric setting by generalizing assume-guarantee (AG) triples. Intuitively, an AG triple states that if a component satisfies an assumption, it also satisfies the guarantee under the same strategy and valuation. 
\begin{definition}[AG Triple]\label{def_pag_triple}
    The \emph{assume-guarantee triple} for $\ppa$, (parametric) mo-queries $\multiobjectiveQuery{}{A}$ (assumption) and $\multiobjectiveQuery{}{G}$ (guarantee), well-defined region $\region$, and $\star \in \{\comp, \prt, \fairC\}$ is 
	\begin{align*}
	\agTriple{\ppa, \region}{\star}{\multiobjectiveQuery{}{A}}{\multiobjectiveQuery{}{G}} \quad 
	& \Leftrightarrow \quad 
    \left(\forall \valuation \in \region : \forall \strategy \in \strategysetOf{\ppa[\valuation]}{\star} : \quad  \ppa, \valuation, {\strategy} \models \multiobjectiveQuery{}{A} \quad \rightarrow  \quad \ppa, \valuation, {\strategy} \models \multiobjectiveQuery{}{G} \right)
	\end{align*}
\end{definition}

\subsection{Assume-Guarantee Rules for pPA}
\label{sec:AG_for_pPA}
We present AG proof rules for the compositional verification of parametric probabilistic automata (pPAs). 
In the remainder of this section, we fix two pPAs $\ppa_1$ and $\ppa_2$ with alphabets $\alphabetOf{1}$, and $\alphabetOf{2}$, respectively. Further, let $\region_i$ be a well-defined region for $\ppa_i$. 

First, we establish the asymmetric proof rule for safety and mo-queries---based on \cite[Theorem 1 and 2]{Kwi+13}, respectively---for pPA. %
\begin{restatable}[Asymmetric Rule]{theorem}{restatableAsymRule}\label{theo:pag_asym_rule}
    Let $\multiobjectiveQuery{}{A}$ and $\multiobjectiveQuery{}{G}$ be mo-queries over $\alphabetOf{\multiobjectiveQuery{}{A}} \subseteq \alphabetOf{1}$ and 
    $\alphabetOf{\multiobjectiveQuery{}{G}} \subseteq \alphabetOf{2} \cup \alphabetOf{\multiobjectiveQuery{}{A}}$, respectively. 
    Let $\decomp_1 \subseteq \setOfdecompsOf{\alphabetOf{1}}$ and $\decomp_2 \subseteq \setOfdecompsOf{\alphabetOf{2} \cup \alphabetOf{\multiobjectiveQuery{}{A}}}$. 
    Then, the two proof rules holds:
    
     \begin{tabularx}{\linewidth}{p{0.4\linewidth}p{0.4\linewidth}} 
                 $
                 \infer{
                     \ppa_1 \parallel \ppa_2,  \regionIntersectionOf{\region_1}{}{\region_2} \modelsWrt{\comp} \multiobjectiveQuery{\safe}{G}
                 } 
                 {
                     \deduce{
                         \agTriple{\alphabetExtensionOfTo{\ppa_2}{\alphabetOf{\multiobjectiveQuery{\safe}{A}}}, \region_2 }{\prt}{\multiobjectiveQuery{\safe}{A}}{\multiobjectiveQuery{\safe}{G}} 
                     }{
                                \deduce{
                                    \ppa_1, \region_1 \modelsWrt{\comp} \multiobjectiveQuery{\safe}{A}
                                    }{}
                    }
                 }
                 $
         & 
          \hspace{30pt}
             $
                 \infer{
                     \ppa_1 \parallel \ppa_2,   \regionIntersectionOf{\region_1}{}{\region_2}
                     \modelsWrt{\fairWrtRegionModel{}{
                     \decomp_1 \cup  \decomp_2}} \multiobjectiveQuery{}{G}
                 }
                 {
                    \deduce{
                        \agTriple{\alphabetExtensionOfTo{\ppa_2}{\alphabetOf{\multiobjectiveQuery{}{A}}}, \region_2}{\fairWrtRegionModel{}{\decomp_2}}{\multiobjectiveQuery{}{A}}{\multiobjectiveQuery{}{G}} 
                    }{
                        \deduce{
                            \ppa_1, \region_1 \modelsWrt{\fairWrtRegionModel{}{\decomp_1}} \multiobjectiveQuery{}{A}
                        }{}
                    }
                 }
              $
     \end{tabularx}
     
\end{restatable}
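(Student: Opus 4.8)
The plan is to prove both rules with one template. Unfolding \Cref{def:satregion} and \Cref{def_pag_triple}, it suffices to fix an arbitrary well-defined valuation $\valuation \in \regionIntersectionOf{\region_1}{}{\region_2}$ together with a strategy $\strategy$ of the composition in the relevant class (complete for the left rule, $\fairWrtRegionModel{}{\decomp_1 \cup \decomp_2}$ for the right one) and to show that the guarantee holds under $\valuation$ and $\strategy$. The overall argument is the classical two-step assume--guarantee chain: first transport the \emph{assumption} from $\ppa_1$ to the composition, then feed it into the AG triple for $\ppa_2$ to obtain the \emph{guarantee}, and finally transport the guarantee back to the composition. To keep a single object throughout, I would fix the common alphabet $\alphabetOf{} = \alphabetOf{\multiobjectiveQuery{}{A}} \cup \alphabetOf{\multiobjectiveQuery{}{G}}$ and work with $\alphabetExtensionOfTo{\ppa_1}{\alphabetOf{}} \parallel \alphabetExtensionOfTo{\ppa_2}{\alphabetOf{}}$. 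Since $\alphabetOf{\multiobjectiveQuery{}{G}} \subseteq \alphabetOf{2} \cup \alphabetOf{\multiobjectiveQuery{}{A}}$, a short set computation gives $\alphabetOf{\multiobjectiveQuery{}{G}} \setminus \alphabetOf{2} \subseteq \alphabetOf{\multiobjectiveQuery{}{A}} \setminus \alphabetOf{2}$, hence $\alphabetExtensionOfTo{\ppa_2}{\alphabetOf{}} = \alphabetExtensionOfTo{\ppa_2}{\alphabetOf{\multiobjectiveQuery{}{A}}}$, exactly the automaton appearing in the triple; moreover $\alphabetOf{} \subseteq \alphabetOf{\ppa_1 \parallel \ppa_2}$, so by \Cref{rem:extension} it suffices to prove the conclusion for the extended composition.

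The core chain, for the fixed $\valuation$ and $\strategy$, runs as follows. I project $\strategy$ onto the first component to obtain $\stratProjOfToValuation{\strategy}{1}{\valuation}$ and use the left premise ($\ppa_1, \region_1 \modelsWrt{\comp} \multiobjectiveQuery{}{A}$ resp. $\ppa_1, \region_1 \modelsWrt{\fairWrtRegionModel{}{\decomp_1}} \multiobjectiveQuery{}{A}$) to conclude $\alphabetExtensionOfTo{\ppa_1}{\alphabetOf{}}, \valuation, \stratProjOfToValuation{\strategy}{1}{\valuation} \modelsWrt{} \multiobjectiveQuery{}{A}$, passing between $\ppa_1$ and its extension via \Cref{rem:extension} as $\alphabetOf{\multiobjectiveQuery{}{A}} \subseteq \alphabetOf{1}$. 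Applying the mo-query preservation result \Cref{theo:lemma3NewDependent} with $i=1$ turns this into $\alphabetExtensionOfTo{\ppa_1}{\alphabetOf{}} \parallel \alphabetExtensionOfTo{\ppa_2}{\alphabetOf{}}, \valuation, \strategy \modelsWrt{} \multiobjectiveQuery{}{A}$, i.e. the assumption holds in the composition. Reading \Cref{theo:lemma3NewDependent} in the other direction with $i=2$ yields $\alphabetExtensionOfTo{\ppa_2}{\alphabetOf{}}, \valuation, \stratProjOfToValuation{\strategy}{2}{\valuation} \modelsWrt{} \multiobjectiveQuery{}{A}$. The AG triple (\Cref{def_pag_triple}) for $\alphabetExtensionOfTo{\ppa_2}{\alphabetOf{\multiobjectiveQuery{}{A}}} = \alphabetExtensionOfTo{\ppa_2}{\alphabetOf{}}$, instantiated at $\valuation \in \region_2$ and the projected strategy, then converts the assumption into the guarantee: $\alphabetExtensionOfTo{\ppa_2}{\alphabetOf{}}, \valuation, \stratProjOfToValuation{\strategy}{2}{\valuation} \modelsWrt{} \multiobjectiveQuery{}{G}$. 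A final application of \Cref{theo:lemma3NewDependent} (with $i=2$) lifts this to $\alphabetExtensionOfTo{\ppa_1}{\alphabetOf{}} \parallel \alphabetExtensionOfTo{\ppa_2}{\alphabetOf{}}, \valuation, \strategy \modelsWrt{} \multiobjectiveQuery{}{G}$, and \Cref{rem:extension} transports it back to $\ppa_1 \parallel \ppa_2$.

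The two rules differ only in how the strategy classes are matched when invoking the premises and the triple, and this is where I would be most careful. For the \emph{safety} rule the outer quantification is over complete strategies, yet projections need not be complete (cf.\ \Cref{ex:projectionPA}); here I would invoke \Cref{lemma_safety_partial_vs_complete} to replace $\modelsWrt{\comp}$ by $\modelsWrt{\prt}$ both in the safe assumption on $\ppa_1$ and in matching the triple's partial-strategy hypothesis, so that the (possibly partial) projections $\stratProjOfToValuation{\strategy}{i}{\valuation}$ are admissible witnesses. For the \emph{fair} rule I would instead rely on \Cref{theo:strategy_projection_partial_fair_preserved}: a $\fairWrtRegionModel{}{\decomp_1 \cup \decomp_2}$ strategy of the composition projects to a $\fairWrtRegionModel{}{\decomp_i}$ strategy of $\ppa_i$, matching exactly the fairness class required by the left premise ($\decomp_1$) and by the triple ($\decomp_2$).

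I expect the main obstacle to be the bookkeeping around alphabet extensions and fairness rather than any single deep step. Concretely one must verify that (a) the common-alphabet extension coincides with the triple's automaton, $\alphabetExtensionOfTo{\ppa_2}{\alphabetOf{}} = \alphabetExtensionOfTo{\ppa_2}{\alphabetOf{\multiobjectiveQuery{}{A}}}$, and that passing from $\ppa_1 \parallel \ppa_2$ to the extended composition preserves not only mo-query satisfaction but also the strategy and fairness classes --- here one uses that the self-loops introduced by the extension are graph-preserving, so the $\decomp_i$-relevant repeated-reachability behaviour is unchanged; and (b) the projection operator is compatible with these extensions, so that the $i=1$ and $i=2$ applications of \Cref{theo:lemma3NewDependent} genuinely refer to the same $\strategy$ and the same projected strategies $\stratProjOfToValuation{\strategy}{i}{\valuation}$. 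Once this alignment is in place, each individual step is a direct instantiation of \Cref{theo:lemma3NewDependent}, \Cref{theo:strategy_projection_partial_fair_preserved}, \Cref{lemma_safety_partial_vs_complete}, or \Cref{rem:extension}.
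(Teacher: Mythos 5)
Your proposal matches the paper's own proof: the same two-step chain (transport the assumption from $\ppa_1$ into the composition via the strategy projection and \Cref{theo:lemma3NewDependent}, discharge the AG triple on $\alphabetExtensionOfTo{\ppa_2}{\alphabetOf{\multiobjectiveQuery{}{A}}}$ under the projected strategy, transport the guarantee back), with \Cref{lemma_safety_partial_vs_complete} handling the complete-versus-partial mismatch in the safety rule and \Cref{theo:strategy_projection_partial_fair_preserved} handling fairness in the other. The only divergence is bookkeeping: you extend both components to $\alphabetOf{\multiobjectiveQuery{}{A}} \cup \alphabetOf{\multiobjectiveQuery{}{G}}$, whereas the paper extends only $\ppa_2$ by $\alphabetOf{\multiobjectiveQuery{}{A}}$ (leaving $\ppa_1$ untouched since $\alphabetOf{\multiobjectiveQuery{}{A}} \subseteq \alphabetOf{1}$), which lets the first premise apply directly to the projection and avoids the extra transfer step from $\ppa_1$ to $\alphabetExtensionOfTo{\ppa_1}{\alphabetOf{}}$ that you correctly flag but leave to be checked.
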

\begin{proof}[Proof sketch]
Let $\valuation \in \regionIntersectionOf{\region_1}{}{\region_2}$, and $\strategy$ be a strategy for the composed pPA $\ppa_1 \parallel \ppa_2$. 
To prove validity of the rule, we need to show that $\ppa_1 \parallel \ppa_2$ instantiated with $\valuation$ satisfies $\multiobjectiveQuery{\safe}{G}$. 
\begin{enumerate}
    \item Since $\valuation \in \region_1$, the first premise implies $\ppa_1, \valuation \modelsWrt{\comp} \multiobjectiveQuery{\safe}{A}$, which is equivalent to $\ppa_1, \valuation \modelsWrt{\prt} \multiobjectiveQuery{\safe}{A}$ by \Cref{lemma_safety_partial_vs_complete}. 
    This implies that $\ppa_1$ under the partial strategy $\stratProjOfToValuation{\strategy}{\ppa_1}{\valuation}$ also satisfies $\multiobjectiveQuery{\safe}{A}$. 
    Since strategies and their projections satisfy the same properties (as shown in \Cref{theo:lemma3NewDependent}), we conclude that $\ppa_1 \parallel \ppa_2$ instantiated at $\valuation$ under the strategy $\strategy$ satisfies $\multiobjectiveQuery{\safe}{A}$. 
    \item As $\valuation \in \region_2$, the second premise implies that $\ppa_2[\valuation]$ under the strategy $\stratProjOfToValuation{\strategy}{\ppa_2}{\valuation}$ satisfies $\multiobjectiveQuery{\safe}{G}$. 
    Again,  \Cref{theo:lemma3NewDependent} implies that $(\ppa_1 \parallel \ppa_2)[\valuation]$ under the strategy $\strategy$ satisfies $\multiobjectiveQuery{\safe}{G}$.
\end{enumerate}
Thus, we conclude that $\ppa_1 \parallel \ppa_2$ instantiated at $\valuation$ under $\strategy$ satisfies $\multiobjectiveQuery{\safe}{G}$. 
The rule on the right holds by a similar reasoning, where, in addition, \Cref{theo:strategy_projection_partial_fair_preserved} is used to establish that projections of fair strategies remain fair. 
\end{proof}

\begin{example}[Asymmetric Rule]
    We illustrate the left proof rule from \Cref{theo:pag_asym_rule} for the pPA $\ppa_1 \parallel \ppa_2$ in \Cref{fig:pPA_composition}---composed of the pPAs $\ppa_1$ and $\ppa_2$ depicted in \Cref{fig:ppa1_and_ppa2}---and w.r.t.\ $\multiobjectiveQuery{}{G} = \probPredicate{\geq 0.8}{\regLang_G}$, where $\regLang_G = \{ w \in \{\lab, \altlab, \altaltlab, \frownie\}^{\infty} \mid \vert w \vert_{\frownie} = 0 \}$. 
    Let $\multiobjectiveQuery{}{A} = \{\probPredicate{\geq 0.9}{\regLang_A}\}$, where $\regLang_A = \{ w \in \{\lab, \altlab\}^{\infty} \mid  \vert  w  \vert_{\lab} \leq 1  \}$. 
    The pPA $\alphabetExtensionOfTo{\ppa_2}{\{\lab,\altlab\}}$ is depicted in \Cref{fig:ppaExtension}. 
    For the premises of the proof rule, we obtain that the (largest) region $\region$ for which $\ppa_1, \region_1 \modelsWrt{\comp} \probPredicate{\geq 0.9}{\regLang_A}$ is $\region_1 = \{\valuation \colon \{p,q\} \to \reals \mid \valuation(p) \in [0,0.1]\}$ and 
    the (largest) region $\region_2$ for which $\agTriple{\alphabetExtensionOfTo{\ppa_2}{\{\lab,\altlab\}}, \region_2}{\prt}{\multiobjectiveQuery{}{A}}{\multiobjectiveQuery{}{G}}$ holds, is $\region= \{\valuation \colon \{p,q\} \to \reals \mid (\valuation(p) \in [0,0.5], \valuation(q) \in [0,1]) \lor  
    (\valuation(p) \in (0.5,1), \valuation(q) \in [0,2-2\cdot p]) \}$. 
    The intersection $\regionIntersectionOf{\region_1}{}{\region_2}$---for which $\ppa_1 \parallel \ppa_2, \regionIntersectionOf{\region_1}{}{\region_2} \modelsWrt{\prt}  \multiobjectiveQuery{}{G}$ holds by \Cref{theo:pag_asymN_rule}---contains all valuations with $\valuation(p) \in [0,0.1], \valuation(q) \in [0,1]$. 
    The (largest) region $\region$ for which $\ppa_1 \parallel \ppa_2, \region \models^{\prt}  \multiobjectiveQuery{}{G}$ is $\region = \{\valuation \colon {p,q} \to \reals \mid 
    (\valuation(p) \in ([0, \frac{1}{4}] \cup \{1\} ),  \valuation(q) \in [0,1]) \lor (\valuation(p) \in (\frac{1}{4}, 1),  \valuation(q) \in [0,\frac{p+1}{5 \cdot p}]) \}$. 
    This satisfies $(\regionIntersectionOf{\region_1}{}{\region_2}) \subset \region$. 
\end{example}
\iftoggle{extended}{
   The proof rules in \Cref{theo:pag_asym_rule} can be extended to systems with more than two components, as detailed in \Cref{theo:pag_asymN_rule} in \Cref{app:pag_extension}. 
}{
    The proof rules in \Cref{theo:pag_asym_rule} can be extended to systems with more than two components, as detailed in \cite{extendedMer+25}. 
}
Next, we lift the circular proof rule given in \cite[Theorem 5]{Kwi+13} to pPAs: 
\begin{restatable}[Circular Rule]{theorem}{restatableCircRule}\label{theo:pag_circ_rule}
    Let $\multiobjectiveQuery{}{A}_1$,$\multiobjectiveQuery{}{A}_2$ and $\multiobjectiveQuery{}{G}$ be (parametric) mo-queries over $\alphabetOf{\multiobjectiveQuery{}{A}_1} \subseteq \alphabetOf{2}$, $\alphabetOf{\multiobjectiveQuery{}{A}_2} \subseteq \alphabetOf{1} \cup \alphabetOf{\multiobjectiveQuery{}{A}_1}$ and  
    $\alphabetOf{\multiobjectiveQuery{}{G}} \subseteq \alphabetOf{2} \cup \alphabetOf{\multiobjectiveQuery{}{A}_2}$, respectively. 
    Let $\decomp_i \in \setOfdecompsOf{\alphabetOf{i}\cup \alphabetOf{\multiobjectiveQuery{}{A}_{i}}}$ for $i \in \{1,2\}$, and $\decomp_3 \in \setOfdecompsOf{\alphabetOf{2}}$. 
    Then: 
    
    \begin{tabularx}{\linewidth}{p{0.45\linewidth} p{0.45\linewidth}} 

                $
                \infer{
                    \ppa_1 \parallel \ppa_2, \regionIntersectionOf{\region_1}{\region_2}{\region_3} \modelsWrt{\comp} \multiobjectiveQuery{\safe}{G}
                } 
                {
                    \deduce{
                        \deduce{
                            \ppa_2, \region_3 \modelsWrt{\comp} \multiobjectiveQuery{\safe}{A}_1
                            }{
                                \agTriple{\alphabetExtensionOfTo{\ppa_2}{\alphabetOf{\multiobjectiveQuery{\safe}{A}_{2}}}, \region_2}{\prt}{\multiobjectiveQuery{\safe}{A}_{2}}{\multiobjectiveQuery{\safe}{G}} 
                            }
                    }{\agTriple{\alphabetExtensionOfTo{\ppa_1}{\alphabetOf{\multiobjectiveQuery{\safe}{A}_{1}}}, \region_1 }{\prt}{\multiobjectiveQuery{\safe}{A}_{1}}{\multiobjectiveQuery{\safe}{A}_2} 
                    }
                }
                $
        & 
        \hspace{8pt}
               $
                \infer{
                    \ppa_1 \parallel \ppa_2, \regionIntersectionOf{\region_1}{\region_2}{\region_3} \modelsWrt{\fairWrtRegionModel{}{\decomp_1 \cup \decomp_2 \cup \decomp_3}} \multiobjectiveQuery{}{G}
                } 
                {
                    \deduce{
                        \deduce{
                            \ppa_2, \region_3 \modelsWrt{\fairWrtRegionModel{}{\decomp_3}} \multiobjectiveQuery{}{A}_1
                            }{
                                \agTriple{\alphabetExtensionOfTo{\ppa_2}{\alphabetOf{\multiobjectiveQuery{}{A}_{2}}}, \region_2}{\fairWrtRegionModel{}{\decomp_2}}{\multiobjectiveQuery{}{A}_{2}}{\multiobjectiveQuery{}{G}} 
                            }
                    }{\agTriple{\alphabetExtensionOfTo{\ppa_1}{\alphabetOf{\multiobjectiveQuery{}{A}_{1}}}, \region_1 }{\fairWrtRegionModel{}{\decomp_1}}{\multiobjectiveQuery{}{A}_{1}}{\multiobjectiveQuery{}{A}_2} 
                    }
                }
                $
    \end{tabularx}
\end{restatable}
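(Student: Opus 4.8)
The plan is to mirror the proof of the asymmetric rule (\Cref{theo:pag_asym_rule}), but to replace its single assume--guarantee step by a chain of three steps that discharges the apparently circular dependency between the premises. Fix a valuation $\valuation \in \regionIntersectionOf{\region_1}{\region_2}{\region_3}$ and a strategy $\strategy$ for $\ppa_1 \parallel \ppa_2$ that is complete (left rule) or $\fairWrtRegionModel{}{\decomp_1 \cup \decomp_2 \cup \decomp_3}$ (right rule); by \Cref{def_pag_triple} and \Cref{def:satregion} it suffices to show that $(\ppa_1 \parallel \ppa_2)[\valuation]$ under $\strategy$ satisfies $\multiobjectiveQuery{\safe}{G}$ (resp.\ $\multiobjectiveQuery{}{G}$). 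The circularity is only apparent: the third premise establishes $\multiobjectiveQuery{\safe}{A}_1$ on $\ppa_2$ \emph{unconditionally}, which induces a well-founded order $\multiobjectiveQuery{\safe}{A}_1 \rightsquigarrow \multiobjectiveQuery{\safe}{A}_2 \rightsquigarrow \multiobjectiveQuery{\safe}{G}$ along which I would propagate satisfaction, moving between the composition and a single component each time via \Cref{theo:lemma3NewDependent}.

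Concretely, first I would bootstrap the assumption: since $\valuation \in \region_3$ and $\alphabetOf{\multiobjectiveQuery{}{A}_1} \subseteq \alphabetOf{2}$, the third premise together with \Cref{lemma_safety_partial_vs_complete} (needed because the projection $\stratProjOfToValuation{\strategy}{2}{\valuation}$ is in general only \emph{partial}) yields $\ppa_2, \valuation, \stratProjOfToValuation{\strategy}{2}{\valuation} \models \multiobjectiveQuery{\safe}{A}_1$. Applying \Cref{theo:lemma3NewDependent} lifts this to the composition and then projects it onto $\alphabetExtensionOfTo{\ppa_1}{\alphabetOf{\multiobjectiveQuery{\safe}{A}_1}}$, so that the $\ppa_1$-projection satisfies $\multiobjectiveQuery{\safe}{A}_1$. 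Now $\valuation \in \region_1$ lets me invoke the first AG triple to obtain $\multiobjectiveQuery{\safe}{A}_2$ on that same $\ppa_1$-projection; transferring back to the composition and projecting onto $\alphabetExtensionOfTo{\ppa_2}{\alphabetOf{\multiobjectiveQuery{\safe}{A}_2}}$ (again by \Cref{theo:lemma3NewDependent}), the second AG triple ($\valuation \in \region_2$) gives $\multiobjectiveQuery{\safe}{G}$ on the $\ppa_2$-projection, and a last application of \Cref{theo:lemma3NewDependent} together with \Cref{rem:extension} lifts $\multiobjectiveQuery{\safe}{G}$ back to $(\ppa_1 \parallel \ppa_2)[\valuation]$ under $\strategy$, as required. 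For the right rule the same chain works verbatim on non-safety mo-queries, except that \Cref{lemma_safety_partial_vs_complete} is no longer available; instead each projection step invokes \Cref{theo:strategy_projection_partial_fair_preserved} to certify that the relevant projection of $\strategy$ is $\fairWrtRegionModel{}{\decomp_3}$ (for premise three), $\fairWrtRegionModel{}{\decomp_1}$ (for the first triple), and $\fairWrtRegionModel{}{\decomp_2}$ (for the second triple), matching the fairness annotations required by the three premises.

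The main obstacle I expect is the alphabet bookkeeping. \Cref{theo:lemma3NewDependent} requires \emph{both} components to be extended to one common alphabet, whereas the premises are phrased over the minimal extensions $\alphabetExtensionOfTo{\ppa_i}{\alphabetOf{\multiobjectiveQuery{\safe}{A}_i}}$; I would therefore embed all three steps into a uniform extension to some $\alphabetOf{} \supseteq \alphabetOf{\multiobjectiveQuery{}{A}_1} \cup \alphabetOf{\multiobjectiveQuery{}{A}_2} \cup \alphabetOf{\multiobjectiveQuery{}{G}}$ and argue, using the conservativity of alphabet extension (\Cref{rem:extension}), that padding with additional self-loop labels changes neither the validity of the premise triples nor the projected strategies. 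A secondary subtlety, for the right rule, is that $\decomp_1 \cup \decomp_2 \cup \decomp_3$ mixes label sets living in the extended alphabets of both components (recall $\alphabetOf{\multiobjectiveQuery{}{A}_1}\subseteq\alphabetOf{2}$ and $\alphabetOf{\multiobjectiveQuery{}{A}_2}\subseteq \alphabetOf{1}\cup\alphabetOf{\multiobjectiveQuery{}{A}_1}$), so I must check that \Cref{theo:strategy_projection_partial_fair_preserved} still distributes the correct sub-decomposition to each component under these extensions. Finally, I would verify explicitly that the dependency chain is acyclic---no premise is ever used to discharge its own assumption---which is precisely what the unconditional third premise guarantees.
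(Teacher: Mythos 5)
Your proposal is correct and follows essentially the same route as the paper's proof: bootstrap $\multiobjectiveQuery{}{A}_1$ from the unconditional third premise on the $\ppa_2$-projection, shuttle it through the composition to the $\ppa_1$-projection via \Cref{theo:lemma3NewDependent}, discharge the two AG triples in sequence, and lift $\multiobjectiveQuery{}{G}$ back, using \Cref{lemma_safety_partial_vs_complete} for the safety variant and \Cref{theo:strategy_projection_partial_fair_preserved} for the fair variant. Your "uniform alphabet extension" concern is handled in the paper by working directly in $\alphabetExtensionOfTo{\ppa_1}{\alphabetOf{\multiobjectiveQuery{}{A}_1}} \parallel \alphabetExtensionOfTo{\ppa_2}{\alphabetOf{\multiobjectiveQuery{}{A}_2}}$, whose alphabet already covers $\alphabetOf{\multiobjectiveQuery{}{A}_1}$, $\alphabetOf{\multiobjectiveQuery{}{A}_2}$, and $\alphabetOf{\multiobjectiveQuery{}{G}}$ by the stated inclusions, so no further padding is needed.
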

\begin{proof}[Proof sketch]
Similar to \Cref{theo:pag_asym_rule}, the proof of the circular rules makes use of \Cref{theo:lemma3NewDependent}, which establishes that the composition under a strategy satisfies the same properties as the individual components under their corresponding projections.
For safety, \Cref{lemma_safety_partial_vs_complete} allows us to verify the condition for complete strategies rather than partial strategies in the third premise. 
For fairness, \Cref{theo:strategy_projection_partial_fair_preserved} ensures that strategy projections remain fair. 
\end{proof}
\begin{remark}
    The inclusion of fairness in the premises of the right rules in \Cref{theo:pag_asym_rule} and \Cref{theo:pag_circ_rule} enables recursive application  and thus supports the compositional verification of systems with more than two components. 
    In the case of a single application of one of the rules, it is sufficient to verify with respect to complete strategies, which, while a stronger condition, simplifies the verification process.
\end{remark}

\section{Compositional Reasoning about Monotonicity}\label{sec:pag_mono}
Exploiting monotonicity can significantly enhance the efficiency of parameter synthesis~\cite{Spe+21}. 
However, determining monotonicity is computationally hard\footnote{For deterministic pPA (Markov chains) determining monotonicity is coETR-hard~\cite[Sec.\ 3.4]{Spel23}.} and it would be beneficial to determine monotonicity in a compositional way. 
Additionally, monotonicity in composed pPA is challenging due to the complexities introduced by parameter dependencies and interactions among components. 
While we focus on global monotonicity, the following results can be extended to \emph{local monotonicity}, which considers only the first transition from a given state. See \cite[Definitions 4.4 and 4.5]{Spel23}.

The probability of a language or the expected total reward for a pPA $\ppa$ can be viewed as a function\textemdash{}called \emph{solution function}\textemdash{}that maps a well-defined parameter valuation to the corresponding probability or expected total reward, respectively \cite[Definition~4.7]{Jun20}. 
\begin{definition}[Solution Function]
    Let $\ppa$ be a pPA over $\alphabetOf{}$, let $\strategy \in \strategysetOf{\ppa}{}$ and let $\region$ be a well-defined region.
    The \emph{solution function} for $\ppa$ and language $\regLang \subseteq \alphabetOf{}^\infty$ 
    is $\solutionFctMdpObjective{\ppa, \strategy}{\PrOf{}{}{\regLang}} \colon \region  \to [0,1]$, where $\solutionFctMdpObjective{\ppa, \strategy}{\PrOf{}{}{\regLang}}(\valuation) = 
    \PrOf{\ppa}{\valuation,\strategy}{\regLang}$. 
    The solution function for $\ppa$ and a reward function $\rewFct$ over $\alphabetOf{}$ is $\solutionFctMdpObjective{\ppa, \strategy}{\ExpTot{}{}{\rewFct}} \colon \region  \to \reals_{\geq 0}$, where $\solutionFctMdpObjective{\ppa, \strategy}{\ExpTot{}{}{\rewFct}}(\valuation) = 
    \ExpTot{\ppa}{\valuation,\strategy}{\rewFct}$. 
\end{definition}
When referring to a solution function without specifying whether it pertains to probabilities or expected rewards, we simply write $\solutionFctMdpObjective{\ppa, \strategy}{}$. 

\begin{example}[Solution Function]\label{ex:solfct}
    Consider the pPA $\ppa_1\parallel\ppa_2$ in \Cref{fig:pPA_composition} and the region $\region =\{\valuation \colon \{p,q\} \to [0,1]\}$ which is well-defined for $\ppa_1 \parallel \ppa_2$. 
    Let $\regLang = \{ w \in \{\lab,\altaltlab,\frownie\}^{\infty} \mid \vert w \vert_{\frownie} = 0 \}$ be the language of words over $\{\lab,\altlab,\altaltlab, \frownie\}$ that do not contain $\frownie$. 
    Let $\strategy$ be the complete strategy of $\ppa_1\parallel\ppa_2$ from \Cref{ex:projectionPPA}, which always selects action $\lab,\altaltlab$ or $\frownie$ with probability 1 whenever any of them is enabled; otherwise, it chooses $\altlab$ with probability 1. 
    The solution function $\solutionFctMdpObjective{\ppa_1 \parallel \ppa_2, \strategy}{\PrOf{}{}{\regLang}} \colon \region \to [0,1]$ is defined by 
    $\solutionFctMdpObjective{\ppa_1 \parallel \ppa_2, \strategy}{\PrOf{}{}{\regLang}}(p,q) = 
    1- \left(p^2 \cdot \frac{1}{10} + p \cdot(1-p) \cdot (p\cdot q + (1-p) \cdot q)\right) 
   = 1- \left(p^2 \cdot \frac{1}{10} + (p-p^2) \cdot q\right) 
    $. 
\end{example}

\label{sec:monotonicity}
We extend the standard notion of monotonicity~\cite{Spel23} by differentiating between different strategy classes, including complete, partial, and fair strategies. 
\begin{definition}[Monotonicity]
    \label{def:monotonicity}
    Let $\strategy$ be a strategy of $\ppa$. A solution function $\solutionFctMdpObjective{\ppa,\strategy}{}$ is \emph{monotonic increasing} in $p\in \parameterSetOf{}$ on region $\region$\textemdash{}denoted $\monotonicOnRegionParameter{\buparrow}{\solutionFctMdpObjective{\ppa, {\strategy}}{}}{p}{\region}{}$\textemdash{}if for all $\valuation, \valuation_{+} \in \region$ with $\valuation_{+}(q) = \valuation(q) + x \cdot \iverson{p{=}q}$ for $q \in \parameterSetOf{}$ and some $x \geq 0$, we have: 
    ${\solutionFctMdpObjective{{\ppa},{\strategy}}{}}{}(\valuation) \leq {\solutionFctMdpObjective{{\ppa},{\strategy}}{}}{}(\valuation_{+}).$

    For $\star \in \{ \prt, \comp\}$, 
    we write $\monotonicOnRegionParameter{\buparrow}{\solutionFctMdpObjective{\ppa}{}}{p}{\region}{\star}$ if $\monotonicOnRegionParameter{\buparrow}{\solutionFctMdpObjective{\ppa, {\strategy}}{}}{p}{\region}{}$ for all $\strategy \in \strategysetOf{\ppa}{\star}$.
    If $\region$ is graph-preserving, we write $\monotonicOnRegionParameter{\buparrow}{\solutionFctMdpObjective{\ppa}{}}{p}{\region}{\fairWrtRegionModel{}{\decomp}}$ if $\monotonicOnRegionParameter{\buparrow}{\solutionFctMdpObjective{\ppa, {\strategy}}{}}{p}{\region}{}$ holds for all fair strategies $\strategy \in \strategysetOf{\ppa[\valuation]}{\fairWrtRegionModel{}{\decomp}}$, $\valuation \in \region$.

    \noindent Notations $\monotonicOnRegionParameter{\bdownarrow}{\solutionFctMdpObjective{\ppa, {\strategy}}{}}{p}{\region}{}$ and $\monotonicOnRegionParameter{\bdownarrow}{\solutionFctMdpObjective{\ppa}{}}{p}{\region}{\star}$ for \emph{monotonic decreasing} $\solutionFctMdpObjective{\ppa,\strategy}{}$ are defined analogously. 
\end{definition}
We require the region to be graph-preserving when defining monotonicity w.r.t.\ fair strategies. 
This ensures that for any two valuations, $\valuation, \valuation_+$, we have $\strategysetOf{\ppa[\valuation]}{\fairWrtRegionModel{}{\decomp}}= \strategysetOf{\ppa[\valuation_+]}{\fairWrtRegionModel{}{\decomp}}$; see \Cref{theo:graph_preserving_fairness}. 
\begin{remark}
     \iftoggle{extended}{
        Monotonicity for partial strategies w.r.t.\ general properties is equivalent to monotonicity for complete strategies in a modified pPA; see \Cref{app:equivalence_partial_complete_mono}. 
    }
    {
        Monotonicity for partial strategies w.r.t.\ general properties is equivalent to monotonicity for complete strategies in a modified pPA; see~\cite{extendedMer+25}. 
    }
\end{remark}

The following theorem states that  monotonicity of a composed system can be verified by analyzing its individual components. 
\begin{restatable}[Monotonicity]{theorem}{restatablMonoRule}\label{theo:pag_mono_rule}
    Let $\ppa_1$, $\ppa_2$ be pPAs with alphabets $\alphabetOf{1}$ and $\alphabetOf{2}$ and $\region_i$ be a graph-preserving region for $\ppa_i$. 
    Let $\solutionFctMdpObjective{}{} \in \{\solutionFctMdpObjective{}{\PrOf{}{}{\regLang}}, \solutionFctMdpObjective{}{\ExpTot{}{}{\rewFct}} \}$ 
	be a solution function w.r.t.\ the language $\regLang$ or reward function $\rewFct$ over $\alphabetOf{} \subseteq (\alphabetOf{1}\cup \alphabetOf{2})$ and 
    let $\budarrow \in \{\buparrow, \bdownarrow\}$. 
    Let $\decomp_i \subseteq \setOfdecompsOf{\alphabetOf{1} \cup {\alphabetOf{}}}$. 
    Then the following two proof rules hold:

    \begin{tabularx}{\linewidth}{p{0.45\linewidth}p{0.45\linewidth}} 
            \hspace{20pt}
                $
                \infer{
                    \monotonicOnRegionParameter{\budarrow}{\solutionFctMdpObjective{\ppa_1 \parallel \ppa_2}{}}{p}{\regionIntersectionOf{\region_1}{}{\region_2}}{\prt}
                } 
                {
                    \deduce{\monotonicOnRegionParameter{\budarrow}{\solutionFctMdpObjective{\alphabetExtensionOfTo{\ppa_2}{\alphabetOf{}}}{}}{p}{\region_2}{\prt}
                    }{
                        \monotonicOnRegionParameter{\budarrow}{\solutionFctMdpObjective{\alphabetExtensionOfTo{\ppa_1}{\alphabetOf{}}}{}}{p}{\region_1}{\prt} 
                    }
                }
                $
        & 
       \hspace{20pt}
       $
       \infer{
           \monotonicOnRegionParameter{\budarrow}{\solutionFctMdpObjective{\ppa_1 \parallel \ppa_2}{}}{p}{\regionIntersectionOf{\region_1}{}{\region_2}}{\fairWrtRegionModel{}{\decomp_1 \cup \decomp_2}}
       } 
       {
           \deduce{\monotonicOnRegionParameter{\budarrow}{\solutionFctMdpObjective{\alphabetExtensionOfTo{\ppa_2}{\alphabetOf{}}}{}}{p}{\region_2}{\fairWrtRegionModel{}{\decomp_2}} 
           }{
               \monotonicOnRegionParameter{\budarrow}{\solutionFctMdpObjective{\alphabetExtensionOfTo{\ppa_1}{\alphabetOf{}}}{}}{p}{\region_1}{\fairWrtRegionModel{}{\decomp_1}}  
           }
       }
       $
    \end{tabularx} 
\end{restatable}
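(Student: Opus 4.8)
The plan is to prove both rules at once by fixing an arbitrary strategy $\strategy \in \strategysetOf{\ppa_1 \parallel \ppa_2}{\star}$ ($\star = \prt$ for the left rule, $\star$ a fairness class for the right one) together with two valuations $\valuation, \valuation_{+} \in \regionIntersectionOf{\region_1}{}{\region_2}$ that agree outside $p$ and satisfy $\valuation_{+}(p) \ge \valuation(p)$. Treating the increasing case $\budarrow = \buparrow$ (the decreasing case is symmetric, and reward objectives are handled verbatim using the reward parts of \Cref{theo:lemma3NewDependent}), it suffices to show $\PrOf{\ppa_1\parallel\ppa_2}{\valuation,\strategy}{\regLang} \le \PrOf{\ppa_1\parallel\ppa_2}{\valuation_{+},\strategy}{\regLang}$. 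The central difficulty is that increasing $p$ perturbs the transitions of \emph{both} components simultaneously, so one cannot just project to a single component and invoke its monotonicity: the projected strategy itself depends on the valuation of the other component, hence the strategies at $\valuation$ and at $\valuation_{+}$ would differ. The idea is therefore to move the valuation in $\ppa_1$ and in $\ppa_2$ in two separate steps, passing through the \emph{mixed} instantiation $\alphabetExtensionOfTo{\ppa_1}{\alphabetOf{}}[\valuation_{+}] \parallel \alphabetExtensionOfTo{\ppa_2}{\alphabetOf{}}[\valuation]$, which is itself a genuine non-parametric composed PA. Since $\region_1, \region_2$ are graph-preserving, $\valuation$ and $\valuation_{+}$ induce the same zero-pattern in each $\ppa_i$, so that the hypothesis of \Cref{theo:change_valuation_projection} is met; via \Cref{rem:extension} I first move all measures into $\alphabetExtensionOfTo{\ppa_1}{\alphabetOf{}} \parallel \alphabetExtensionOfTo{\ppa_2}{\alphabetOf{}}$.

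The chain then proceeds as follows. Projecting to the first component at $\valuation$ (\Cref{theo:lemma3NewDependent}) gives $\PrOf{\ppa_1\parallel\ppa_2}{\valuation,\strategy}{\regLang} = \PrOf{\alphabetExtensionOfTo{\ppa_1}{\alphabetOf{}}}{\valuation, \stratProjOfToValuation{\strategy}{1}{\valuation}}{\regLang}$. Freezing the partial strategy $\stratProjOfToValuation{\strategy}{1}{\valuation}$ and raising $p$ only inside $\ppa_1$, the premise $\monotonicOnRegionParameter{\buparrow}{\solutionFctMdpObjective{\alphabetExtensionOfTo{\ppa_1}{\alphabetOf{}}}{}}{p}{\region_1}{\prt}$ yields $\le \PrOf{\alphabetExtensionOfTo{\ppa_1}{\alphabetOf{}}}{\valuation_{+}, \stratProjOfToValuation{\strategy}{1}{\valuation}}{\regLang}$. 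By \Cref{theo:change_valuation_projection} the projection to $\ppa_1$ is unchanged when its own valuation moves from $\valuation$ to $\valuation_{+}$ while the $\ppa_2$-valuation stays $\valuation$, so $\stratProjOfToValuation{\strategy}{1}{\valuation} = \stratProjOfToValuation{\strategy}{1}{\valuation_{+},\valuation}$; applying \Cref{theo:lemma3NewDependent} to the non-parametric composition $\alphabetExtensionOfTo{\ppa_1}{\alphabetOf{}}[\valuation_{+}]\parallel\alphabetExtensionOfTo{\ppa_2}{\alphabetOf{}}[\valuation]$ (using \Cref{def:projectionStrategy}, and that alphabet extension commutes with instantiation) identifies this with the mixed-composition measure $\PrOf{\ppa_1[\valuation_{+}]\parallel\ppa_2[\valuation]}{\strategy}{\regLang}$. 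Re-projecting the \emph{same} mixed composition to the second component gives $\PrOf{\alphabetExtensionOfTo{\ppa_2}{\alphabetOf{}}}{\valuation, \stratProjOfToValuation{\strategy}{2}{\valuation_{+},\valuation}}{\regLang}$; a second use of \Cref{theo:change_valuation_projection} (now freezing the $\ppa_1$-valuation at $\valuation_{+}$) rewrites $\stratProjOfToValuation{\strategy}{2}{\valuation_{+},\valuation} = \stratProjOfToValuation{\strategy}{2}{\valuation_{+}}$, the premise $\monotonicOnRegionParameter{\buparrow}{\solutionFctMdpObjective{\alphabetExtensionOfTo{\ppa_2}{\alphabetOf{}}}{}}{p}{\region_2}{\prt}$ raises $p$ inside $\ppa_2$ to give $\le \PrOf{\alphabetExtensionOfTo{\ppa_2}{\alphabetOf{}}}{\valuation_{+}, \stratProjOfToValuation{\strategy}{2}{\valuation_{+}}}{\regLang}$, and \Cref{theo:lemma3NewDependent} with \Cref{rem:extension} returns this to $\PrOf{\ppa_1\parallel\ppa_2}{\valuation_{+},\strategy}{\regLang}$. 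Concatenating the two inequalities proves the claim for the fixed $\strategy$, and arbitrariness of $\strategy, \valuation, \valuation_{+}$ gives $\monotonicOnRegionParameter{\buparrow}{\solutionFctMdpObjective{\ppa_1 \parallel \ppa_2}{}}{p}{\regionIntersectionOf{\region_1}{}{\region_2}}{\prt}$.

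For the fair rule the identical chain works with two bookkeeping additions. To invoke the component premises I must know the projected strategies are fair, which is exactly \Cref{theo:strategy_projection_partial_fair_preserved}, turning $\fairWrtRegionModel{}{\decomp_1\cup\decomp_2}$-fairness of $\strategy$ into $\fairWrtRegionModel{}{\decomp_i}$-fairness of the projection. Moreover, each fixed-strategy monotonicity step compares the value of one fair strategy at $\valuation$ and at $\valuation_{+}$; this is legitimate because the regions are graph-preserving, so by \Cref{theo:graph_preserving_fairness} the strategy is fair under both instantiations, and the mixed composition $\ppa_1[\valuation_{+}]\parallel\ppa_2[\valuation]$ has the same graph\textemdash{}hence the same fair strategies\textemdash{}as $(\ppa_1\parallel\ppa_2)[\valuation]$.

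I expect the main obstacle to lie in the middle of the chain: correctly recognizing $\alphabetExtensionOfTo{\ppa_1}{\alphabetOf{}}[\valuation_{+}]\parallel\alphabetExtensionOfTo{\ppa_2}{\alphabetOf{}}[\valuation]$ as the non-parametric PA that interpolates between the two single-valuation instantiations, and checking that projecting it to either component reproduces the dependent projection $\stratProjOfToValuation{\strategy}{i}{\valuation_{+},\valuation}$, so that \Cref{theo:lemma3NewDependent} may be applied to it. The remaining work is purely an application of \Cref{theo:change_valuation_projection} to keep each fixed strategy unchanged while only the relevant valuation moves\textemdash{}which is precisely where graph-preservation of $\region_1,\region_2$ is indispensable, as it ensures $\valuation$ and $\valuation_{+}$ are graph-equivalent and thus satisfy the lemma's hypothesis.
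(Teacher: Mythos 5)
Your proposal is correct and follows essentially the same route as the paper's proof: the same two-step valuation shift through the mixed instantiation $\alphabetExtensionOfTo{\ppa_1}{\alphabetOf{}}[\valuation_{+}]\parallel\alphabetExtensionOfTo{\ppa_2}{\alphabetOf{}}[\valuation]$, using \Cref{theo:lemma3NewDependent} to pass between composition and projections, \Cref{theo:change_valuation_projection} (enabled by graph-preservation) to keep each projected strategy fixed while the other component's valuation moves, and \Cref{theo:strategy_projection_partial_fair_preserved} together with \Cref{theo:graph_preserving_fairness} for the fair variant. The only difference is presentational\textemdash{}you argue the inequality directly while the paper phrases it as a proof by contradiction\textemdash{}which does not change the substance.
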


\begin{proof}
We show the premises imply $\monotonicOnRegionParameter{\budarrow}{\solutionFctMdpObjective{\alphabetExtensionOfTo{\ppa_1}{\alphabetOf{}} \parallel \alphabetExtensionOfTo{\ppa_2}{\alphabetOf{}}}{}}{p}{\regionIntersectionOf{\region_1}{}{\region_2}}{\star}$ for $\star \in \set{\prt,\fairWrtRegionModel{}{\decomp_1\cup\decomp_2}}$, which directly implies that $\monotonicOnRegionParameter{\budarrow}{\solutionFctMdpObjective{\ppa_1 \parallel \ppa_2}{}}{p}{\regionIntersectionOf{\region_1}{}{\region_2}}{\star}$ holds, see \Cref{rem:extension}.
We focus on the left rule, i.e., $\star = \prt$.
\begin{remark}
     \iftoggle{extended}{
       The proof for $\star = \fairWrtRegionModel{}{\decomp_1\cup\decomp_2}$ is similar but additionally requires \Cref{theo:strategy_projection_partial_fair_preserved}; see \Cref{app:proofs_of_pag_mono}. 
    }
    {
        The proof for $\star = \fairWrtRegionModel{}{\decomp_1\cup\decomp_2}$ is similar but additionally requires \Cref{theo:strategy_projection_partial_fair_preserved}; see see~\cite{extendedMer+25}. 
    }
\end{remark}
We further consider $\budarrow  = \buparrow$. 
The case $\budarrow =\bdownarrow$ follows analogously. 
Our proof is by contradiction.
        Assume that the premises hold but $\monotonicOnRegionParameter{\buparrow}{\solutionFctMdpObjective{\alphabetExtensionOfTo{\ppa_1}{\alphabetOf{}} \parallel \alphabetExtensionOfTo{\ppa_2}{\alphabetOf{}}}{}}{p}{\region_1 \cap \region_2}{\star}$ does not hold. 
        Thus, there is a strategy $\strategy \in  \strategysetOf{(\alphabetExtensionOfTo{\ppa_1}{\alphabetOf{}} \parallel \alphabetExtensionOfTo{\ppa_2}{\alphabetOf{}})}{\star}$ and 
        valuations $\valuation, \valuation_+ \in \region_1 \cap \region_2$
        with $\valuation_{+}(q) = \valuation(q) + x \cdot \iverson{p{=}q}$ for $q \in \parameterSetOf{}$ and some $x \geq 0$ and 
        \begin{align}\label{proof:contradiction}
            & {\solutionFctMdpObjective{{(\alphabetExtensionOfTo{\ppa_1}{\alphabetOf{}} \parallel \alphabetExtensionOfTo{\ppa_2}{\alphabetOf{}})},{\strategy}}{}}{}(\valuation) > 	
            {\solutionFctMdpObjective{{(\alphabetExtensionOfTo{\ppa_1}{\alphabetOf{}} \parallel \alphabetExtensionOfTo{\ppa_2}{\alphabetOf{}})},{\strategy}}{}}{}(\valuation_{+}). 
        \end{align}
        \Cref{theo:lemma3NewDependent} yields 
        $ {\solutionFctMdpObjective{{(\alphabetExtensionOfTo{\ppa_1}{\alphabetOf{}} \parallel \alphabetExtensionOfTo{\ppa_2}{\alphabetOf{}})},{\strategy}}{}}{}(\valuation) 
        = {\solutionFctMdpObjective{{\alphabetExtensionOfTo{\ppa_1}{\alphabetOf{}}},{{\stratProjOfToValuation{\strategy}{1}{\valuation, \valuation}}}}{}}{}(\valuation)$. 
        We  have $\transFctOf{\alphabetExtensionOfTo{\ppa_1}{\alphabetOf{}}}(s,\alpha,s')[\valuation] = 0$ iff $\transFctOf{\alphabetExtensionOfTo{\ppa_1}{\alphabetOf{}}}(s,\alpha,s')[\valuation_+] = 0$ as $\valuation$ and $\valuation_+$ are graph preserving for $\ppa_1$. 
        Thus, we can apply \Cref{theo:change_valuation_projection} and obtain 
        \begin{align*}
            {\solutionFctMdpObjective{{\alphabetExtensionOfTo{\ppa_1}{\alphabetOf{}}},{{\stratProjOfToValuation{\strategy}{1}{\valuation, \valuation}}}}{}}{}(\valuation) 
            & = {\solutionFctMdpObjective{{\alphabetExtensionOfTo{\ppa_1}{\alphabetOf{}}},{{\stratProjOfToValuation{\strategy}{1}{\valuation_+, \valuation}}}}{}}{}(\valuation) 
            \tag{by \Cref{theo:change_valuation_projection}}\\
            & \leq  {\solutionFctMdpObjective{{\alphabetExtensionOfTo{\ppa_1}{\alphabetOf{}}},{{\stratProjOfToValuation{\strategy}{1}{\valuation_+, \valuation}}}}{}}{}(\valuation_+)
             \tag{ ${\stratProjOfToValuation{\strategy}{1}{\valuation_+, \valuation}}\in  \strategysetOf{\alphabetExtensionOfTo{\ppa_1}{\alphabetOf{}}}{\prt}$,  $\monotonicOnRegionParameter{\buparrow}{\solutionFctMdpObjective{\alphabetExtensionOfTo{\ppa_1}{\alphabetOf{}}}{}}{p}{\region_1}{\prt}$} \\
            & = {\solutionFctMdpObjective{{(\alphabetExtensionOfTo{\ppa_1}{\alphabetOf{}} \parallel \alphabetExtensionOfTo{\ppa_2}{\alphabetOf{}}[\valuation])},{\strategy}}{}}{} (\valuation_+)
             \tag{by \Cref{theo:lemma3NewDependent}}\\
        \end{align*}
        We observe that
        \[\big(\alphabetExtensionOfTo{\ppa_1}{\alphabetOf{}} \parallel \alphabetExtensionOfTo{\ppa_2}{\alphabetOf{}}[\valuation]\big)[\valuation_+]
        ~=~\big(\alphabetExtensionOfTo{\ppa_1}{\alphabetOf{}}[\valuation_+] \parallel \alphabetExtensionOfTo{\ppa_2}{\alphabetOf{}}[\valuation]\big)
        ~=~\big(\alphabetExtensionOfTo{\ppa_1}{\alphabetOf{}}[\valuation_+] \parallel \alphabetExtensionOfTo{\ppa_2}{\alphabetOf{}}\big)[\valuation].\]
        Consequently,
        ${\solutionFctMdpObjective{{(\alphabetExtensionOfTo{\ppa_1}{\alphabetOf{}} \parallel \alphabetExtensionOfTo{\ppa_2}{\alphabetOf{}}[\valuation])},{\strategy}}{}}{} (\valuation_+) = {\solutionFctMdpObjective{{(\alphabetExtensionOfTo{\ppa_1}{\alphabetOf{}}[\valuation_+] \parallel \alphabetExtensionOfTo{\ppa_2}{\alphabetOf{}}},{\strategy}}{}}{} (\valuation)$. 
        By a similar reasoning as above, we obtain    
        \begin{align*}
            {\solutionFctMdpObjective{{(\alphabetExtensionOfTo{\ppa_1}{\alphabetOf{}}[\valuation_+] \parallel \alphabetExtensionOfTo{\ppa_2}{\alphabetOf{}})},{\strategy}}{}}{} (\valuation)
            &= {\solutionFctMdpObjective{{\alphabetExtensionOfTo{\ppa_2}{\alphabetOf{}}}, {{\stratProjOfToValuation{\strategy}{2}{\valuation_+, \valuation}}}}{}}{}(\valuation)
              \tag{by \Cref{theo:lemma3NewDependent}} 
            \\
            & \leq  {\solutionFctMdpObjective{{\alphabetExtensionOfTo{\ppa_2}{\alphabetOf{}}},{{\stratProjOfToValuation{\strategy}{2}{\valuation_+, \valuation}}}}{}}{}(\valuation_+)
             \tag{ ${\stratProjOfToValuation{\strategy}{2}{\valuation_+, \valuation}}\in  \strategysetOf{\alphabetExtensionOfTo{\ppa_2}{\alphabetOf{}}}{\prt}$,  $\monotonicOnRegionParameter{\buparrow}{\solutionFctMdpObjective{\alphabetExtensionOfTo{\ppa_2}{\alphabetOf{}}}{}}{p}{\region_2}{\prt}$} \\
            & = {\solutionFctMdpObjective{{\alphabetExtensionOfTo{\ppa_2}{\alphabetOf{}}}, {{\stratProjOfToValuation{\strategy}{2}{\valuation_+, \valuation_+}}}}{}}{}(\valuation_+) 
         \tag{by \Cref{theo:change_valuation_projection} }\\
        &= {\solutionFctMdpObjective{{(\alphabetExtensionOfTo{\ppa_1}{\alphabetOf{}} \parallel \alphabetExtensionOfTo{\ppa_2}{\alphabetOf{}})},{\strategy}}{}}{}(\valuation_+)  \tag{by \Cref{theo:lemma3NewDependent}}
    \end{align*}
        Thus, $ {\solutionFctMdpObjective{{(\alphabetExtensionOfTo{\ppa_1}{\alphabetOf{}} \parallel \alphabetExtensionOfTo{\ppa_2}{\alphabetOf{}})},{\strategy}}{}}{}(\valuation) \leq {\solutionFctMdpObjective{{(\alphabetExtensionOfTo{\ppa_1}{\alphabetOf{}} \parallel \alphabetExtensionOfTo{\ppa_2}{\alphabetOf{}})},{\strategy}}{}}{}(\valuation_+) $, violating \Cref{proof:contradiction}. 
        \end{proof}

\begin{example}\label{ex:mono_compositional}
    Reconsider the pPA $\ppa_1\parallel\ppa_2$ in \Cref{fig:pPA_composition}, the region $\region =\{\valuation \colon \{p,q\} \to [0,1]\}$, and the language $\regLang = \{ w \in \{a,c,\frownie\}^{\infty} \mid \vert w \vert_{\frownie} = 0 \}$ from \Cref{ex:solfct}. 
    The pPA $\ppa_1\parallel\ppa_2$ is composed of the pPAs $\ppa_1$ and $\ppa_2$ shown in \Cref{fig:ppa1_and_ppa2}. 
    The region $\region$ is well-defined for $\ppa_1$ and $\ppa_2$. 
    We check whether $\solutionFctMdpObjective{\ppa_1 \parallel \ppa_2}{\PrOf{}{}{\regLang}}$ is monotonic in $q$ on $\region$ via \Cref{theo:pag_mono_rule}. 
    Since the premises $\monotonicOnRegionParameter{\bdownarrow}{\solutionFctMdpObjective{\alphabetExtensionOfTo{\ppa_i}{\alphabetOf{\regLang}}}{\PrOf{}{}{\regLang}}}{q}{\region}{\prt}$ for $i \in \{1,2\}$ are satisfied, we conclude $\monotonicOnRegionParameter{\bdownarrow}{\solutionFctMdpObjective{\ppa_1 \parallel \ppa_2}{\PrOf{}{}{\regLang}}}{q}{\region}{\prt}$. 
\end{example}

\section{Related Work}\label{sec:related_work}
Compositional verification has been widely studied in both probabilistic and non-probabilistic systems. 
We summarize key contributions related to our work. 

Jones' rely-guarantee method \cite{Jon83} and Pnueli's compositional proof system \cite{Pnu84} for temporal logic laid the foundation for AG reasoning. 
Subsequent work focused on AG rules for systems using CTL$^*$ \cite{Cla+89} and developed AG reasoning for \emph{reactive modules} \cite{Hen+98,AH+99}. 
Automated AG reasoning techniques---developed by Pasareanu et al.\ \cite{Cob+03, Pas+08}---include learning-based assumption generation. 
More recent work has focused on circular AG reasoning \cite{Elk+18} and verification-repair approaches \cite{Fre+22}. 

AG reasoning has been lifted to probabilistic settings. 
Initial work by de Alfaro et al.\ \cite{Alf+01} introduced AG rules for a probabilistic extension of reactive modules \cite{Hen+98,AH+99}. 
Their model is similar to PA \cite{Seg+95, Seg95}, but limited to synchronous composition. 

Kwiatkowska et al.\ \cite{Kwi+10,For+11} generalized AG verification for PA, allowing more flexible parallel compositions and extending AG reasoning to probabilistic safety properties.   
Their approach reduces AG verification to multi-objective model checking, as proposed by Etessami et al.\ \cite{Ete+08}. 
This was further refined in \cite{Kwi+13}, enabling AG reasoning over a broader class of quantitative properties, including conjunctions over probabilistic liveness and expected rewards. 
Algorithmic learning-based assumption generation techniques \cite{Cob+03, Gup+08} were later adapted for AG reasoning in probabilistic settings \cite{Fen+10, Fen+11, LL19}. 
Other assumption generation approaches include abstraction-refinement methods \cite{Kom+12, Chat+15}, based on the CEGAR paradigm \cite{Cla+00}, and symbolic learning-based methods \cite{He+16, BB18}. 
AG reasoning has been applied to various real-world domains, including service-based workflow verification \cite{Bou+16}, large-scale IT systems \cite{Cal+12}, and autonomous systems with deep neural networks \cite{Pas+18,Pas+23}.

AG reasoning has also been extended to systems with uncertainty, for example, \cite{Zha+16} introduced an AG framework for verifying systems with components modeled by MDPs and \emph{partially observable MDPs} (POMDPs). 
In contrast, our work considers a different type of uncertainty; We extend AG reasoning to parametric probabilistic automata (PA), leveraging research on parametric MDPs \cite{Jun20, Qua+16, Jun+24} and previous AG verification techniques \cite{Kwi+13}. 
Our framework allows to reason about monotonicity \cite{Spel23, Spe+19, Spe+21} in a compositional manner. 
To the best of our knowledge, this the first AG-based compositional verification framework for parametric PA. 
Existing modular proof systems have focused on parametric timed automata \cite{Lac+16} or non-probabilistic parameterized systems \cite{BR+06, SH13, NT16}, where concurrent programs are parameterized by the number of processes or threads in a configured instance. 

Another recent line of research focuses on the \emph{sequential} composition of MDPs rather than parallel decomposition: 
Junges and Spaan \cite{JS22} introduced an abstraction-refinement approach for hierarchical probabilistic models, leveraging parametric MDPs to represent sets of similar subroutines. 
Recent work by Watanabe et al.\ \cite{Wat+24} on mean-payoff games, applies category-theoretic string diagrams to the verification of sequentially composed MDPs.

\section{Conclusion}\label{sec:conclusion}
We presented an assume-guarantee framework for compositional verification of parametric probabilistic automata, building on the proof rules for Segala's PA by Kwiatkowska et al.\ \cite{Kwi+13}. 
In addition, we introduced new compositional proof rules to reason about monotonicity in composed systems. 
These contributions lay the theoretical foundations for modular verification of pPA. 
To the best of our knowledge, these are the first AG proof rules for probabilistic models with parametric transition probabilities. 

Future work involves implementing the framework and demonstrating its effectiveness through case studies. 
Another direction is to deduce additional assume-guarantee rules---for example, reasoning about robust valuations or strategies, i.e, properties of the form: 
$\exists \valuation \in \region: \forall \strategy \in \strategysetOf{\ppa}{\star}: \ppa[\valuation], {\strategy} \modelsWrt{} \multiobjectiveQuery{}{X}$. 
Additionally, interesting directions include the modular verification of other properties, such as long-run average rewards or expected visiting times~\cite{Mer+24}. 
Other areas include extending verification to logics such as parametric LTL~\cite{CK14} and probabilistic CTL. 
Further research could also explore Markov automata with parameters, building on preliminary work in modular reasoning for continuous-time and continuous-space models~\cite{Car+11}. 
Another interesting direction is adapting assume-guarantee reasoning for stochastic games~\cite{Wil15} to a parametric setting.



\bibliography{biblio}

\iftoggle{extended}{

\appendix
\section{Omitted Proofs}
\label{app:proofs}
\subsection{Proofs of \Cref{sec:strat_projections}}
\label{app:proofs_of_strat_projections}

\projectionAlternative*
\begin{proof}
Observe that the set $\bigcup_{\ppath \in (\liftedpaths{(\ppath_i,\alpha_i,s_i)}{\pa_{3-i}})} \cyl(\ppath)$ can be partitioned into disjoint subsets $\cyl\big(\ppath, (\hat\alpha_1,\hat\alpha_2), (s_1,s_2)\big)$ with $\ppath \in (\liftedpaths{\ppath_i}{\pa_{3-i}})$, $\hat\alpha_i = \alpha_i$, and $s_{3-i} \in \stateSetOf{3-i}$.
This yields
\begin{align*}
&~\sum_{s_i \in \stateSetOf{i}} \PrOf{\pa}{\strategy}{\liftedpaths{(\ppath_i,\alpha_i,s_i)}{\pa_{3-i}}}\\
=&~ \sum_{s_i \in \stateSetOf{i}} \bigg(\sum_{\ppath \in (\liftedpaths{\ppath_i}{\pa_{3-i}})} ~\sum_{\substack{(\hat\alpha_1,\hat\alpha_2) \in \actSetOf{\parallel},\\\hat\alpha_i = \alpha_i}}~ \sum_{\substack{s_{3-i} \in \stateSetOf{3-i}}}  \PrOf{\pa}{\strategy}{\ppath,(\hat\alpha_1,\hat\alpha_2),(s_1,s_2)}\bigg)\\
=&~ \sum_{\ppath \in (\liftedpaths{\ppath_i}{\pa_{3-i}})} ~\sum_{\substack{(\hat\alpha_1,\hat\alpha_2) \in \actSetOf{\parallel},\\\hat\alpha_i = \alpha_i}} \PrOf{\pa}{\strategy}{\ppath} \cdot  \strategy(\ppath,(\hat\alpha_1,\hat\alpha_2)) \cdot \overbrace{\sum_{\substack{s_1 \in \stateSetOf{1}\\s_2 \in \stateSetOf{2}}} 
	  \transFctOf{\parallel}\big(\last{\ppath}, (\hat\alpha_1, \hat\alpha_2), (s_1, s_2)\big)}^{=1}.
\end{align*}
\end{proof}

\projectionMeasure*
\begin{proof}
       We show the statement by induction over the length of a path $\ppath_i \in \finPathsOf{\pa_i}$. 
        For $\vert \ppath_i \vert =0$, i.e., $\ppath_i =\initialOf{i} \in \finPathsOf{\pa_i}$, we have $\PrOf{\pa_i}{{\stratProjOfToValuation{\strategy}{i}{}}}{\initialOf{i}} = 1 = \PrOf{\pa}{\strategy}{\liftedpaths{\initialOf{i}}{\pa_{3-i} }}$.  
For the induction step, assume that the statement holds for $\ppath_i'\in \finPathsOf{\pa_i}$ and consider $\ppath_i =  \ppath_i', \alpha_i, s_i \in \finPathsOf{\pa_i}$.
If $\PrOf{\pa}{\strategy}{\liftedpaths{\ppath'_i}{\pa_{3-i}}} = 0$, the induction hypothesis implies $\PrOf{\pa_i}{\stratProjOfToValuation{\strategy}{i}{}}{\ppath'_i} = 0$, yielding $\PrOf{\pa_i}{\stratProjOfToValuation{\strategy}{i}{}}{\ppath_i} = 0=\PrOf{\pa}{\strategy}{\liftedpaths{\ppath_i}{\pa_{3-i}}}$.
Otherwise, we have $\PrOf{\pa}{\strategy}{\liftedpaths{\ppath'_i}{\pa_{3-i}}} > 0$ and get:
        \begin{align*}
         &~   \PrOf{\pa_i}{{\stratProjOfToValuation{\strategy}{i}{}}}{\ppath_i}\\
       ~= &~  \PrOf{\pa_i}{{\stratProjOfToValuation{\strategy}{i}{}}}{\ppath_i'} \cdot \stratProjOfToValuation{\strategy}{i}{}(\ppath_i' , \alpha_i) \cdot \transFctOf{i}(\last{\ppath_i'}, \alpha_i, s_i)\\
       = &~  \PrOf{\pa}{\strategy}{\liftedpaths{\ppath'_i}{\pa_{3-i}}} \cdot \stratProjOfToValuation{\strategy}{i}{}(\ppath_i' , \alpha_i) \cdot \transFctOf{i}(\last{\ppath_i'}, \alpha_i, s_i) \tag{By induction hypothesis}\\
       = &~  \sum_{\hat{s}_i \in \stateSetOf{i}} \PrOf{\pa}{\strategy}{\liftedpaths{(\ppath'_i,\alpha_i,\hat{s}_i)}{\pa_{3-i}}}  \cdot \transFctOf{i}(\last{\ppath_i'}, \alpha_i, s_i) \tag{By \Cref{def:projectionStrategyPA}, simplify}\\
       = &  \sum_{\ppath' \in (\liftedpaths{\ppath'_i}{\pa_{3-i}})}\, \sum_{\substack{(\hat\alpha_1,\hat\alpha_2) \in \actSetOf{\parallel},\\\hat\alpha_i = \alpha_i}}  \PrOf{\pa}{\strategy}{\ppath'} \cdot \strategy(\ppath',(\hat\alpha_1,\hat\alpha_2)) \cdot \transFctOf{i}(\last{\ppath_i'}, \alpha_i, s_i) \tag{By \Cref{lem:projection:alternative}}\\
      = &~ \PrOf{\pa}{\strategy}{\liftedpaths{(\ppath'_i,\alpha_i,s_i)}{\pa_{3-i}}}
	 = \PrOf{\pa}{\strategy}{\liftedpaths{\ppath}{\pa_{3-i}}}.
       \end{align*}
\end{proof}

\changeValuationProjection*
\begin{proof}
We show $\stratProjOfToValuation{\strategy}{1}{\valuation_1, \valuation_2} = \stratProjOfToValuation{\strategy}{1}{\valuation_1', \valuation_2}$. 
The proof for $\stratProjOfToValuation{\strategy}{2}{\valuation_1, \valuation_2} = \stratProjOfToValuation{\strategy}{2}{\valuation_1, \valuation_2'}$ is symmetric.
For simplicity, we write $\ppa[\valuation_1,\valuation_2] = \ppa_1[\valuation_1] \parallel \ppa_2[\valuation_2]$.
Let $\ppath_1 \in \finPathsOf{\ppa_1}$ and $\alpha_1 \in \actSetOf{1}$.
Our restrictions for $\valuation_1$ and $\valuation_1'$ yield $\PrOf{\ppa[\valuation_1,\valuation_2]}{\strategy}{\liftedpaths{\ppath_1}{\ppa_2}} = 0$ iff $\PrOf{\ppa[\valuation_1',\valuation_2]}{\strategy}{\liftedpaths{\ppath_1}{\ppa_2}} = 0$.
Consequently, $\PrOf{\ppa[\valuation_1,\valuation_2]}{\strategy}{\liftedpaths{\ppath_1}{\ppa_2}} = 0$ implies $\stratProjOfToValuation{\strategy}{1}{\valuation_1, \valuation_2}(\ppath_1, \alpha_1) = 0 = \stratProjOfToValuation{\strategy}{1}{\valuation_1', \valuation_2}(\ppath_1, \alpha_1)$.
Otherwise, for $\hat\valuation_1 \in \set{\valuation_1,\valuation_2}$ we have by \Cref{def:projectionStrategyPA}: 
\[
\stratProjOfToValuation{\strategy}{1}{\hat\valuation_1, \valuation_2}(\ppath_1, \alpha_1) =
\frac{\sum_{s_1 \in \stateSetOf{1}}	\PrOf{\ppa[\hat\valuation_1,\valuation_2]}{\strategy}{\liftedpaths{(\ppath_1,\alpha_1,s_1)}{\ppa_{2}}}}{\PrOf{\ppa[\hat\valuation_1,\valuation_2]}{\strategy}{\liftedpaths{\ppath_1}{\ppa_{2}}}}.
\]
We show that all transition probabilities $\transFctOf{1}(s_1,\alpha_1,s_1')[\hat\valuation_1]$ cancel out in the above fraction, i.e., $\stratProjOfToValuation{\strategy}{1}{\hat\valuation_1, \valuation_2}(\ppath_1, \alpha_1)$ is independent of the valuation $\hat\valuation_1 \in \set{\valuation_1,\valuation_2}$, which concludes the proof.

We first introduce some auxiliary notations.
For $\ppath \in \finPathsOf{\ppa}$, let $\pref{\ppath, m}  \in \finPathsOf{\ppa}$ be the prefix of $\ppath$ of length $m\le |\ppath|$. The set of minimal paths $\ppath$ with projection $\restrOfTo{\ppath}{1} = \ppath_1$ is
\[\min\set{\liftedpaths{\ppath_1}{\ppa_2}} = \bigset{\ppath \in (\liftedpaths{\ppath_1}{\ppa_2}) \mathbin{\big|} \pref{\ppath, m} \notin (\liftedpaths{\ppath_1}{\ppa_2}) \text{ for all } m<|\ppath| }.\]
Any path $\ppath = (s_1^0,s_2^0),(\alpha_1^0,\alpha_2^0),(s_1^1,s_2^1), \dots, (s_1^n,s_2^n) \in (\liftedpaths{\ppath_1}{\ppa_2})$ has a prefix $\pref{\ppath,m} \in \min\set{\liftedpaths{\ppath_1}{\ppa_2}}$ and the subsequent transitions all correspond to asynchronous steps of $\ppa_2$, i.e., $s_1^m = s_1^{m+1} = \dots = s_1^n$ and $\alpha_1^m, \alpha_1^{m+1}, \dots, \alpha_1^{n-1} \notin \actSetOf{1}$.
Consequently, if $\ppath$ is initial:
\begin{align*}
&~\PrOf{\ppa[\hat\valuation_1,\valuation_2]}{\strategy}{\ppath}\\
=&~ 
\prod_{\substack{0\le j<m\\\alpha_1^j \in \actSetOf{1}}}\! \transFctOf{1}(s_1^j,\alpha_1^j,s_1^{j+1})[\hat\valuation_1] ~\cdot
\!\prod_{\substack{0\le j<n\\\alpha_2^j \in \actSetOf{2}}}\! \transFctOf{2}(s_2^j,\alpha_2^j,s_2^{j+1})[\valuation_2] ~\cdot 
\!\prod_{\substack{0\le j<n}}\! \strategy(\pref{\ppath,j}, (\alpha_1^j,\alpha_2^j))\\
=&~
\transFctOf{1}(\ppath_1)[\hat\valuation_1] ~\cdot \!\prod_{\substack{0\le j<n\\\alpha_2^j \in \actSetOf{2}}}\! \transFctOf{2}(s_2^j,\alpha_2^j,s_2^{j+1})[\valuation_2] ~\cdot 
\!\prod_{\substack{0\le j<n}}\! \strategy(\pref{\ppath,j}, (\alpha_1^j,\alpha_2^j)),
\end{align*}
where we set 
$\transFctOf{1}(\ppath_1)[\hat\valuation_1] = \prod_{\substack{0\le j<k}}\! \transFctOf{1}(\hat{s}_1^j,\hat\alpha_1^j,\hat{s}_1^{j+1})[\hat\valuation_1] $ for $\ppath_1 = \hat{s}_1^0,\hat\alpha_1^0,\hat{s}_1^1,\dots,\hat{s}_1^k$.
This yields
\[
\frac{\PrOf{\ppa[\valuation_1,\valuation_2]}{\strategy}{\ppath}}{\transFctOf{1}(\ppath_1)[\valuation_1]}
 = \!\prod_{\substack{0\le j<n\\\alpha_2^j \in \actSetOf{2}}}\! \transFctOf{2}(s_2^j,\alpha_2^j,s_2^{j+1})[\valuation_2] ~\cdot 
 \!\prod_{\substack{0\le j<n}}\! \strategy(\pref{\ppath,j}, (\alpha_1^j,\alpha_2^j))
= \frac{\PrOf{\ppa[\valuation'_1,\valuation_2]}{\strategy}{\ppath}}{\transFctOf{1}(\ppath_1)[\valuation'_1]}.
\]
Finally, we combine the above observation with \Cref{lem:projection:alternative} to conclude
\begin{align*}
    \stratProjOfToValuation{\strategy}{1}{\valuation_1, \valuation_2}(\ppath_1, \alpha_1) 
    =&  \frac{\transFctOf{1}(\ppath_1)[\valuation_1]}{\transFctOf{1}(\ppath_1)[\valuation_1]} \cdot  \frac{\displaystyle\sum_{\ppath \in (\liftedpaths{\ppath_1}{\ppa_{2}})}~\sum_{\substack{(\alpha_1,\alpha_2) \in \actSetOf{\parallel}}} \PrOf{\ppa[\valuation_1,\valuation_2]}{\strategy}{\ppath} \cdot \strategy(\ppath,(\alpha_1,\alpha_2))}{\displaystyle\sum_{\ppath \in \min\set{\liftedpaths{\ppath_1}{\ppa_{2}}}} \PrOf{\ppa[\valuation_1,\valuation_2]}{\strategy}{\ppath}}  \\[12pt]
    =&~ \frac{\displaystyle\sum_{\ppath \in (\liftedpaths{\ppath_1}{\ppa_{2}})}~\sum_{\substack{(\alpha_1,\alpha_2) \in \actSetOf{\parallel}}} \dfrac{\PrOf{\ppa[\valuation_1,\valuation_2]}{\strategy}{\ppath}}{\transFctOf{1}(\ppath_1)[\valuation_1]} \cdot \strategy(\ppath,(\alpha_1,\alpha_2))}{\displaystyle\sum_{\ppath \in \min\set{\liftedpaths{\ppath_1}{\ppa_{2}}}} \dfrac{\PrOf{\ppa[\valuation_1,\valuation_2]}{\strategy}{\ppath}}{\transFctOf{1}(\ppath_1)[\valuation_1]}}
    =    \stratProjOfToValuation{\strategy}{1}{\valuation'_1, \valuation_2}(\ppath_1, \alpha_1).
\end{align*}

\end{proof}

\subsection{Proofs of \Cref{sec:verification}}
\label{app:proofs_of_verification}
We show that the verification of safety objectives reduces to maximal reachability properties in a pPA-DFA Product based on \cite[Lemma 1]{Kwi+13}. 
First, we lift the DFA product (see \cite[Definition 12]{Kwi+13}) to pPA. 
\begin{definition}[pPA-DFA Product]\label{def:pPA_dfa_product}
    Given a bad prefix automaton $\bpAutomatonOf{\regLang}$ with $\alphabetOf{\bpAutomatonOf{{\regLang}}} \subseteq  \alphabetOf{\ppa}$, 
    the pPA $\ppa_{\product} = (\ppa \otimes \bpAutomatonOf{\regLang})$ is defined as $\ppa_{\product} = \ppaTupleOf{\product}$, 
    where 
    \begin{itemize}
        \item $\stateSetOf{\product} = \stateSetOf{\ppa} \times \stateSetOf{\bpAutomatonOf{{\regLang}}}$, 
        \item $\initialOf{\product} = (\initialOf{\ppa}, \initialOf{\bpAutomatonOf{\regLang}})$, 
        \item $\parameterSetOf{\product} = \parameterSetOf{\ppa}$,
        \item $\actSetOf{\product} = \actSetOf{\ppa}$,
        \item for each $(s, \alpha) \in \domain(\transFctOf{\ppa})$, with $\syncOf{\ppa}(s, \alpha) = \lab \in \alphabetOf{\bpAutomatonOf{{\regLang}}}$ and $p \in \stateSetOf{\bpAutomatonOf{\regLang}}$ with $q = \transFctOf{\bpAutomatonOf{\regLang}}(p, \lab)$: 
            \[\transFctOf{\product}((s,q), \alpha) = \transFctOf{\ppa}(s, \alpha) \times \indicatorFct{}
                \quad \text{and} \quad \syncOf{\product}((s,q), \alpha) = \lab
            \]
            for each $(s, \alpha) \in \domain(\transFctOf{\ppa})$, with $\syncOf{\ppa}(s, \alpha) = \lab \not \in \alphabetOf{\bpAutomatonOf{{\regLang}}}$ and $p \in \stateSetOf{\bpAutomatonOf{\regLang}}$: 
            \[\transFctOf{\product}((s,q), \alpha) = \transFctOf{\ppa}(s, \alpha) \times \indicatorFct{p}
                \quad \text{and} \quad \syncOf{\product}((s,q), \alpha) = \lab
            \]
\end{itemize}
\end{definition}
It holds that $(\ppa {\otimes} \bpAutomatonOf{\regLang})[\valuation] = {\ppa[\valuation] \otimes\bpAutomatonOf{\regLang} }$. 
Using this, we can lift \cite[Lemma 1]{Kwi+13} to pPAs. 
\begin{lemma}
    \label{theo:prob_safety_pPA_prop_reach_poduct_bijection}
    Let $\ppa$ be a pPA and let $\regLang$ be the language of a safety objective for ${\ppa}$ and let $\valuation$ be a well-defined valuation.  
    Let $\star \in \{\prt,\comp\}$. 
    \begin{itemize}	
        \item There is a bijection $f$ between the strategies
        $\strategysetOf{\ppa[\valuation]}{(\mless,)\star}$ of $\ppa$ and the strategies $\strategysetOf{\product[\valuation]}{(\mless,)\star}$ of pPA-DFA product $\product = \ppa {\otimes} \bpAutomatonOf{\regLang}$
        \item Let $bad_{\regLang} = \{ (s, q) \in \stateSetOf{\product} \mid q \in F \}$ = set of product-states that contain a final state of $\bpAutomatonOf{\regLang}$.  			
        $\strategy \in \strategysetOf{\ppa[\valuation]}{\star}$: 
        \[
            \PrOf{\ppa}{\valuation,\strategy}{\regLang} =  1-\PrOf{\product}{\valuation, f(\strategy)}{\Diamond bad_{\regLang}}
        \]
    \end{itemize}
\end{lemma}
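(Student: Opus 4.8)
The plan is to reduce the parametric claim to the non-parametric result \cite[Lemma 1]{Kwi+13} by exploiting that instantiation commutes with the product construction and that strategies and induced measures are transparent to the choice of valuation. Concretely, \Cref{rem:padef} gives $\strategysetOf{\ppa}{\star} = \strategysetOf{\ppa[\valuation]}{\star}$ and, applied to the product, $\strategysetOf{\product}{\star} = \strategysetOf{\product[\valuation]}{\star}$, together with $\PrOf{\ppa}{\valuation,\strategy}{} = \PrOf{\ppa[\valuation]}{\strategy}{}$ and likewise for $\product$. Since $(\ppa \otimes \bpAutomatonOf{\regLang})[\valuation] = \ppa[\valuation] \otimes \bpAutomatonOf{\regLang}$ holds (as noted just above \Cref{theo:prob_safety_pPA_prop_reach_poduct_bijection}) and $\ppa[\valuation]$ is a non-parametric PA, both items follow once they are established for the fixed instantiation $\ppa[\valuation]$ and its product $\product[\valuation]$. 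I would nonetheless spell out the construction directly, so that the statement is self-contained for pPAs.

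First I would build the bijection $f$. The crucial structural fact is that $\bpAutomatonOf{\regLang}$ is a \emph{deterministic} automaton, so along any path of $\product[\valuation]$ the automaton component is uniquely determined by the trace of the projected $\ppa[\valuation]$-path. Thus, lifting a finite path $\finpath = s_0,\alpha_0,\dots,s_n$ of $\ppa[\valuation]$ to the path $(s_0,q_0),\alpha_0,\dots,(s_n,q_n)$ with $q_0 = \qInitialOf{\bpAutomatonOf{\regLang}}$ and $q_{j+1} = \transRelationOf{\bpAutomatonOf{\regLang}}(q_j, \syncOf{}(s_j,\alpha_j))$---and projecting the automaton component away in the other direction---yields a bijection between the initial finite paths of $\ppa[\valuation]$ and those of $\product[\valuation]$. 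Because $\actSetOf{\product} = \actSetOf{\ppa}$ and, by \Cref{def:pPA_dfa_product}, the enabled actions at $(s,q)$ coincide with those at $s$, this path bijection induces a strategy bijection via $f(\strategy)(\finpath_\product) = \strategy(\restrOfTo{\finpath_\product}{\ppa})$, where $\finpath_\product$ denotes the lift of $\finpath$. It clearly respects the partial/complete constraint and maps a memoryless $\ppa$-strategy to a memoryless $\product$-strategy.

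Second, for the probability equation I would match runs through their traces. By \Cref{def:safety_objective}, a run $\infpath$ of $\ppa[\valuation]$ satisfies $\regLang$ iff no finite prefix of $\traceOf{\infpath}$ is accepted by $\bpAutomatonOf{\regLang}$, which---by determinism of $\bpAutomatonOf{\regLang}$---is equivalent to the lifted run in $\product[\valuation]$ never visiting a state in $bad_{\regLang} = \{(s,q) \mid q \in F\}$. Hence the event $\{\infpath \mid \restrOfTo{\traceOf{\infpath}}{\alphabetOf{}} \in \regLang\}$ corresponds under $f$ exactly to the set of product runs avoiding $bad_{\regLang}$. Since the automaton component carries only Dirac transitions, the lifting is measure preserving on cylinder sets, so $\PrOf{\ppa}{\valuation,\strategy}{\regLang} = \PrOf{\product}{\valuation,f(\strategy)}{\neg\Diamond bad_{\regLang}} = 1 - \PrOf{\product}{\valuation,f(\strategy)}{\Diamond bad_{\regLang}}$.

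The routine parts are the commutation identity and the cylinder-measure bookkeeping; I expect the main obstacle to be verifying that $f$ genuinely respects \emph{every} strategy class. The partial and complete cases are immediate from $\actSetOf{\product}=\actSetOf{\ppa}$ and the trace-level path correspondence, but the memoryless case is delicate: a memoryless strategy on $\product$ may depend on the automaton component $q$, which along $\ppa$ encodes information about the history, so one must argue precisely how the deterministic dependence of $q$ on the trace makes this correspondence compatible with $\strategysetOf{}{\mless,\star}$ on both sides. Once the bijection and the trace-level correspondence are in place, the probability equation follows immediately from the two displayed identities.
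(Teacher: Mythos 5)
Your proof is correct and takes essentially the same route as the paper: the paper's own proof is a two-line reduction to [Kwi+13, Lemma 1] (via [BK08, Thm.\ 10.51]) plus the observation that prefix-closedness of safety objectives is what makes that reduction sound, and you unfold precisely that argument---the commutation $(\ppa \otimes \bpAutomatonOf{\regLang})[\valuation] = \ppa[\valuation] \otimes \bpAutomatonOf{\regLang}$, the path-lifting bijection via determinism of $\bpAutomatonOf{\regLang}$, and the trace-level correspondence between membership in $\regLang$ and avoidance of $bad_{\regLang}$. The one delicate point you flag---that the natural $f$ maps memoryless strategies of $\ppa$ into memoryless strategies of the product injectively but not surjectively, since a memoryless product strategy may branch on the DFA component and hence encode history---is a genuine issue with the $(\mless,)$ variant of the first bullet, but the paper's citation-style proof does not address it either.
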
	
\begin{proof}
    As stated in the proof of \cite[Lemma 1]{Kwi+13}, the statement follows by \cite[Theorem 10.51]{BK08}. 
    \cite[Theorem 10.51]{BK08} can be applied as safety objectives are prefix closed, see \Cref{sec:explain_differences} for a detailed explanation. 
\end{proof}
%
\begin{remark}
    For actions that do not appear in $\regLang$ (and thus not appear in $\bpAutomatonOf{\regLang}$) the state of the automaton remains unchanged.  
    This means paths $\pi$ of $\ppa$ such that $\restrOfTo{\pi}{\regLang}$ is finite are cannot reach the $bad_{\regLang}$-labeled state in $(\ppa \product \bpAutomatonOf{\regLang})$. 
    Thus, for correctness of \Cref{theo:prob_safety_pPA_prop_reach_poduct_bijection}, safety objectives need to be prefix closed. 
\end{remark}

\restatableSafetyPartVsComplete*
\begin{proof}
    As $\region$ is well-defined, \Cref{theo:prob_safety_pPA_prop_reach_poduct_bijection} applies and 
    \Cref{lemma_safety_partial_vs_complete} follows analogously to the argument presented in \cite[Proposition 1]{Kwi+13}. 
\end{proof}

\restatableLemmaThreeNewPre*
\restatableLemmaThreeNew*
We give the proof for \Cref{theo:lemma3NewDependentPre} and \Cref{theo:lemma3NewDependent}: 
\begin{proof}   
    $\valuation$ is well-defined for $\ppa_1$ and $\ppa_2$. Thus, $\ppa_1[\valuation]$,  $\ppa_2[\valuation]$, and $(\ppa_1 \parallel \ppa_2)[\valuation] = (\ppa_1[\valuation]) \parallel (\ppa_2[\valuation])$ are PAs. 
    Then, the claims follow by \cite[Lemma 3]{Kwi+13}. 
\end{proof}


\subsection{Proofs of \Cref{sec:pag}}

\label{app:proofs_of_PAG}
\restatableAsymRule* 
\begin{proof}
    The proof is based on the proof provided for \cite[Theorem 1, Theorem 2]{Kwi+13}. 
    Since $\alphabetOf{\multiobjectiveQuery{(\safe)}{A}} \subseteq \alphabetOf{\ppa_1}$ and $\alphabetOf{\multiobjectiveQuery{(\safe)}{G}} \subseteq \alphabetOf{ \multiobjectiveQuery{(\safe)}{A}}\cup \alphabetOf{ \pa_2}$ hold by assumption, it follows that 
    \begin{align*}
        \alphabetOf{\multiobjectiveQuery{(\safe)}{G}}  
        & \subseteq \alphabetOf{ \multiobjectiveQuery{(\safe)}{A}} \cup \alphabetOf{ \ppa_2} \\
        & \subseteq \alphabetOf{\ppa_1} \cup  \alphabetOf{\ppa_2}  \\
        & = \alphabetOf{\ppa_1 \parallel \ppa_2}. 
    \end{align*} 
    Thus, ${\multiobjectiveQuery{(\safe)}{G}} $ is a valid multiobjective query for $\ppa_1 \parallel \ppa_2$. 
    If $\regionIntersectionOf{\region_{1}}{}{\region_2} = \emptyset$, the conclusion trivially holds. 
    We assume $\regionIntersectionOf{\region_{1}}{}{\region_2} \not = \emptyset$:  
    We show ${\ppa_1}\parallel \alphabetExtensionOfTo{\ppa_2}{\alphabetOf{\multiobjectiveQuery{(\safe)}{A}}}, \regionIntersectionOf{\region_{1}}{}{\region_2} \modelsWrt{\star} \multiobjectiveQuery{(\safe)}{G}$, 
    which directly implies that $\ppa_1 \parallel \ppa_2, \regionIntersectionOf{\region_{1}}{}{\region_2} \modelsWrt{\star} \multiobjectiveQuery{(\safe)}{G}$. 
     \begin{itemize}
         \item First, we prove correctness of the rule involving safety properties: 
        Assuming the premises hold, it follows that 
         \begin{itemize}
             \item $\ppa_1, \regionIntersectionOf{\region_{1}}{}{\region_2} \modelsWrt{\comp} \multiobjectiveQuery{\safe}{A}$, and 
             \item $\agTriple{\alphabetExtensionOfTo{\ppa_2}{\alphabetOf{\multiobjectiveQuery{\safe}{A}}}, \regionIntersectionOf{\region_{1}}{}{\region_2}  }{\prt}{\multiobjectiveQuery{\safe}{A}}{\multiobjectiveQuery{\safe}{G}}$. 
         \end{itemize}
         Let $\valuation \in \regionIntersectionOf{\region_{1}}{}{\region_2}$. 
         From $\ppa_1, \regionIntersectionOf{\region_{1}}{}{\region_2} \modelsWrt{\comp} \multiobjectiveQuery{\safe}{A}$ follows that $\ppa_1[\valuation] \modelsWrt{\comp} \multiobjectiveQuery{\safe}{A}$. 
         
         Let $\strategy \in \strategysetOf{( \ppa_1 \parallel \alphabetExtensionOfTo{\ppa_2}{\alphabetOf{\multiobjectiveQuery{(\safe)}{A}}})[\valuation]}{\comp}$. 
         The strategy $\stratProjOfToValuation{\strategy}{1}{\valuation}$ is a (partial) strategy of $\ppa_1[\valuation]$ (see \Cref{sec:strat_projections}). 
         By \Cref{lemma_safety_partial_vs_complete}: 
         \begin{align*}
             \ppa_1[\valuation] & \modelsWrt{\comp} \multiobjectiveQuery{\safe}{A} \\
             & \Rightarrow \ppa_1[\valuation], {\stratProjOfToValuation{\strategy}{1}{\valuation}} \modelsWrt{\prt} \multiobjectiveQuery{\safe}{A} \\
             &  \Rightarrow \bigl( \ppa_1 \parallel \alphabetExtensionOfTo{\ppa_2}{\alphabetOf{\multiobjectiveQuery{(\safe)}{A}}} \bigr)[\valuation], {\strategy} \modelsWrt{} \multiobjectiveQuery{\safe}{A} 
             && \text{By \Cref{lemma_3_dDependent} } \\
             & \Rightarrow { \alphabetExtensionOfTo{\ppa_2}{\alphabetOf{\multiobjectiveQuery{\safe}{A}} } }[\valuation],{ \stratProjOfToValuation{\strategy}{2}{\valuation} } \modelsWrt{}  \multiobjectiveQuery{\safe}{A}    
             && \text{By \Cref{lemma_3_fDependent}}\\
             & \Rightarrow  \bigl(\alphabetExtensionOfTo{\ppa_2}{\alphabetOf{\multiobjectiveQuery{\safe}{A}} }[\valuation]\bigr), {\stratProjOfToValuation{\strategy}{2}{\valuation}}   \modelsWrt{}  \multiobjectiveQuery{\safe}{G} 
             && \text{$ \stratProjOfToValuation{\strategy}{2}{\valuation} \in \strategysetOf{\alphabetExtensionOfTo{\ppa_2}{\alphabetOf{\multiobjectiveQuery{\safe}{A}} }}{\prt}$} \\
             & && \text{and $\agTriple{\alphabetExtensionOfTo{\ppa_2}{\alphabetOf{\multiobjectiveQuery{\safe}{A}}}, \regionIntersectionOf{\region_{1}}{}{\region_2} }{\prt}{\multiobjectiveQuery{\safe}{A}}{\multiobjectiveQuery{\safe}{G}}$} \\
             & \Rightarrow {(\ppa_1 \parallel \alphabetExtensionOfTo{\ppa_2}{\alphabetOf{\multiobjectiveQuery{(\safe)}{A}}})[\valuation]}, {\strategy} \modelsWrt{} \multiobjectiveQuery{\safe}{G} 
             && \text{By \Cref{lemma_3_fDependent}}   
         \end{align*}  
         \item Next, we prove correctness for the rule involving general properties and fair strategies: 
        Again, the premises imply that 
        \begin{itemize}
            \item ${\ppa_1, \regionIntersectionOf{\region_1}{}{\region_2} \modelsWrt{\fairWrtRegionModel{}{\decomp_1}} \multiobjectiveQuery{}{A}}$
            \item $\agTriple{\alphabetExtensionOfTo{\ppa_2}{\alphabetOf{\multiobjectiveQuery{}{A}}} , \regionIntersectionOf{\region_{1}}{}{\region_2}}{\fairWrtRegionModel{}{\decomp_2}}{\multiobjectiveQuery{}{A}}{\multiobjectiveQuery{}{G}}$
        \end{itemize}
        Let $\valuation \in \regionIntersectionOf{\region_1}{}{\region_2}$. 
        From ${\ppa_1, \regionIntersectionOf{\region_{1}}{}{\region_2} \modelsWrt{\fairWrtRegionModel{}{\decomp_1}} \multiobjectiveQuery{}{A}}$ follows $\ppa_1[\valuation] \modelsWrt{\fairWrtRegionModel{}{\decomp_1}} \multiobjectiveQuery{}{A}[\valuation]$. 

         Let $\strategy \in \strategysetOf{(\ppa_1 \parallel \ppa_2)[\valuation]}{\fairWrtRegionModel{}{\decomp_1 \cup \decomp_2}}$. 
         The strategy $\stratProjOfToValuation{\strategy}{1}{\valuation}$ is a complete strategy of $\ppa_1[\valuation]$ that is $\fairWrtRegionModel{}{\decomp_1}$ by \Cref{theo:strategy_projection_partial_fair_preserved}. 
         Thus, 
         \begin{align*}
             \ppa_1[\valuation] &  \modelsWrt{\fairWrtRegionModel{}{\decomp_1}} \multiobjectiveQuery{}{A}[\valuation]   \\
             & \Rightarrow \ppa_1[\valuation], {\stratProjOfToValuation{\strategy}{1}{\valuation}} \modelsWrt{} {\multiobjectiveQuery{}{A}[\valuation]}  \\
             &  \Rightarrow \left( \ppa_1 \parallel \alphabetExtensionOfTo{\ppa_2}{\alphabetOf{\multiobjectiveQuery{}{A}}}\right)[\valuation], {\strategy} \modelsWrt{}  {\multiobjectiveQuery{}{A}[\valuation]}
             && \text{By \Cref{lemma_3_eDependent} } \\
             & \Rightarrow {\alphabetExtensionOfTo{\ppa_2}{\alphabetOf{\multiobjectiveQuery{}{A}} }}[\valuation], { \stratProjOfToValuation{\strategy}{2}{\valuation}} \modelsWrt{}  {\multiobjectiveQuery{}{A}}[\valuation]
             && \text{By \Cref{lemma_3_fDependent}}\\
             & \Rightarrow  \alphabetExtensionOfTo{\ppa_2}{\alphabetOf{\multiobjectiveQuery{}{A}} }[\valuation], {\stratProjOfToValuation{\strategy}{2}{\valuation}}  \modelsWrt{}  {\multiobjectiveQuery{}{G}}[\valuation]
             && \text{$ \stratProjOfToValuation{\strategy}{2}{\valuation} \in \strategysetOf{{\alphabetExtensionOfTo{\ppa_2}{\alphabetOf{\multiobjectiveQuery{}{A}}}}[\valuation]}{\fairWrtRegionModel{}{\decomp_2}}$} \\
             & && \text{and $\agTriple{\alphabetExtensionOfTo{\ppa_2}{\alphabetOf{\multiobjectiveQuery{}{A}}},\regionIntersectionOf{\region_1}{}{\region_2}}{\fairWrtRegionModel{}{\decomp_2}}{\multiobjectiveQuery{}{A}}{\multiobjectiveQuery{}{G}}$} \\
             & \Rightarrow {\left(\ppa_1 \parallel \alphabetExtensionOfTo{\ppa_2}{\alphabetOf{\multiobjectiveQuery{}{A}}}\right)[\valuation]}, {\strategy} \modelsWrt{} \multiobjectiveQuery{}{G}[\valuation]
             && \text{By \Cref{lemma_3_fDependent}}   
         \end{align*}  

     \end{itemize}
 \end{proof}

\restatableCircRule*
\begin{proof}
    The proof is based on the proof of \cite[Theorem 5]{Kwi+13}. 
    We show the proof the rule on the right. 
    For the rule on the left the proof works analogously. 
    
    Since $\alphabetOf{\multiobjectiveQuery{}{A}_2} \subseteq \alphabetOf{\multiobjectiveQuery{}{A}_1} \cup \alphabetOf{\ppa_1} $ 
    and $\alphabetOf{\multiobjectiveQuery{}{G}} \subseteq  \alphabetOf{\multiobjectiveQuery{}{A}_2} \cup \alphabetOf{\ppa_2}$, it follows that $\alphabetOf{\multiobjectiveQuery{}{G}}  \subseteq \ppa_1 \cup \ppa_2$. 
    Thus, ${\multiobjectiveQuery{}{G}} $ is a valid multiobjective query for $\ppa_1 \parallel \ppa_2$. 

    If $\regionIntersectionOf{\region_{1}}{\region_2}{\region_3} = \emptyset$, the conclusion trivially holds. 
    We assume $\regionIntersectionOf{\region_{1}}{\region_2}{\region_3} \not = \emptyset$. 
    
    We show that $\alphabetExtensionOfTo{\ppa_1}{\alphabetOf{\multiobjectiveQuery{}{A}_1}}\parallel \alphabetExtensionOfTo{\ppa_2}{\alphabetOf{\multiobjectiveQuery{}{A}_2}}, \regionIntersectionOf{\region_{1}}{\region_2}{\region_3} \modelsWrt{\fairWrtRegionModel{}{\decomp_1 \cup \decomp_2 \cup \decomp_3}} \multiobjectiveQuery{}{G}$ holds,  
    which directly implies $\ppa_1 \parallel \ppa_2, \regionIntersectionOf{\region_{1}}{\region_2}{\region_3} \modelsWrt{\fairWrtRegionModel{}{\decomp_1 \cup \decomp_2 \cup \decomp_3}} \multiobjectiveQuery{}{G}$. 
    
    From the premises, it follows that: 
    \begin{itemize}
        \item $\agTriple{\alphabetExtensionOfTo{\ppa_1}{\alphabetOf{\multiobjectiveQuery{}{A}_{1}}}, \regionIntersectionOf{\region_1}{\region_2}{\region_3} }{\fairWrtRegionModel{}{\decomp_1 }}{\multiobjectiveQuery{}{A}_{1}}{\multiobjectiveQuery{}{A}_2} $,
        \item $ \agTriple{\alphabetExtensionOfTo{\ppa_2}{\alphabetOf{\multiobjectiveQuery{}{A}_{2}}}, \regionIntersectionOf{\region_1}{\region_2}{\region_3} }{\fairWrtRegionModel{}{ \decomp_2 }}{\multiobjectiveQuery{}{A}_{2}}{\multiobjectiveQuery{}{G}} $, and
        \item $\ppa_2, \regionIntersectionOf{\region_1}{\region_2}{\region_3} \modelsWrt{\fairWrtRegionModel{}{\decomp_3}} \multiobjectiveQuery{}{A}_1$.
    \end{itemize}
    Let $\valuation \in \regionIntersectionOf{\region_1}{\region_2}{\region_3}$. 
    From $\ppa_2, \regionIntersectionOf{\region_1}{\region_2}{\region_3} \modelsWrt{\fairWrtRegionModel{}{\ppa_2}} \multiobjectiveQuery{}{A}_1$ follows 
    $\ppa_2[\valuation] \modelsWrt{\fairWrtRegionModel{}{\decomp_3}}  \multiobjectiveQuery{}{A}_1[\valuation]$. 
    Together with $\alphabetOf{\multiobjectiveQuery{}{A}_1} \subseteq \alphabetOf{\ppa_2}$, and $\decomp_3 \in \setOfdecompsOf{\alphabetOf{\ppa_2}}$, we obtain that: 
    \begin{align}\label{eq:circproof}
        \alphabetExtensionOfTo{\ppa_2}{\alphabetOf{\multiobjectiveQuery{}{A}_2}}[\valuation] \modelsWrt{\fairWrtRegionModel{}{\decomp_3}} \multiobjectiveQuery{}{A}_1[\valuation]
    \end{align} 
    Let $\strategy \in \strategysetOf{(\alphabetExtensionOfTo{\ppa_1}{\alphabetOf{\multiobjectiveQuery{}{A}_1}} \parallel \alphabetExtensionOfTo{\ppa_2}{\alphabetOf{\multiobjectiveQuery{}{A}_2}})[\valuation]}{\fairWrtRegionModel{}{\decomp_1 \cup \decomp_2 \cup \decomp_3}}$. 
    From \Cref{theo:strategy_projection_partial_fair_preserved} and $\decomp_3 \in \setOfdecompsOf{\alphabetOf{\ppa_2}} \subseteq \setOfdecompsOf{\alphabetOf{\ppa_2} \cup \alphabetOf{\multiobjectiveQuery{}{A}_2}}$, follows that the strategy $\stratProjOfToValuation{\strategy}{2}{\valuation}$ is a $\fairWrtRegionModel{}{\decomp_3}$ strategy of $\alphabetExtensionOfTo{\ppa_2}{\alphabetOf{\multiobjectiveQuery{}{A}_2}}[\valuation]$. 
    Thus, by \Cref{eq:circproof} we obtain:
    \begin{align*}
        \alphabetExtensionOfTo{\ppa_2}{\alphabetOf{\multiobjectiveQuery{}{A}_2}}[\valuation]&  \modelsWrt{\fairWrtRegionModel{}{\decomp_3}} \multiobjectiveQuery{}{A}_1[\valuation] \\
        & \Rightarrow \alphabetExtensionOfTo{\ppa_2}{\alphabetOf{\multiobjectiveQuery{}{A}_2}}[\valuation], {\stratProjOfToValuation{\strategy}{2}{\valuation}} \modelsWrt{} {\multiobjectiveQuery{}{A}_1[\valuation]}\\
        %
        &  \Rightarrow \left(\alphabetExtensionOfTo{\ppa_1}{\alphabetOf{\multiobjectiveQuery{}{A}_1}}\parallel {\alphabetExtensionOfTo{\ppa_2}{\alphabetOf{\multiobjectiveQuery{}{A}_2}}}\right)[\valuation], {\strategy} \modelsWrt{}  \multiobjectiveQuery{}{A}_1[\valuation]
        && \text{By \Cref{lemma_3_eDependent} } \\
        %
        & \Rightarrow {\alphabetExtensionOfTo{\ppa_1}{\alphabetOf{\multiobjectiveQuery{}{A}_1} }[\valuation]}, { \stratProjOfToValuation{\strategy}{1}{\valuation}} \modelsWrt{}  \multiobjectiveQuery{}{A}_1[\valuation]
        && \text{By \Cref{lemma_3_fDependent}}
    \end{align*}
    We have that 
    $\stratProjOfToValuation{\strategy}{1}{\valuation}$ is a $\fairWrtRegionModel{}{\decomp_1}$ strategy of ${\alphabetExtensionOfTo{\ppa_1}{\alphabetOf{\multiobjectiveQuery{}{A}_1} }}[\valuation]$. 
    As a consequence, it holds that $\agTriple{\alphabetExtensionOfTo{\ppa_1}{\alphabetOf{\multiobjectiveQuery{}{A}_1}}, \region_1 \cap \region_2 \cap \region_3 }{\fairWrtRegionModel{}{\decomp_1}}{\multiobjectiveQuery{}{A}_1}{\multiobjectiveQuery{}{A}_2}$ from which follows that 
    \begin{align*}
            \alphabetExtensionOfTo{\ppa_1}{\alphabetOf{\multiobjectiveQuery{}{A}_1}}[\valuation]&, { \stratProjOfToValuation{\strategy}{1}{\valuation}} \modelsWrt{}  {\multiobjectiveQuery{}{A}_2[\valuation]} \\
        %
        & \Rightarrow {\left( \alphabetExtensionOfTo{\ppa_1}{\alphabetOf{\multiobjectiveQuery{}{A}_1}}\parallel \alphabetExtensionOfTo{\ppa_2}{\alphabetOf{\multiobjectiveQuery{}{A}_2}}\right)[\valuation]}, {\strategy} \modelsWrt{} {\multiobjectiveQuery{}{A}_2[\valuation]}
        && \text{By \Cref{lemma_3_fDependent}}   \\
        %
        %
        & \Rightarrow {\alphabetExtensionOfTo{\ppa_2}{\alphabetOf{\multiobjectiveQuery{}{A}_2} }[\valuation] }, {\stratProjOfToValuation{\strategy}{2}{\valuation} } \modelsWrt{}  \multiobjectiveQuery{}{A}_2[\valuation]
        && \text{By \Cref{lemma_3_fDependent}}\\
    \end{align*}  
    We apply the third premise and obtain that 
         $\agTriple{\alphabetExtensionOfTo{\ppa_2}{\alphabetOf{\multiobjectiveQuery{}{A}_2}}, \region_1 \cap \region_2 \cap \region_3 }{\fairWrtRegionModel{}{\decomp_2}}{\multiobjectiveQuery{}{A}_2}{\multiobjectiveQuery{}{G}}$. Then, as $\stratProjOfToValuation{\strategy}{2}{\valuation}$ is a $\fairWrtRegionModel{}{\decomp_2 }$ strategy of ${\alphabetExtensionOfTo{\ppa_2}{\alphabetOf{\multiobjectiveQuery{}{A}_2} }}[\valuation]$ it follows that:  
            \begin{align*}
            \alphabetExtensionOfTo{\ppa_2}{\alphabetOf{\multiobjectiveQuery{}{A}_2} }[\valuation]&, { \stratProjOfToValuation{\strategy}{2}{\valuation}} \modelsWrt{}  {\multiobjectiveQuery{}{G}[\valuation]}  \\
            %
            &  \Rightarrow {{\left(\alphabetExtensionOfTo{\ppa_1}{\alphabetOf{\multiobjectiveQuery{}{A}_1}}\parallel \alphabetExtensionOfTo{\ppa_2}{\alphabetOf{\multiobjectiveQuery{}{A}_2}}\right)[\valuation]}}, {\strategy} \modelsWrt{} \multiobjectiveQuery{}{G}[\valuation]
            && \text{By \Cref{lemma_3_fDependent}}   
        \end{align*}

\end{proof}

\subsection{Proofs of \Cref{sec:pag_mono}}
\label{app:proofs_of_pag_mono}

\restatablMonoRule*
\begin{proof}
We continue the proof from the main paper, which already established the left rule for partial strategies. For fair strategies the proof is similar. However, to deduce that 
        \begin{align*}
            {\solutionFctMdpObjective{{\alphabetExtensionOfTo{\ppa_i}{\alphabetOf{}}},{{\stratProjOfToValuation{\strategy}{i}{\valuation_+, \valuation}}}}{}}{}(\valuation) 
            & \leq  {\solutionFctMdpObjective{{\alphabetExtensionOfTo{\ppa_i}{\alphabetOf{}}},{{\stratProjOfToValuation{\strategy}{i}{\valuation_+, \valuation}}}}{}}{}(\valuation_+),
        \end{align*}
        we need to ensure that we can apply $\monotonicOnRegionParameter{\budarrow}{\solutionFctMdpObjective{\alphabetExtensionOfTo{\ppa_i}{\alphabetOf{}}}{}}{p}{\region_i}{\fairWrtRegionModel{}{\decomp_i}}$. 
        Thus, we need to show that 
        ${\stratProjOfToValuation{\strategy}{i}{\valuation_+, \valuation}} \in  \strategysetOf{\alphabetExtensionOfTo{\ppa_i}{\alphabetOf{}}[{\valuation_+}]}{\fairWrtRegionModel{}{\decomp_i}}$ and 
        ${\stratProjOfToValuation{\strategy}{i}{\valuation_+, \valuation}} \in  \strategysetOf{\alphabetExtensionOfTo{\ppa_i}{\alphabetOf{}}[{\valuation}]}{\fairWrtRegionModel{}{\decomp_i}}$.

        The strategy $\strategy$ of $\alphabetExtensionOfTo{\ppa_1}{\alphabetOf{}} \parallel \alphabetExtensionOfTo{\ppa_2}{\alphabetOf{}}$ is $\fairWrtRegionModel{}{\decomp_1 \cup \decomp_2}$ w.r.t.\ the valuations $\valuation, \text{ and }\valuation_+ \in \regionIntersectionOf{\region_1}{}{\region_2}$. 
        By \Cref{theo:strategy_projection_partial_fair_preserved} follows that the projection  
        ${\stratProjOfToValuation{\strategy}{1}{\valuation, \valuation}}$ is in $\strategysetOf{\alphabetExtensionOfTo{\ppa_1}{\alphabetOf{}}[{\valuation}]}{\fairWrtRegionModel{}{\decomp_1}}$  
        and ${\stratProjOfToValuation{\strategy}{2}{\valuation_+, \valuation_+}}$ is in $\strategysetOf{\alphabetExtensionOfTo{\ppa_2}{\alphabetOf{}}[{\valuation_+}]}{\fairWrtRegionModel{}{\decomp_2}}$. 
        Since $\valuation$ and $\valuation_+$ are graph-preserving, we can apply \Cref{theo:change_valuation_projection} and thus, 
        ${\stratProjOfToValuation{\strategy}{1}{\valuation_+, \valuation}} \in  \strategysetOf{\alphabetExtensionOfTo{\ppa_1}{\alphabetOf{}}[{\valuation}]}{\fairWrtRegionModel{}{\decomp_1}}$  
        and ${\stratProjOfToValuation{\strategy}{2}{\valuation_+, \valuation}} \in  \strategysetOf{\alphabetExtensionOfTo{\ppa_2}{\alphabetOf{}}[{\valuation_+}]}{\fairWrtRegionModel{}{\decomp_2}}$. 
        Since $\valuation$ and $\valuation_+$ are graph-preserving for $\alphabetExtensionOfTo{\ppa_i}{\alphabetOf{}}$, we have 
        $\strategysetOf{\alphabetExtensionOfTo{\ppa_i}{\alphabetOf{}}[{\valuation}]}{\fairWrtRegionModel{}{\decomp_i}} = \strategysetOf{\alphabetExtensionOfTo{\ppa_i}{\alphabetOf{}}[{\valuation_+}]}{\fairWrtRegionModel{}{\decomp_i}}$ 
        from which we deduce that  
        ${\stratProjOfToValuation{\strategy}{1}{\valuation_+, \valuation}} \in  \strategysetOf{\alphabetExtensionOfTo{\ppa_1}{\alphabetOf{}}[{\valuation_+}]}{\fairWrtRegionModel{}{\decomp_1}}$  
        and ${\stratProjOfToValuation{\strategy}{2}{\valuation_+, \valuation}} \in  \strategysetOf{\alphabetExtensionOfTo{\ppa_2}{\alphabetOf{}}[{\valuation}]}{\fairWrtRegionModel{}{\decomp_2}}$. 
\end{proof} 

\section{Partial vs.\ Complete Strategies}\label{app:part_vs_complete}
\Cref{app:equivalence_partial_complete_measures} section establishes that verifying probabilistic, and reward objectives under partial strategies in a given pPA is equivalent to verifying them under complete strategies in an extended pPA. 
This result is extended to multiobjective queries in \Cref{app:equivalence_partial_complete_qmo}. 

\subsection{Equivalence w.r.t.\ Probabilistic and Reward Objectives}\label{app:equivalence_partial_complete_measures}
Next, we define the pPA $\ppa_{\tau}$ which extends the pPA $\ppa$ by introducing transitions from each state to a new sink state which are labeled by a fresh alphabet symbol $\tau$.  
We show that the class of partial strategies of a $\ppa$ is equivalent to the complete strategies of $\ppa_{\tau}$. 
\begin{definition}\label{def:ppaTau}
	Given a pPA $\ppa = \ppaTupleOf{\ppa}$, we define the pPA $\ppa_{\tau}= (\stateSetOf{\ppa} \cupdot \{s_{\tau}\}, \initialOf{\ppa},\parameterSetOf{\ppa}, \actSetOf{\ppa} \cupdot \{\tau\} , \transFctOf{\ppa_{\tau}}, \syncOf{\ppa_{\tau}})$, where
	\begin{itemize}
	\item for each $(s, \alpha) \in \domain(\transFctOf{\ppa})$: 
		$\transFctOf{\ppa_{\tau}}(s, \alpha) = \transFctOf{\ppa}(s, \alpha) 
			\text{ and } 
			\syncOf{\ppa_{\tau}}(s, \alpha) = \syncOf{\ppa}(s, \alpha),$
	\item for each $s \in \stateSetOf{\ppa}$: 
		$\transFctOf{\ppa_{\tau}}(s, \tau) = \indicatorFct{s_{\tau}} 
			\text{ and } 
			\syncOf{\ppa_{\tau}}(s, \tau) = \tau. $
	\end{itemize}
\end{definition}
By definition of $\ppa_{\tau}$ follows, that a valuation is well-defined (graph-preserving) for pPA $\ppa$ iff it is well-defined (graph-preserving) for $\ppa_\tau$. 
From the following theorem follows that verifying objectives w.r.t\ partial strategies in $\ppa$ is equivalent to  verifying objectives w.r.t\ complete strategies in $\ppa_{\tau}$, 
i.e., $\ppa, \region \modelsWrt{(\mless,)\prt} \generalPredicate \Leftrightarrow  \ppa_{\tau}, \region \modelsWrt{(\mless,)\comp}$. 
\Cref{theo_prob_reward_partial_vs_complete} is based on the proof of \cite[Proposition 2]{Kwi+13}. 
However, we modified the construction such that the memoryless property is preserved. 
\begin{restatable}{lemma}{restatableProbRewardPartVsComplete}
	\label{theo_prob_reward_partial_vs_complete} 
	Let $\regLang$ be a language and $\rewFct$ be a (parametric) reward function over $\alphabetOf{}$ with $\tau \not \in \alphabetOf{}$. 
	\begin{itemize} 
	\item For any strategy $\strategy \in \strategysetOf{\ppa}{(\mless,)\prt}$ there is a strategy $\strategy_{\tau} \in \strategysetOf{\ppa_{\tau} }{(\mless,)\comp}$ 
		such that for any valuation $\valuation$ that is well-defined for $\ppa$ (and $\rewFct$) 
		\[
			\PrOf{\ppa}{\valuation,\strategy}{\regLang}  = \PrOf{\ppa_{\tau}}{\valuation, \strategy_{\tau}}{\regLang}  \quad \text{ and } \quad
			\ExpTot{\ppa}{\valuation,\strategy}{\rewFct} = \ExpTot{\ppa_{\tau}}{\valuation, \strategy_{\tau}}{\rewFct}.
		\]
	\item  
		For any strategy $\strategy_{\tau} \in \strategysetOf{\ppa_{\tau}}{(\mless,)\comp}$ there is a strategy $\strategy \in  \strategysetOf{\ppa }{(\mless,)\prt}$ 
		such that for any valuation $\valuation$ that is well-defined for $\ppa$ (and $\rewFct$) 
		\[
			\PrOf{\ppa_{\tau}}{\valuation, \strategy_{\tau}}{\regLang}  = \PrOf{\ppa}{\valuation,\strategy}{\regLang}
			 \quad  \text{ and } \quad 
			\ExpTot{\ppa_{\tau}}{\valuation, \strategy_{\tau}}{\rewFct} = \ExpTot{\ppa}{\valuation, \strategy}{\rewFct}.
		\]
	\end{itemize}
\end{restatable}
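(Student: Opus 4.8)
The plan is to set up an explicit, valuation-independent correspondence between partial strategies of $\ppa$ and complete strategies of $\ppa_\tau$ that redirects the ``stopping mass'' of a partial strategy into the fresh action $\tau$. Concretely, given $\strategy \in \strategysetOf{\ppa}{\prt}$ I would define $\strategy_\tau \in \strategysetOf{\ppa_\tau}{\comp}$ on finite paths $\finpath \in \finPathsOf{\ppa}$ by $\strategy_\tau(\finpath)(\alpha) = \strategy(\finpath)(\alpha)$ for $\alpha \in \actSetOf{\ppa}$ and $\strategy_\tau(\finpath)(\tau) = 1 - \sum_{\alpha \in \actSetOf{\ppa}} \strategy(\finpath)(\alpha)$; no choice is needed once a path has reached $s_\tau$, since that state has no enabled actions. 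In the converse direction I would simply drop the $\tau$-mass, setting $\strategy(\finpath)(\alpha) = \strategy_\tau(\finpath)(\alpha)$ for $\alpha \in \actSetOf{\ppa}$ and $\finpath \in \finPathsOf{\ppa}$, which yields a subdistribution, i.e.\ a partial strategy. These two maps are inverse to each other, and neither depends on $\valuation$.

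The first key observation I would record is that $\infPathsOf{\ppa} = \infPathsOf{\ppa_\tau}$: since $\transFctOf{\ppa_\tau}$ adds only the transitions $(s,\tau) \mapsto \indicatorFct{s_\tau}$ and $s_\tau$ is a dead end (\Cref{def:ppaTau} introduces no outgoing transition for $s_\tau$), no infinite path can ever take a $\tau$-transition, so every infinite path of $\ppa_\tau$ uses only $\actSetOf{\ppa}$-actions and is an infinite path of $\ppa$. Next I would show that the two (sub-)probability measures agree on cylinders: for every $\finpath \in \finPathsOf{\ppa}$ the factors in the product defining $\PrOf{\ppa_\tau}{\valuation,\strategy_\tau}{\cyl(\finpath)}$ involve only $\actSetOf{\ppa}$-actions, on which $\strategy_\tau$ and $\strategy$ coincide and where $\transFctOf{\ppa_\tau}$ equals $\transFctOf{\ppa}$, hence $\PrOf{\ppa_\tau}{\valuation,\strategy_\tau}{\cyl(\finpath)} = \PrOf{\ppa}{\valuation,\strategy}{\cyl(\finpath)}$. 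By uniqueness of the cylinder-set extension this forces $\PrOf{\ppa_\tau}{\valuation,\strategy_\tau}{\cdot} = \PrOf{\ppa}{\valuation,\strategy}{\cdot}$ as measures on $\infPathsOf{\ppa} = \infPathsOf{\ppa_\tau}$ (both being sub-probability measures, the mass absorbed by $s_\tau$ under $\strategy_\tau$ matching exactly the stopping mass discarded under $\strategy$).

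From the equality of measures the two claimed identities follow at once. Because the identical infinite paths carry identical traces and $\tau \notin \alphabetOf{}$, the event $\{\pi \mid \restrOfTo{\traceOf{\pi}}{\alphabetOf{}} \in \regLang\}$ is literally the same set in $\ppa$ and $\ppa_\tau$, giving $\PrOf{\ppa}{\valuation,\strategy}{\regLang} = \PrOf{\ppa_\tau}{\valuation,\strategy_\tau}{\regLang}$; likewise the reward integrand $\rewFct[\valuation](\restrOfTo{\traceOf{\pi}}{\alphabetOf{}})$ agrees path by path, so the integrals defining $\ExpTot{\ppa}{\valuation,\strategy}{\rewFct}$ and $\ExpTot{\ppa_\tau}{\valuation,\strategy_\tau}{\rewFct}$ coincide. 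Finally I would verify that both classes are preserved: dropping and adding the $\tau$-mass sends distributions to subdistributions and back, so completeness of $\strategy_\tau$ corresponds to (voluntary stopping of) the partial strategy $\strategy$; and---this is the point where we improve on \cite[Proposition 2]{Kwi+13}---if $\strategy$ depends only on $\last{\finpath}$, then so does $\strategy_\tau$, since the $\tau$-deficit is a function of the current state alone, and vice versa, so memorylessness is preserved in both directions.

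The main obstacle is not any single computation but the careful bookkeeping around the fact that both measures are genuinely sub-probabilistic. I expect the delicate point to be arguing cleanly that the mass absorbed by the dead-end $s_\tau$ in $\ppa_\tau$ equals the mass that a partial strategy silently discards in $\ppa$, so that the cylinder identities really do extend to an equality of measures on the common path space; once the dead-end structure of $s_\tau$ (and hence $\infPathsOf{\ppa}=\infPathsOf{\ppa_\tau}$) is isolated, the trace- and reward-level statements become immediate. A secondary care point is the treatment of paths ending in $s_\tau$, where ``completeness'' must be read as requiring a full distribution only at states with enabled actions.
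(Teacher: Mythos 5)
Your construction (redirecting the residual mass of the partial strategy to the fresh action $\tau$, and conversely dropping the $\tau$-mass) is exactly the one the paper uses, and your memorylessness observation is precisely the paper's stated improvement over \cite[Proposition 2]{Kwi+13}; the paper delegates the measure-agreement step to the proof of that proposition, whereas you spell out the cylinder-set argument and the identification $\infPathsOf{\ppa} = \infPathsOf{\ppa_\tau}$, which is correct and essentially the same route. Incidentally, your case split is the intended one---the paper's displayed definition of $\strategy_\tau$ appears to have the side conditions of its second and third cases swapped, and your reading of completeness at the dead-end state $s_\tau$ matches what the paper's definitions force.
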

Intuitively, all ``unused'' probabilities of a partial strategy of $\ppa$ are redirected to a sink state by the complete strategy of $\ppa_{\tau}$. 
In \cite[proposition 2]{Kwi+13}, $\tau$-labeled selfloop are introduced. 
However, the resulting complete strategy requires memory. 
\begin{proof}
       First, given a strategy $\strategy \in \strategysetOf{\ppa}{(\mless,)\prt}$, we construct a strategy $\strategy_{\tau} \in \strategysetOf{\ppa_{\tau} }{(\mless,)\comp}$ as follows: 
       We can apply the reasoning provided in the proof of \cite[proposition 2]{Kwi+13} to our construction: 
        Given a finite path $\pi$ of $\ppa_{\tau}$, and $\alpha \in \actSetOf{\ppa_\tau}$: 
        \begin{align*}
            \strategy_{\tau}(\pi, \alpha) = 
            \begin{cases}
                \strategy(\pi, \alpha) & \text{if  $\pi \in \finPathsOf{\ppa}{}$ and $\alpha \in \actSetOf{\ppa}$} \\
                1- \sum_{\alpha \in \actSetOf{\ppa}} \strategy(\pi, \alpha) & \text{if $\pi  \not \in \finPathsOf{\ppa}{}$ and } \alpha= \tau \\
                1 & \text{if $\pi \in \finPathsOf{\ppa}{}$ and } \alpha= \tau \\
                0 & \text{otherwise}
            \end{cases}
        \end{align*}
        Now, for any valuation $\valuation$ that is well-defined for $\ppa$ (and $\rewFct$), we have $\PrOf{\ppa}{\valuation,\strategy}{\regLang}  = \PrOf{\ppa_{\tau}}{\valuation, \strategy_{\tau}}{\regLang}$ and $\ExpTot{\ppa}{\valuation,\strategy}{\rewFct} = \ExpTot{\ppa_{\tau}}{\valuation, \strategy_{\tau}}{\rewFct}.$
        Note that---in contrast to \cite[proposition 2]{Kwi+13}---the strategy $\strategy_{\tau}$ is memoryless if $\strategy$ is a memoryless strategy. 

        Second, given a strategy $\strategy_{\tau} \in \strategysetOf{\ppa_{\tau}}{(\mless,)\comp}$, we construct a strategy $\strategy \in \strategysetOf{\ppa}{(\mless,)\prt}$ as follows: 
        For ${\pi} \in \finPathsOf{\ppa}{} \subseteq \finPathsOf{\ppa_{\tau}}{}$ and $\alpha \in \actSetOf{\ppa}$ it holds that 
			$\strategy({\pi}, \alpha)  = \strategy_{\tau}({\pi}, \widehat{a}, {\mu} )$. 
        Again, for any valuation $\valuation$ that is well-defined for $\ppa$ (and $\rewFct$) we have $\PrOf{\ppa}{\valuation,\strategy}{\regLang}  = \PrOf{\ppa_{\tau}}{\valuation, \strategy_{\tau}}{\regLang}$ and $\ExpTot{\ppa}{\valuation,\strategy}{\rewFct} = \ExpTot{\ppa_{\tau}}{\valuation, \strategy_{\tau}}{\rewFct}$
        Additionally, if $\strategy_{\tau}$ is memoryless, then $\strategy$ is also memoryless. 
\end{proof}

\subsection{Equivalence w.r.t.\ Multiobjective Queries}\label{app:equivalence_partial_complete_qmo}
As a consequence of \Cref{theo_prob_reward_partial_vs_complete}, we directly obtain \cite[Proposition 2]{Kwi+13} for pPA and parametric mo-queries: 
\begin{corollary}
	\label{lemma_moq_partial_via_complete}
	Let $\ppa$ be a pPA and let $\multiobjectiveQuery{}{X}$ be a (parametric) multiobjective query. 
	Construct pPA $\ppa_{\tau}$ as in \Cref{def:ppaTau}. 
	$\alphabetOf{\multiobjectiveQuery{}{X}} \subseteq \alphabetOf{\ppa}$, where $\tau \not \in \alphabetOf{\ppa} \cup \alphabetOf{\multiobjectiveQuery{}{X}}$ . 
	Let $\region$ be graph-preserving well-defined for $\ppa$ (i.e., also for $\ppa_{\tau}$) and well-defined for $\multiobjectiveQuery{}{X}$. 
	Then, 
	\begin{align*}
		& \exists \strategy \in \strategysetOf{\ppa}{(\mless,)\prt}: \exists \valuation \in \region :  \ppa[\valuation], {\strategy} \modelsWrt{} \multiobjectiveQuery{}{X}[\valuation]
        \\ & \quad \Leftrightarrow \quad  \exists \strategy \in  \strategysetOf{\ppa_{\tau} }{(\mless,)\comp}: \exists \valuation \in \region :  \ppa_{\tau}[\valuation], {\strategy} \modelsWrt{}  \multiobjectiveQuery{}{X}[\valuation]
	\end{align*}

\end{corollary}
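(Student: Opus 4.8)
The plan is to derive the corollary directly from \Cref{theo_prob_reward_partial_vs_complete}, exploiting that the satisfaction of a mo-query is determined entirely by the individual probability and expected-reward values of its constituent objectives, each of which is preserved by the strategy transformation of that lemma. First I would unfold the definition of mo-query satisfaction: for a well-defined $\valuation \in \region$ and a strategy, $\ppa[\valuation], \strategy \modelsWrt{} \multiobjectiveQuery{}{X}[\valuation]$ holds iff $\ppa[\valuation], \strategy \modelsWrt{} \generalPredicate_i[\valuation]$ for every objective $\generalPredicate_i \in \multiobjectiveQuery{}{X}$, and each such $\generalPredicate_i$ is either a probabilistic objective $\probPredicate{\sim p}{\regLang_i}$ or a reward objective $\expPredicate{\sim r}{\rewFct_i}$ whose satisfaction depends solely on the value $\PrOf{\ppa}{\valuation,\strategy}{\regLang_i}$, respectively $\ExpTot{\ppa}{\valuation,\strategy}{\rewFct_i}$.

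For the forward direction, given witnesses $\strategy \in \strategysetOf{\ppa}{(\mless,)\prt}$ and $\valuation \in \region$, I would invoke the first item of \Cref{theo_prob_reward_partial_vs_complete} to obtain a strategy $\strategy_{\tau} \in \strategysetOf{\ppa_{\tau}}{(\mless,)\comp}$. Because $\region$ is well-defined for $\ppa$ and for every reward function appearing in $\multiobjectiveQuery{}{X}$, and because the hypothesis $\tau \notin \alphabetOf{\ppa} \cup \alphabetOf{\multiobjectiveQuery{}{X}}$ guarantees that each objective ranges over an alphabet avoiding $\tau$, the lemma yields $\PrOf{\ppa}{\valuation,\strategy}{\regLang_i} = \PrOf{\ppa_{\tau}}{\valuation,\strategy_{\tau}}{\regLang_i}$ and $\ExpTot{\ppa}{\valuation,\strategy}{\rewFct_i} = \ExpTot{\ppa_{\tau}}{\valuation,\strategy_{\tau}}{\rewFct_i}$ for the objectives of the query. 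Hence every $\generalPredicate_i$ is preserved and $\ppa_{\tau}[\valuation], \strategy_{\tau} \modelsWrt{} \multiobjectiveQuery{}{X}[\valuation]$. The converse direction is entirely symmetric, applying the second item of \Cref{theo_prob_reward_partial_vs_complete} to turn a complete strategy of $\ppa_{\tau}$ into a partial strategy of $\ppa$ preserving all these measures.

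The main, and essentially only, subtlety is uniformity: a single transformed strategy must witness the preservation of the \emph{entire} conjunction at once, rather than requiring a different strategy per objective. This is precisely what the explicit construction in \Cref{theo_prob_reward_partial_vs_complete} delivers, since $\strategy_{\tau}$ is built from $\strategy$ (and vice versa) independently of any particular language $\regLang_i$ or reward function $\rewFct_i$---it merely redirects unused probability mass to the sink $s_{\tau}$ via the fresh action $\tau$. Consequently the same $\strategy_{\tau}$ (respectively $\strategy$) simultaneously equates all finitely many measures of the query, and the corollary follows without further calculation.
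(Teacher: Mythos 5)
Your proposal is correct and matches the paper's argument: the paper states this corollary as a direct consequence of \Cref{theo_prob_reward_partial_vs_complete}, and your elaboration---applying the single strategy transformation of that lemma, which simultaneously preserves all probability and expected-reward values of the finitely many objectives and hence the conjunction---is exactly the intended reasoning. The uniformity point you flag (one transformed strategy witnessing all objectives at once) is indeed the only substantive step, and it is delivered by the lemma's construction as you say.
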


\subsection{Equivalence w.r.t.\ Monotonicity}\label{app:equivalence_partial_complete_mono}
As a direct consequence of \Cref{theo_prob_reward_partial_vs_complete}, we obtain that monotonicity for partial strategies w.r.t\ general properties is equivalent to monotonicity for complete strategies in $\ppa_{\tau}$ (see \Cref{def:ppaTau}). 
\begin{restatable}{corollary}{restatablePartialCompleteMono}\label{theo:reducing_partial_mono_to_complete}
    Let $\region$ be a region that is well-defined for pPA $\ppa$ and let $\budarrow \in \{\buparrow, \bdownarrow \}$. 
	For the solution functions $\solutionFctMdpObjective{\ppa}{}{}$ and $\solutionFctMdpObjective{\ppa_{\tau}}{}{}$ for $\ppa$ and $\ppa_{\tau}$ with $\tau \not \in \alphabetOf{\solutionFctMdpObjective{\ppa}{}{}}$ holds 
    $\monotonicOnRegionParameter{\budarrow}{\solutionFctMdpObjective{\ppa}{}}{p}{\region}{(\mless,)\prt}$ iff 
 	$\monotonicOnRegionParameter{\budarrow}{\solutionFctMdpObjective{{\ppa_{\tau}}}{}}{p}{\region}{(\mless,)\comp}$. 	
\end{restatable}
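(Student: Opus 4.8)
The plan is to obtain the corollary as a direct transport of monotonicity along the strategy correspondence furnished by \Cref{theo_prob_reward_partial_vs_complete}, which I may invoke since it appears earlier in the excerpt. First I would unfold \Cref{def:monotonicity}: the statement $\monotonicOnRegionParameter{\budarrow}{\solutionFctMdpObjective{\ppa}{}}{p}{\region}{\prt}$ means that for \emph{every} partial strategy $\strategy \in \strategysetOf{\ppa}{\prt}$ the single-strategy solution function $\solutionFctMdpObjective{\ppa,\strategy}{}$ is $\budarrow$-monotone in $p$ on $\region$, and dually $\monotonicOnRegionParameter{\budarrow}{\solutionFctMdpObjective{\ppa_\tau}{}}{p}{\region}{\comp}$ means that $\solutionFctMdpObjective{\ppa_\tau,\strategy_\tau}{}$ is $\budarrow$-monotone for every complete $\strategy_\tau \in \strategysetOf{\ppa_\tau}{\comp}$. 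The key observation to record is that $\budarrow$-monotonicity of a single-strategy solution function is a property of the real-valued map $\valuation \mapsto \solutionFctMdpObjective{\ppa,\strategy}{}(\valuation)$ restricted to $\region$ alone, since it only compares values at admissible pairs $\valuation, \valuation_+ \in \region$; hence two strategies whose solution functions agree on all of $\region$ are $\budarrow$-monotone simultaneously.

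Next I would note that, by construction of $\ppa_\tau$ (\Cref{def:ppaTau}), a valuation is well-defined for $\ppa$ iff it is well-defined for $\ppa_\tau$, so every $\valuation \in \region$ is well-defined for both models and the equalities of \Cref{theo_prob_reward_partial_vs_complete} apply at each such $\valuation$ (under the side condition $\tau \notin \alphabetOf{}$, inherited from the lemma), simultaneously for $\solutionFctMdpObjective{}{} \in \{\solutionFctMdpObjective{}{\PrOf{}{}{\regLang}}, \solutionFctMdpObjective{}{\ExpTot{}{}{\rewFct}}\}$, as the lemma covers both the probabilistic and the reward case. Consequently the two families of functions on $\region$,
\[
\bigset{\solutionFctMdpObjective{\ppa,\strategy}{} \mathbin{\big|} \strategy \in \strategysetOf{\ppa}{\prt}}
\quad\text{and}\quad
\bigset{\solutionFctMdpObjective{\ppa_\tau,\strategy_\tau}{} \mathbin{\big|} \strategy_\tau \in \strategysetOf{\ppa_\tau}{\comp}},
\]
coincide: the first bullet of the lemma gives the inclusion of the left family in the right one, and the second bullet gives the reverse inclusion.

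Then the two directions follow symmetrically. For $\Rightarrow$, assuming $\monotonicOnRegionParameter{\budarrow}{\solutionFctMdpObjective{\ppa}{}}{p}{\region}{\prt}$, I take an arbitrary $\strategy_\tau \in \strategysetOf{\ppa_\tau}{\comp}$, use the second bullet to obtain a partial $\strategy \in \strategysetOf{\ppa}{\prt}$ with $\solutionFctMdpObjective{\ppa_\tau,\strategy_\tau}{}(\valuation) = \solutionFctMdpObjective{\ppa,\strategy}{}(\valuation)$ for all $\valuation \in \region$, observe that $\solutionFctMdpObjective{\ppa,\strategy}{}$ is $\budarrow$-monotone by hypothesis, and transfer this monotonicity to $\solutionFctMdpObjective{\ppa_\tau,\strategy_\tau}{}$ via the pointwise observation above; as $\strategy_\tau$ was arbitrary, $\monotonicOnRegionParameter{\budarrow}{\solutionFctMdpObjective{\ppa_\tau}{}}{p}{\region}{\comp}$ follows. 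The direction $\Leftarrow$ uses the first bullet in the identical manner. The memoryless variants $\monotonicOnRegionParameter{\budarrow}{\solutionFctMdpObjective{\ppa}{}}{p}{\region}{\mless,\prt}$ and $\monotonicOnRegionParameter{\budarrow}{\solutionFctMdpObjective{\ppa_\tau}{}}{p}{\region}{\mless,\comp}$ are handled word for word, since \Cref{theo_prob_reward_partial_vs_complete} preserves memorylessness in both directions.

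Since the argument is essentially a pointwise transport of monotonicity along a value-preserving strategy correspondence, there is no deep obstacle; the only thing to be careful about is the bookkeeping. Specifically, I must ensure that the solution-function equalities of \Cref{theo_prob_reward_partial_vs_complete} are invoked at \emph{every} $\valuation \in \region$ rather than at a single valuation, so that the comparison $\solutionFctMdpObjective{}{}(\valuation) \le \solutionFctMdpObjective{}{}(\valuation_+)$ underlying $\budarrow$-monotonicity is preserved across the correspondence for all admissible pairs $\valuation, \valuation_+ \in \region$, and that the well-definedness hypothesis genuinely transfers between $\ppa$ and $\ppa_\tau$.
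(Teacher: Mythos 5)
Your proposal is correct and matches the paper's intent: the paper states this corollary as a direct consequence of \Cref{theo_prob_reward_partial_vs_complete} without spelling out the details, and your argument is exactly that transport, correctly noting the one point that matters — the lemma fixes a single corresponding strategy whose solution function agrees at \emph{every} well-defined valuation in $\region$, so monotonicity transfers pointwise in both directions (and likewise for the memoryless variants).
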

\Cref{theo:reducing_partial_mono_to_complete} also holds for local monotonicity.

\section{Further AG Proof Rules}\label{app:pag_extension}
We extend the remaining proof rules from \cite{Kwi+13} to pPA. This includes \Cref{theo:pag_conj_rule}, which enables reasoning about the conjunction of multi-objective queries, as well as \Cref{theo:pag_asymN_rule}, an asymmetric rule for more than two components that extends \Cref{theo:pag_asym_rule}. 
Additionally, we introduce \Cref{theo:pag_interleaving_rule} for handling interleaving components and \Cref{theo:pag_reward_rule}, which supports modular reasoning about reward structures. 

When the premises include $\ppa, \region  \modelsWrt{\star} \multiobjectiveQuery{}{A}$ we adhere to the following assumptions:
\begin{itemize}
    \item $\multiobjectiveQuery{}{A}$ is a (parametric) mo-query for $\ppa$, i.e., $\multiobjectiveQuery{}{A}$ is defined over the alphabet  $\alphabetOf{\multiobjectiveQuery{}{A}} \subseteq \alphabetOf{\ppa}$, and
    \item $\region$ is a region that is well-defined for $\ppa$ (and $\multiobjectiveQuery{}{A}$). 
\end{itemize}
Similarly, if an AG triple $\agTriple{\alphabetExtensionOfTo{\ppa}{\alphabetOf{\multiobjectiveQuery{}{A}}}, \region}{\star}{\multiobjectiveQuery{}{A}}{\multiobjectiveQuery{}{G}}$ occurs in the premises, we assume:
\begin{itemize}
    \item $\multiobjectiveQuery{}{G}$ is a (parametric) mo-query for $\alphabetExtensionOfTo{\ppa}{\alphabetOf{\multiobjectiveQuery{}{A}}}$, i.e., $\multiobjectiveQuery{}{G}$ is defined over the alphabet $\alphabetOf{\multiobjectiveQuery{}{G}} \subseteq \alphabetOf{\ppa} \cup \alphabetOf{\multiobjectiveQuery{}{A}}$\footnote{$\multiobjectiveQuery{(\safe)}{A} \not \subseteq \alphabetOf{\ppa}$ is possible.}, and 
    \item the region $\region$ is well-defined for $\ppa$ ($\multiobjectiveQuery{}{A}$, and $\multiobjectiveQuery{}{G}$). 
\end{itemize}
Safety mo-queries are marked with a superscript ``$\safe$''. 

To reason about the conjunction of multi-objective queries, we generalize \cite[Theorem 2]{Kwi+13} for pPA.
When safety-related multi-objective queries are combined through conjunction, the resulting query remains a safety multi-objective query. 
\begin{restatable}[Conjunction]{theorem}{restatableConjRule}\label{theo:pag_conj_rule} 
   Let $\region_{\regionIntersection} = \regionIntersectionOf{\region_1}{}{\region_2}$, $\multiobjectiveQuery{}{A}_{\land} = \multiobjectiveQuery{}{A}_{1} \land \multiobjectiveQuery{}{A}_{2}$, and $\multiobjectiveQuery{}{G}_{\land} = \multiobjectiveQuery{}{G}_{1} \land \multiobjectiveQuery{}{G}_{2}$.

    \begin{tabularx}{\linewidth}{p{0.45\linewidth}| p{0.45\linewidth}} 
                $
                \infer{
                    \agTriple{\alphabetExtensionOfTo{\ppa}{\alphabetOf{\multiobjectiveQuery{\safe}{A}_{\land}} }, \region_{\regionIntersection} }{\prt}{\multiobjectiveQuery{\safe}{A}_{\land}}{\multiobjectiveQuery{\safe}{G}_{\land}} 
                } 
                {
                    \deduce{
                        \agTriple{\alphabetExtensionOfTo{\ppa}{\alphabetOf{\multiobjectiveQuery{\safe}{A}_{2}}}, \region_2 }{\prt}{\multiobjectiveQuery{\safe}{A}_{2}}{\multiobjectiveQuery{\safe}{G}_2} 
                    }{
                        \agTriple{\alphabetExtensionOfTo{\ppa}{\alphabetOf{\multiobjectiveQuery{\safe}{A}_{1}}}, \region_1 }{\prt}{\multiobjectiveQuery{\safe}{A}_{1}}{\multiobjectiveQuery{\safe}{G}_1} 
                    }
                    } 
                $
        & 
        $
        \infer{
            \agTriple{\alphabetExtensionOfTo{\ppa}{\alphabetOf{\multiobjectiveQuery{}{A}_{\land}} }, \region_{\regionIntersection} }{\fairWrtRegionModel{}{\decomp_1 \cup \decomp_2}}{\multiobjectiveQuery{}{A}_{\land}}{\multiobjectiveQuery{}{G}_{\land}} 
        } 
        {
            \deduce{
                \agTriple{\alphabetExtensionOfTo{\ppa}{\alphabetOf{\multiobjectiveQuery{}{A}_{2}}}, \region_2 }{\fairWrtRegionModel{}{\decomp_2}}{\multiobjectiveQuery{}{A}_{2}}{\multiobjectiveQuery{}{G}_2} 
            }{
                \agTriple{\alphabetExtensionOfTo{\ppa}{\alphabetOf{\multiobjectiveQuery{}{A}_{1}}}, \region_1 }{\fairWrtRegionModel{}{\decomp_1}}{\multiobjectiveQuery{}{A}_{1}}{\multiobjectiveQuery{}{G}_1} 
            }
            } 
        $
        \\ %
        \\
        &  
        for $\decomp_i \in \setOfdecompsOf{\alphabetOf{\ppa} \cup \alphabetOf{\multiobjectiveQuery{}{A}_i}}$. \\
    \end{tabularx}
\end{restatable}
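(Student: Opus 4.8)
The plan is to reduce the parametric conjunction rule to the non-parametric conjunction rule \cite[Theorem 2]{Kwi+13} by fixing an arbitrary valuation in the region $\region_{\regionIntersection} = \regionIntersectionOf{\region_1}{}{\region_2}$. Because this rule concerns a single component $\ppa$ rather than a parallel composition, no strategy projection is involved; the only genuine bookkeeping stems from the three alphabet extensions appearing in the premises and the conclusion. Writing $\Sigma_i = \alphabetOf{\multiobjectiveQuery{}{A}_i}$, I would first record that $\alphabetOf{\multiobjectiveQuery{}{A}_\land} = \Sigma_1 \cup \Sigma_2 =: \Sigma_\land$, so that $\alphabetExtensionOfTo{\ppa}{\Sigma_\land}$ simply carries the self-loops of both $\alphabetExtensionOfTo{\ppa}{\Sigma_1}$ and $\alphabetExtensionOfTo{\ppa}{\Sigma_2}$.

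Two structural facts drive the reduction. Since the alphabet extension (\Cref{def_alphabet_extension}) only adds the non-parametric Dirac self-loops $\transFctOf{\alphabetOf{}}(s,\lab) = \indicatorFct{s}$, instantiation commutes with it: $\alphabetExtensionOfTo{\ppa}{\Sigma}[\valuation] = \alphabetExtensionOfTo{(\ppa[\valuation])}{\Sigma}$ for every well-defined $\valuation$ and every $\Sigma$. Moreover, by \Cref{rem:padef} the strategy classes are insensitive to instantiation, i.e.\ $\strategysetOf{\alphabetExtensionOfTo{\ppa}{\Sigma}}{\star} = \strategysetOf{\alphabetExtensionOfTo{\ppa}{\Sigma}[\valuation]}{\star}$ for $\star \in \{\prt,\comp\}$, and fairness w.r.t.\ a fixed well-defined $\valuation$ is determined purely by $\PrOf{\ppa}{\valuation,\strategy}{}$.

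Next I would fix $\valuation \in \region_{\regionIntersection}$ and specialise. By the definition of the AG triple (\Cref{def_pag_triple}) together with the two facts above, each parametric premise collapses at $\valuation$ to a \emph{non-parametric} AG triple over the PA $\ppa[\valuation]$, with the instantiated queries $\multiobjectiveQuery{(\safe)}{A}_i[\valuation]$ as assumptions and $\multiobjectiveQuery{(\safe)}{G}_i[\valuation]$ as guarantees, over the extensions $\alphabetExtensionOfTo{(\ppa[\valuation])}{\Sigma_i}$ (and, for the right rule, w.r.t.\ $\decomp_i$). Applying \cite[Theorem 2]{Kwi+13} to these two non-parametric triples yields the conjoined non-parametric AG triple over $\alphabetExtensionOfTo{(\ppa[\valuation])}{\Sigma_\land}$. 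Since $\land$ is set union on mo-queries and instantiation distributes over it, $\multiobjectiveQuery{(\safe)}{A}_1[\valuation] \land \multiobjectiveQuery{(\safe)}{A}_2[\valuation] = \multiobjectiveQuery{(\safe)}{A}_\land[\valuation]$ and likewise for the guarantees, so the result is exactly the $\valuation$-instance of the desired conclusion. As $\valuation \in \region_{\regionIntersection}$ was arbitrary, re-quantifying over the region reassembles $\agTriple{\alphabetExtensionOfTo{\ppa}{\Sigma_\land}, \region_{\regionIntersection}}{\star}{\multiobjectiveQuery{(\safe)}{A}_\land}{\multiobjectiveQuery{(\safe)}{G}_\land}$.

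The main obstacle I expect is the fairness case (right rule), where the premises speak about fair strategies of the smaller extensions $\alphabetExtensionOfTo{(\ppa[\valuation])}{\Sigma_i}$ w.r.t.\ $\decomp_i$, while the conclusion quantifies over fair strategies of $\alphabetExtensionOfTo{(\ppa[\valuation])}{\Sigma_\land}$ w.r.t.\ $\decomp_1 \cup \decomp_2$. The matching of these fairness classes across the differing extensions is precisely what the non-parametric \cite[Theorem 2]{Kwi+13} already establishes, so once the valuation is fixed it can be invoked verbatim; the only point I must verify is that fixing a well-defined $\valuation$ makes each parametric fair-strategy premise coincide with its non-parametric instance, which holds because for fixed $\valuation$ the fairness condition is defined directly through $\PrOf{\ppa}{\valuation,\strategy}{}$ and hence needs no cross-valuation (graph-preservation) argument here.
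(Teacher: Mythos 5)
Your proposal is correct and is in substance the same argument as the paper's: the paper likewise exploits that, for a single component (no composition, hence no strategy projections and no cross-valuation reasoning), the triple is pointwise in the valuation and strategy\textemdash{}it first lifts each premise to the common extension $\alphabetExtensionOfTo{\ppa}{\alphabetOf{\multiobjectiveQuery{}{A}_{\land}}}$ and the intersected region, and then runs the propositional argument of the non-parametric conjunction rule for each fixed $\valuation$ and $\strategy$. The only difference is organizational: you instantiate first and invoke the non-parametric theorem of Kwiatkowska et al.\ as a black box, whereas the paper inlines that theorem's proof at the parametric level; both rest on the same observations (alphabet extension only adds non-parametric Dirac self-loops and commutes with instantiation, and conjunction of mo-queries is set union).
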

\begin{proof}
    We show the statement for safety properties. For general properties and fair strategies, the proof is analogous. 
    First, we observe that ${\multiobjectiveQuery{\safe}{G}_{\land}}$ is a valid multiobjective query for $\alphabetExtensionOfTo{\ppa}{\alphabetOf{\multiobjectiveQuery{}{A}_{\land}}}$ as 
    $\alphabetOf{\multiobjectiveQuery{\safe}{G}_i} \subseteq \alphabetOf{\ppa} \cup \alphabetOf{\multiobjectiveQuery{\safe}{A}_i} = \alphabetOf{\alphabetExtensionOfTo{\ppa}{\alphabetOf{\multiobjectiveQuery{}{A}_{\land}}}}$ holds. 
    Second, assume the premises hold. 
    As $\alphabetOf{\multiobjectiveQuery{\safe}{G}_i} \subseteq \alphabetOf{\ppa} \cup \alphabetOf{\multiobjectiveQuery{\safe}{A}_i}$,  
    it holds that for $i \in \{1,2\}$: 
        \[ 
            \agTriple{\alphabetExtensionOfTo{\ppa}{\alphabetOf{\multiobjectiveQuery{\safe}{A}_1 \land \multiobjectiveQuery{\safe}{A}_2}}, \region_i}{\prt}{\multiobjectiveQuery{\safe}{A}_i}{\multiobjectiveQuery{\safe}{G}_i}
        \]
	From this it follows that 
        \[ 
            \agTriple{\alphabetExtensionOfTo{\ppa}{\alphabetOf{\multiobjectiveQuery{\safe}{A}_\land}}, \regionIntersectionOf{\region_1}{}{\region_2} }{\prt}{\multiobjectiveQuery{\safe}{A}_i}{\multiobjectiveQuery{\safe}{G}_i}
        \]
    As in the proof of \cite[Theorem 3]{Kwi+13}, by the definition of pAG-triples (\Cref{def_pag_triple}), 
    we observe, that for $\valuation \in \regionIntersectionOf{\region_1}{}{\region_2}$ and $\strategy \in \strategysetOf{\alphabetExtensionOfTo{\ppa}{\alphabetOf{\multiobjectiveQuery{\safe}{A}_\land}}}{\prt}$ it holds that 
	\begin{align*}
	 & \bigwedge_{i\in\{1,2\}} \left(\alphabetExtensionOfTo{\ppa}{\alphabetOf{\multiobjectiveQuery{\safe}{A}_1 \land \multiobjectiveQuery{\safe}{A}_2}} [\valuation], {\strategy} \models \multiobjectiveQuery{\safe}{A}_i  \quad  \rightarrow  \quad  \alphabetExtensionOfTo{\ppa}{\alphabetOf{\multiobjectiveQuery{\safe}{A}_1 \land \multiobjectiveQuery{\safe}{A}_2}}[\valuation], {\strategy} \models \multiobjectiveQuery{\safe}{G}_i \right)  \\
	  \Leftrightarrow\quad  
      & \bigwedge_{i\in\{1,2\}} \left(\alphabetExtensionOfTo{\ppa}{\alphabetOf{\multiobjectiveQuery{\safe}{A}_1 \land \multiobjectiveQuery{\safe}{A}_2}} [\valuation], {\strategy} \not\models \multiobjectiveQuery{\safe}{A}_i  \quad  \lor \quad  \alphabetExtensionOfTo{\ppa}{\alphabetOf{\multiobjectiveQuery{\safe}{A}_1 \land \multiobjectiveQuery{\safe}{A}_2}}[\valuation], {\strategy} \models \multiobjectiveQuery{\safe}{G}_i \right) \\
	  \Rightarrow \quad &
	    \alphabetExtensionOfTo{\ppa}{\alphabetOf{\multiobjectiveQuery{\safe}{A}_\land}} [\valuation],{\strategy} \not\models \multiobjectiveQuery{\safe}{A}_\land  \quad \lor \quad
	    \alphabetExtensionOfTo{\ppa}{\alphabetOf{\multiobjectiveQuery{\safe}{A}_\land}}[\valuation],{\strategy} \models \multiobjectiveQuery{\safe}{G}_\land \\
	\Leftrightarrow \quad &	\alphabetExtensionOfTo{\ppa}{\alphabetOf{\multiobjectiveQuery{\safe}{A}_\land}} [\valuation], {\strategy} \models \multiobjectiveQuery{\safe}{A}_\land  \quad  \rightarrow \quad  \alphabetExtensionOfTo{\ppa}{\alphabetOf{\multiobjectiveQuery{\safe}{A}_\land}}[\valuation], {\strategy} \models \multiobjectiveQuery{\safe}{G}_\land
\end{align*}  
Thus, 
\begin{align*}
	{\agTriple{\alphabetExtensionOfTo{\ppa}{\alphabetOf{\multiobjectiveQuery{\safe}{A}_\land}}, \regionIntersectionOf{\region_1}{}{\region_2} }{\prt}{ \multiobjectiveQuery{\safe}{A}_\land}{ \multiobjectiveQuery{\safe}{G}_\land}}
\end{align*}
    
\end{proof} 
By combining \Cref{theo:pag_asym_rule} and \Cref{theo:pag_conj_rule}, we obtain the 
following analogue to \cite[Theorem 4]{Kwi+13}---the asymmetric proof rule  for $n>2$ components---for pPA.  %
\begin{restatable}[Asymmetric-N]{theorem}{restatableAsymNRule}\label{theo:pag_asymN_rule}
    Let $\region_{\regionIntersection}  = \regionIntersectionOf{\region_1}{\dots}{\region_n}$, and $ \ppa =  \ppa_1 \parallel \dots \parallel \ppa_n$. 

    \begin{tabularx}{\linewidth}{p{0.45\linewidth}|p{0.45\linewidth}} 
                $
                \infer{
                    \ppa, \region_{\regionIntersection} \modelsWrt{\comp} \multiobjectiveQuery{\safe}{G}
                } 
                {
                    \deduce{
                        \agTriple{\alphabetExtensionOfTo{\ppa_n}{\alphabetOf{\multiobjectiveQuery{\safe}{A}_{n-1}}}, \region_{n} }{\prt}{\multiobjectiveQuery{\safe}{A}_{n-1}}{\multiobjectiveQuery{\safe}{G}} 
                    }{
                        \deduce{
                            \vdots
                            }{
                            \deduce{
                                \agTriple{\alphabetExtensionOfTo{\ppa_2}{\alphabetOf{\multiobjectiveQuery{\safe}{A}_{1}}}, \region_2 }{\prt}{\multiobjectiveQuery{\safe}{A}_{1}}{\multiobjectiveQuery{\safe}{A}_2} 
                                }{
                                    \deduce{
                                        \ppa_1, \region_1 \modelsWrt{\comp} \multiobjectiveQuery{\safe}{A}_1
                                        }{}
                                }
                        }
                }} 
                $
        & 
            $
                \infer{
                    \ppa, \region_{\regionIntersection} \modelsWrt{\fairWrtRegionModel{}{
                    \decomp_1 \cup  \dots \cup \decomp_n}} \multiobjectiveQuery{}{G}
                }
                {\deduce{
                    \agTriple{\alphabetExtensionOfTo{\ppa_n}{\alphabetOf{\multiobjectiveQuery{}{A}_{n-1}}}, \region_{n}}{\fairWrtRegionModel{}{\decomp_n}}{\multiobjectiveQuery{}{A}_{n-1}}{\multiobjectiveQuery{}{G}} 
                    }{
                        \deduce{
                            \vdots
                        }{
                            \deduce{
                                \agTriple{\alphabetExtensionOfTo{\ppa_2}{\alphabetOf{\multiobjectiveQuery{}{A}_{1}}}, \region_2 }{\fairWrtRegionModel{}{\decomp_2}}{\multiobjectiveQuery{}{A}_{1}}{\multiobjectiveQuery{}{A}_2} 
                            }{
                                \deduce{
                                    \ppa_1, \region_1 \modelsWrt{\fairWrtRegionModel{}{\decomp_1}} \multiobjectiveQuery{}{A}_1
                                }{}
                            }
                        }
                    }
                }
             $
        \\ %
        \\
        &  
       for $\decomp_1 \in \setOfdecompsOf{\alphabetOf{\ppa_1}}$ and $\decomp_i \in \setOfdecompsOf{\alphabetOf{\ppa_i}\cup \alphabetOf{\multiobjectiveQuery{}{A}_{i-1}}}$, for $i>1$.  
    \end{tabularx}
    
\end{restatable}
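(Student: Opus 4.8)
The plan is to prove both rules simultaneously by induction on the number $n \geq 2$ of components, treating the safety (complete-strategy) and general (fair-strategy) variants uniformly. The engine of the induction is the two-component asymmetric rule \Cref{theo:pag_asym_rule}, with the conjunction rule \Cref{theo:pag_conj_rule} available to recombine multi-objective guarantees whenever an intermediate assumption or guarantee is a genuine conjunction, and associativity of $\parallel$ (established in \Cref{sec:preliminaries_ppa}) used to regroup components. For the base case $n=2$ the premise chain collapses to exactly $\ppa_1, \region_1 \modelsWrt{\comp} \multiobjectiveQuery{\safe}{A}_1$ together with $\agTriple{\alphabetExtensionOfTo{\ppa_2}{\alphabetOf{\multiobjectiveQuery{\safe}{A}_1}}, \region_2}{\prt}{\multiobjectiveQuery{\safe}{A}_1}{\multiobjectiveQuery{\safe}{G}}$ (and the fairness classes $\decomp_1, \decomp_2$ in the general case), which are precisely the two premises of \Cref{theo:pag_asym_rule}; its conclusion is the claim.

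For the inductive step I assume the rule for $n-1$ components and set $\ppa_{<n} = \ppa_1 \parallel \dots \parallel \ppa_{n-1}$ with region $\region_{<n} = \regionIntersectionOf{\region_1}{\dots}{\region_{n-1}}$. The first $n-1$ premises of the $n$-component rule are exactly the premises of the $(n{-}1)$-component rule whose final guarantee is $\multiobjectiveQuery{\safe}{A}_{n-1}$, so the induction hypothesis yields $\ppa_{<n}, \region_{<n} \modelsWrt{\comp} \multiobjectiveQuery{\safe}{A}_{n-1}$ (resp.\ $\ppa_{<n}, \region_{<n} \modelsWrt{\fairWrtRegionModel{}{\decomp_1 \cup \dots \cup \decomp_{n-1}}} \multiobjectiveQuery{}{A}_{n-1}$). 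I then apply \Cref{theo:pag_asym_rule} once more, with first component $\ppa_{<n}$, second component $\ppa_n$, assumption $\multiobjectiveQuery{\safe}{A}_{n-1}$ and guarantee $\multiobjectiveQuery{\safe}{G}$: its first premise is the statement just obtained, and its second premise is the last premise $\agTriple{\alphabetExtensionOfTo{\ppa_n}{\alphabetOf{\multiobjectiveQuery{\safe}{A}_{n-1}}}, \region_n}{\prt}{\multiobjectiveQuery{\safe}{A}_{n-1}}{\multiobjectiveQuery{\safe}{G}}$ of the $n$-component rule. The resulting conclusion $\ppa_{<n} \parallel \ppa_n, \regionIntersectionOf{\region_{<n}}{}{\region_n} \modelsWrt{\comp} \multiobjectiveQuery{\safe}{G}$ equals, by associativity of $\parallel$, the desired $\ppa_1 \parallel \dots \parallel \ppa_n, \region_{\regionIntersection} \modelsWrt{\comp} \multiobjectiveQuery{\safe}{G}$; the general case is identical, with the combined class $(\decomp_1 \cup \dots \cup \decomp_{n-1}) \cup \decomp_n$ matching the stated $\decomp_1 \cup \dots \cup \decomp_n$.

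The main obstacle is the side-condition bookkeeping incurred by treating $\ppa_{<n}$ as a single component. To license the outer application of \Cref{theo:pag_asym_rule} I must verify $\alphabetOf{\multiobjectiveQuery{\safe}{A}_{n-1}} \subseteq \alphabetOf{1} \cup \dots \cup \alphabetOf{n-1}$ and $\alphabetOf{\multiobjectiveQuery{\safe}{G}} \subseteq \alphabetOf{n} \cup \alphabetOf{\multiobjectiveQuery{\safe}{A}_{n-1}}$; the former follows from the chained alphabet conditions $\alphabetOf{\multiobjectiveQuery{}{A}_i} \subseteq \alphabetOf{i} \cup \alphabetOf{\multiobjectiveQuery{}{A}_{i-1}}$ by a short auxiliary induction, and the latter is a hypothesis. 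For the fair rule I must additionally confirm that the class $\decomp_1 \cup \dots \cup \decomp_{n-1}$ supplied by the induction hypothesis is a legal decomposition over $\alphabetOf{\ppa_{<n}} \cup \alphabetOf{\multiobjectiveQuery{}{A}_{n-1}}$, so that \Cref{theo:pag_asym_rule} may unite it with $\decomp_n \in \setOfdecompsOf{\alphabetOf{\ppa_n}\cup \alphabetOf{\multiobjectiveQuery{}{A}_{n-1}}}$; fairness of the strategy projections invoked inside that rule is preserved by \Cref{theo:strategy_projection_partial_fair_preserved}. Everything else is a routine unfolding of \Cref{def_pag_triple} and \Cref{def:satregion}.
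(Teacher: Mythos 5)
Your proposal is correct and follows essentially the same route as the paper, whose proof is simply the remark that the result ``follows by a repeated application of \Cref{theo:pag_asym_rule} and \Cref{theo:pag_conj_rule}''; your induction on $n$, treating $\ppa_1 \parallel \dots \parallel \ppa_{n-1}$ as the first component of the two-component rule and discharging the alphabet and decomposition side conditions via the chained inclusions $\alphabetOf{\multiobjectiveQuery{}{A}_i} \subseteq \alphabetOf{i} \cup \alphabetOf{\multiobjectiveQuery{}{A}_{i-1}}$, is exactly that repeated application spelled out. You are also right that, for the rule as stated here (where the guarantee of the $i$-th triple is literally the assumption of the $(i{+}1)$-st), the conjunction rule is not actually needed, since \Cref{theo:pag_asym_rule} already handles mo-queries directly.
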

\begin{proof}
    Analogously to \cite[Theorem 4]{Kwi+13}, the result follows by a repeated application of \Cref{theo:pag_asym_rule} and \Cref{theo:pag_conj_rule}. 
\end{proof}

Next, we extend \cite[Theorem 6]{Kwi+13}\textemdash{}a rule for reasoning about properties with non-synchronized actions.  
\begin{restatable}[Interleaving]{theorem}{restatableInterleavingRule}\label{theo:pag_interleaving_rule}

  Let $\region_{\regionIntersection} = \regionIntersectionOf{\region_1}{}{\region_2}$, $\multiobjectiveQuery{}{A}_{\land} = \multiobjectiveQuery{}{A}_{1} \land \multiobjectiveQuery{}{A}_{2}$, and $\ppa = \alphabetExtensionOfTo{(\ppa_1 \parallel \ppa_2)}{\alphabetOf{\multiobjectiveQuery{}{A}_{\land}}}$. 

    \begin{tabularx}{\linewidth}{p{0.45\linewidth}|p{0.45\linewidth}} 
         $
                \infer{
                    \agTriple{{ \ppa}, \region_{\regionIntersection}}{\prt}{\multiobjectiveQuery{\safe}{A}_{\land} }{\probPredicate{\substack{\geq p_1+p_2 \\ \ - p_1\cdot p_2}}{\regLang_1^{\safe} \cup \regLang_2^{\safe}}}
                } 
                {
                    \deduce{
                        \agTriple{\alphabetExtensionOfTo{\ppa_2}{\alphabetOf{\multiobjectiveQuery{\safe}{A}_{2}}}, \region_2 }{\prt}{\multiobjectiveQuery{\safe}{A}_{2}}{{\probPredicate{\geq p_2}{\regLang_2^{\safe}} }} 
                    }{
                                    \deduce{
                                        \agTriple{\alphabetExtensionOfTo{\ppa_1}{\alphabetOf{\multiobjectiveQuery{\safe}{A}_{1}}}, \region_1 }{\prt}{\multiobjectiveQuery{\safe}{A}_{1}}{\probPredicate{\geq p_1}{\regLang_1^{\safe}}  }
                                        }{}
                }} 
             $
        & 
       $ 
            \infer{
                \agTriple{{ \ppa}, \region_{\regionIntersection}}{{\fairWrtRegionModel{}{ \decomp_1 \cup \decomp_2 }  }  }{\multiobjectiveQuery{}{A}_{\land} }{\probPredicate{ \substack{\sim p_1+p_2 \\ \  - p_1\cdot p_2}}{\regLang_1^{} \cup \regLang_2^{}}}
            } 
            {
                \deduce{
                    \agTriple{\alphabetExtensionOfTo{\ppa_2}{\alphabetOf{\multiobjectiveQuery{}{A}_{2}}}, \region_2 }{\fairWrtRegionModel{}{\decomp_2}}{\multiobjectiveQuery{}{A}_{2}}{{\probPredicate{\sim p_2}{\regLang_2^{}} }} 
                }{
                                \deduce{
                                    \agTriple{\alphabetExtensionOfTo{\ppa_1}{\alphabetOf{\multiobjectiveQuery{}{A}_{1}}}, \region_1 }{\fairWrtRegionModel{}{\decomp_1}}{\multiobjectiveQuery{}{A}_{1}}{\probPredicate{\sim p_1}{\regLang_1^{}}  }
                                    }{}
            }} 
        $
      \\ %
      \\
        &  
        for $\sim \in \{ <, \leq, >, \geq\}$ and 
       $\decomp_i \in \setOfdecompsOf{\alphabetOf{\ppa_i} \cup \alphabetOf{\multiobjectiveQuery{}{A}_{i}}}$. 
    \end{tabularx}
\end{restatable}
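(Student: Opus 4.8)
The plan is to fix an arbitrary valuation $\valuation \in \regionIntersectionOf{\region_1}{}{\region_2}$ and an admissible strategy $\strategy$ for $\ppa = \alphabetExtensionOfTo{(\ppa_1 \parallel \ppa_2)}{\alphabetOf{\multiobjectiveQuery{}{A}_{\land}}}$ — partial for the left rule, $\fairWrtRegionModel{}{\decomp_1\cup\decomp_2}$ for the right rule — and to prove the single-valuation statement $\ppa, \valuation, \strategy \models \probPredicate{\sim p_1+p_2-p_1\cdot p_2}{\regLang_1 \cup \regLang_2}$ under the hypothesis $\ppa, \valuation, \strategy \models \multiobjectiveQuery{}{A}_{\land}$; the region-level AG triple of \Cref{def_pag_triple} then follows by quantifying over all such $\valuation$ and $\strategy$. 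By \Cref{rem:extension} I may equivalently work with $\alphabetExtensionOfTo{\ppa_1}{\alphabetOf{\multiobjectiveQuery{}{A}_1}} \parallel \alphabetExtensionOfTo{\ppa_2}{\alphabetOf{\multiobjectiveQuery{}{A}_2}}$, and since $\multiobjectiveQuery{}{A}_{\land} = \multiobjectiveQuery{}{A}_1 \land \multiobjectiveQuery{}{A}_2$ the hypothesis splits into $\ppa,\valuation,\strategy \models \multiobjectiveQuery{}{A}_1$ and $\ppa,\valuation,\strategy \models \multiobjectiveQuery{}{A}_2$.

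The first substantial step reuses the mechanics of \Cref{theo:pag_asym_rule}. I would project $\strategy$ to each component, obtaining $\stratProjOfToValuation{\strategy}{i}{\valuation}$, and transport properties along the projection twice via \Cref{theo:lemma3NewDependent}: from $\ppa,\valuation,\strategy \models \multiobjectiveQuery{}{A}_i$ I get $\alphabetExtensionOfTo{\ppa_i}{\alphabetOf{\multiobjectiveQuery{}{A}_i}}, \valuation, \stratProjOfToValuation{\strategy}{i}{\valuation} \models \multiobjectiveQuery{}{A}_i$; the $i$-th premise's AG triple then yields $\alphabetExtensionOfTo{\ppa_i}{\alphabetOf{\multiobjectiveQuery{}{A}_i}}, \valuation, \stratProjOfToValuation{\strategy}{i}{\valuation} \models \probPredicate{\sim p_i}{\regLang_i}$; and a second application of \Cref{theo:lemma3NewDependent} transfers this marginal back to the composition, giving $\PrOf{\ppa}{\valuation,\strategy}{\regLang_i} \sim p_i$ for $i=1,2$. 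For the right rule, \Cref{theo:strategy_projection_partial_fair_preserved} ensures each $\stratProjOfToValuation{\strategy}{i}{\valuation}$ is $\fairWrtRegionModel{}{\decomp_i}$, so the fair premises apply; for the safety (left) rule the triples are already stated over partial strategies, so no conversion through \Cref{lemma_safety_partial_vs_complete} is needed.

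The crux is combining the two marginal bounds into the bound on the union. Writing $E_i = \{\pi \in \infPathsOf{\ppa}{} \mid \restrOfTo{\traceOf{\pi}}{\alphabetOf{i}} \in \regLang_i\}$, I would use inclusion–exclusion
\[
\PrOf{\ppa}{\valuation,\strategy}{E_1 \cup E_2} = \PrOf{\ppa}{\valuation,\strategy}{E_1} + \PrOf{\ppa}{\valuation,\strategy}{E_2} - \PrOf{\ppa}{\valuation,\strategy}{E_1 \cap E_2}
\]
together with the factorization $\PrOf{\ppa}{\valuation,\strategy}{E_1 \cap E_2} = \PrOf{\ppa}{\valuation,\strategy}{E_1}\cdot\PrOf{\ppa}{\valuation,\strategy}{E_2}$, which reduces the claim to the monotonicity of $g(x,y) = x+y-x\cdot y = 1-(1-x)(1-y)$: since $g$ is increasing in each argument on $[0,1]^2$, the marginal bounds $\PrOf{\ppa}{\valuation,\strategy}{E_i} \sim p_i$ give $\PrOf{\ppa}{\valuation,\strategy}{E_1\cup E_2} = g(\PrOf{\ppa}{\valuation,\strategy}{E_1},\PrOf{\ppa}{\valuation,\strategy}{E_2}) \sim g(p_1,p_2)$ uniformly for all four comparisons $\sim \in \{<,\leq,>,\geq\}$, and the $\geq$ safety case is a special instance. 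The genuine obstacle — and the only part not already covered by the asymmetric and circular rules — is justifying the factorization, i.e.\ the stochastic independence of $E_1$ and $E_2$ under the composed measure. This is exactly where the \emph{interleaving} hypothesis enters: $\regLang_1$ constrains only $\alphabetOf{1}$-labels and $\regLang_2$ only $\alphabetOf{2}$-labels, and the components evolve asynchronously on these labels, so their projected traces decouple. I expect the cleanest route is to discharge this pointwise: as $\valuation$ is well-defined, $\ppa_1[\valuation]\parallel\ppa_2[\valuation] = (\ppa_1\parallel\ppa_2)[\valuation]$ is an ordinary PA, the instantiated premises are non-parametric AG triples, and the factorization together with the combination is precisely the content of the non-parametric interleaving result \cite[Theorem~6]{Kwi+13}, which I would invoke at the fixed valuation $\valuation$. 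The delicacy to watch for is that the independence does not hold for an arbitrary correlated strategy in isolation; it is the \emph{universal} quantification over strategies in the premises (a positively correlated strategy would force one projected marginal below its guaranteed bound) that underwrites the decoupling, and making this interaction precise in the parametric, fairness-aware setting is the step requiring the most care.
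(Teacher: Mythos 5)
Your proposal follows the same skeleton as the paper's proof: fix $\valuation$ and $\strategy$, split the assumption $\multiobjectiveQuery{}{A}_\land$, push each conjunct to the component via projection and \Cref{theo:lemma3NewDependent}, apply the $i$-th premise to get $\PrOf{}{}{\regLang_i}\sim p_i$, use \Cref{theo:strategy_projection_partial_fair_preserved} for the fair variant, and finish with the monotonicity of $1-(1-x)(1-y)$. The only divergence is in how the two marginals are combined. The paper does not use inclusion--exclusion on the union and does not invoke \cite[Theorem~6]{Kwi+13} as a black box at the instantiated valuation; it works with the complement event and asserts the factorization
$\PrOf{}{\valuation,\strategy}{\overline{E_1}\cap\overline{E_2}} = \prod_i \PrOf{\alphabetExtensionOfTo{\ppa_i}{\alphabetOf{\multiobjectiveQuery{}{A}_i}}}{\valuation,\stratProjOfToValuation{\strategy}{i}{\valuation}}{\overline{E_i}}$
directly from the disjointness $(\alphabetOf{\ppa_1}\cup\alphabetOf{\multiobjectiveQuery{}{A}_1})\cap(\alphabetOf{\ppa_2}\cup\alphabetOf{\multiobjectiveQuery{}{A}_2})=\emptyset$, i.e.\ from the fact that the two components share no labels at all; this is algebraically the same independence claim as your $\PrOf{}{}{E_1\cap E_2}=\PrOf{}{}{E_1}\cdot\PrOf{}{}{E_2}$. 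Your closing remark that the independence ``does not hold for an arbitrary correlated strategy in isolation'' and is instead ``underwritten'' by the universal quantification over strategies in the premises is not how the paper argues, and as sketched it would not complete the computation for a fixed strategy: the inclusion--exclusion identity you write down is a statement about one fixed $\strategy$, and quantifying over other strategies in the premises does not by itself justify it. If you believe the pointwise factorization is suspect for history-dependent strategies of the composition (a strategy that schedules $\ppa_2$ depending on what it has observed of $\ppa_1$), that concern applies equally to the paper's proof and to \cite[Theorem~6]{Kwi+13}, so deferring to the latter does not discharge it; you should either prove the factorization from alphabet disjointness as the paper does, or make your worry precise and exhibit why it does or does not bite.
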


\begin{proof}
    The proof is based on the proof of \cite[Theorem 6]{Kwi+13}. 
    First, ${\probPredicate{ \substack{\sim p_1+p_2 \\ \  - p_1\cdot p_2}}{\regLang_1^{(\safe)} \cup \regLang_2^{(\safe)}}}$ is a valid query for $\ppa$, since $\alphabetOf{{\probPredicate{ \substack{\sim p_1+p_2 \\ \  - p_1\cdot p_2}}{\regLang_1^{(\safe)} \cup \regLang_2^{(\safe)}}}} \subseteq \alphabetOf{\ppa} $,  
    We show the proof of correctness of the rule involving general predicates and fair strategies. 
    For safety properties the proof works analogously. 

    If $\regionIntersectionOf{\region_{1}}{}{\region_2} = \emptyset$, the conclusion trivially holds. 
    We assume $\regionIntersectionOf{\region_{1}}{}{\region_2} \not = \emptyset$. 
    We show that $\agTriple{{\alphabetExtensionOfTo{\ppa_1}{\alphabetOf{\multiobjectiveQuery{}{A}_1}}\parallel \alphabetExtensionOfTo{\ppa_2}{\alphabetOf{\multiobjectiveQuery{}{A}_2}}}, \region_{\regionIntersection}}{{\fairWrtRegionModel{}{ \decomp_1 \cup \decomp_2 }  }  }{\multiobjectiveQuery{}{A}_{\land} }{\probPredicate{ \substack{\sim p_1+p_2 \\ \  - p_1\cdot p_2}}{\regLang_1^{} \cup \regLang_2^{}}}$
    holds, which implies that 
    $\agTriple{{ \ppa}, \region_{\regionIntersection}}{{\fairWrtRegionModel{}{ \decomp_1 \cup \decomp_2 }  }  }{\multiobjectiveQuery{}{A}_{\land} }{\probPredicate{ \substack{\sim p_1+p_2 \\ \  - p_1\cdot p_2}}{\regLang_1^{} \cup \regLang_2^{}}}$ holds. 

	Let $\valuation$ be a valuation in $\regionIntersectionOf{\region_1}{}{\region_2}$ and $\strategy$ be a $\fairWrtRegionModel{}{\decomp_1 \cup  \decomp_2}$ strategy of 
	$(\alphabetExtensionOfTo{\ppa_1}{\alphabetOf{\multiobjectiveQuery{}{A}_1}}\parallel \alphabetExtensionOfTo{\ppa_2}{\alphabetOf{\multiobjectiveQuery{}{A}_2}})[\valuation]$. 
	We show	${(\alphabetExtensionOfTo{\ppa_1}{\alphabetOf{\multiobjectiveQuery{}{A}_1}}\parallel \alphabetExtensionOfTo{\ppa_2}{\alphabetOf{\multiobjectiveQuery{}{A}_2}})}[\valuation], {\strategy}\models  \multiobjectiveQuery{}{A}_1 \land \multiobjectiveQuery{}{A}_1[ \valuation]  \text{ implies }
    {(\alphabetExtensionOfTo{\ppa_1}{\alphabetOf{\multiobjectiveQuery{}{A}_1}}\parallel \alphabetExtensionOfTo{\ppa_2}{\alphabetOf{\multiobjectiveQuery{}{A}_2}})}[\valuation], {\strategy} \models  \probPredicate{\substack{\sim p_1+p_2 \\ \  - p_1\cdot p_2}}{\regLang_1 \cup \regLang_2}$. 
	
	Assume ${(\alphabetExtensionOfTo{\ppa_1}{\alphabetOf{\multiobjectiveQuery{}{A}_1}}\parallel \alphabetExtensionOfTo{\ppa_2}{\alphabetOf{\multiobjectiveQuery{}{A}_2}})}[\valuation], {\strategy} \models  \multiobjectiveQuery{}{A}_1 \land \multiobjectiveQuery{}{A}_1[\valuation]$ holds. 
			
    We have $\agTriple{\alphabetExtensionOfTo{\ppa_i}{\multiobjectiveQuery{}{A}_i}, \regionIntersectionOf{\region_1}{}{\region_2} }{\fairWrtRegionModel{}{\decomp_i} }{\multiobjectiveQuery{}{A}_i}{\probPredicate{\sim p_i}{\regLang_i}}$ for $i \in \{1,2\}$ by the premises.  
    Thus, 
	\begin{align}
		& {(\alphabetExtensionOfTo{\ppa_1}{\alphabetOf{\multiobjectiveQuery{}{A}_1}}\parallel \alphabetExtensionOfTo{\ppa_2}{\alphabetOf{\multiobjectiveQuery{}{A}_2}})}[\valuation], {\strategy} \models   \multiobjectiveQuery{}{A}_i[ \valuation ] \nonumber \\
		& \Rightarrow \alphabetExtensionOfTo{\ppa_i }{\alphabetOf{\multiobjectiveQuery{}{A}_i}}[\valuation], {\stratProjOfToValuation{\strategy}{i}{\valuation}} \models   \multiobjectiveQuery{}{A}_i[ \valuation ]
		&&  \text{By \Cref{lemma_3_fDependent}} \nonumber \\
		& \Rightarrow 
		 \alphabetExtensionOfTo{\ppa_i}{\alphabetOf{\multiobjectiveQuery{}{A}_i}}[{\valuation}], \stratProjOfToValuation{\strategy}{i}{\valuation} \models   \probPredicate{\sim p_i}{\regLang_i}
		 && \text{$ \stratProjOfToValuation{\strategy}{i}{\valuation} \in \strategysetOf{{\alphabetExtensionOfTo{\ppa_i}{\alphabetOf{\multiobjectiveQuery{}{A}_i} }}[\valuation]}{\fairWrtRegionModel{}{\decomp_i}}$} \nonumber\\
		 & && \text{and $\agTriple{\alphabetExtensionOfTo{\ppa_i}{\alphabetOf{\multiobjectiveQuery{}{A}_i}}, \regionIntersectionOf{\region_1}{}{\region_2} }{\fairWrtRegionModel{}{\decomp_i}}{\multiobjectiveQuery{}{A}_i}{\probPredicate{\sim p_i}{\regLang_i}}$} \nonumber \\
		 & \Rightarrow \PrOf{\alphabetExtensionOfTo{\ppa_i}{\alphabetOf{\multiobjectiveQuery{}{A}_i}}}{{\valuation},\stratProjOfToValuation{\strategy}{i}{\valuation}}{\regLang_i}  \sim p_i && \label{eq:proof_async_quant_G_sat} 
	\end{align}
    We have $(\alphabetOf{\ppa_1} \cup \alphabetOf{\multiobjectiveQuery{}{A}_1} )\cap (\alphabetOf{\ppa_2} \cup \alphabetOf{\multiobjectiveQuery{}{A}_2}) = \emptyset$. 
    Thus,
    \begin{align*}
    &	\PrOf{{(\alphabetExtensionOfTo{\ppa_1}{\alphabetOf{\multiobjectiveQuery{}{A}_1}}\parallel \alphabetExtensionOfTo{\ppa_2}{\alphabetOf{\multiobjectiveQuery{}{A}_2}})}}{\valuation,\strategy}{\{
            \pi \in \infPathsOf{{(\alphabetExtensionOfTo{\ppa_1}{\alphabetOf{\multiobjectiveQuery{}{A}_1}}\parallel \alphabetExtensionOfTo{\ppa_2}{\alphabetOf{\multiobjectiveQuery{}{A}_2}})}[\valuation]} \mid \restrOfTo{\traceOf{\pi}}{\alphabetOf{\regLang_1 \cup \regLang_2}}  \not \in \regLang_1 \cup \regLang_2 \}} \\
        %
        %
        %
        &	\quad  =  \prod_{i \in \{1,2\}}	
            \PrOf{\alphabetExtensionOfTo{\ppa_i}{\alphabetOf{\multiobjectiveQuery{}{A}_i }}}{\valuation,\stratProjOfToValuation{\strategy}{i}{\valuation}}{  \{ \pi \in \infPathsOf{\alphabetExtensionOfTo{\ppa_i }{\alphabetOf{\multiobjectiveQuery{}{A}_i}}[\valuation]} \mid \restrOfTo{\traceOf{\pi}}{\alphabetOf{\regLang_i}}  \not \in \regLang_i \} } 
        \\
        &	\quad  =  \left(1-\PrOf{\alphabetExtensionOfTo{\ppa_1}{\alphabetOf{\multiobjectiveQuery{}{A}_1}}}{{\valuation},\stratProjOfToValuation{\strategy}{1}{\valuation}}{\regLang_1} \right) \cdot
            \left(1 - \PrOf{\alphabetExtensionOfTo{\ppa_2}{2}}{{\valuation},\stratProjOfToValuation{\strategy}{2}{\valuation}}{\regLang_2}  \right)
    \end{align*}
    Then, by \Cref{eq:proof_async_quant_G_sat}: 
    \begin{align*}
        & \PrOf{{(\alphabetExtensionOfTo{\ppa_1}{\alphabetOf{\multiobjectiveQuery{}{A}_1}}\parallel \alphabetExtensionOfTo{\ppa_2}{\alphabetOf{\multiobjectiveQuery{}{A}_2}})}}{\valuation,\strategy}{\regLang_1 \cup \regLang_2} 
        \\
        & = 1- 	\PrOf{{(\alphabetExtensionOfTo{\ppa_1}{\alphabetOf{\multiobjectiveQuery{}{A}_1}}\parallel \alphabetExtensionOfTo{\ppa_2}{\alphabetOf{\multiobjectiveQuery{}{A}_2}})}}{\valuation,\strategy}{ \{
            \pi \in \infPathsOf{(\alphabetExtensionOfTo{\ppa_1}{\alphabetOf{\multiobjectiveQuery{}{A}_1}}\parallel \alphabetExtensionOfTo{\ppa_2}{\alphabetOf{\multiobjectiveQuery{}{A}_2}})[\valuation]}  \mid \restrOfTo{\traceOf{\pi}}{\alphabetOf{\regLang_1 \cup \regLang_2}}  \not \in \regLang_1 \cup \regLang_2 \}} \\
        & =1- \left(1-\underbrace{\PrOf{\alphabetExtensionOfTo{\ppa_1}{\alphabetOf{\multiobjectiveQuery{}{A}_1}}}{{\valuation},\stratProjOfToValuation{\strategy}{1}{\valuation}}{\regLang_1}}_{\sim p_1} \right) \cdot
        \left(1 - \underbrace{\PrOf{\alphabetExtensionOfTo{\ppa_2}{\alphabetOf{\multiobjectiveQuery{}{A}_2}}}{{\valuation},\stratProjOfToValuation{\strategy}{2}{\valuation}}{\regLang_2}}_{\sim p_2} \right) \\
        & \sim p_1 +p_2 -p_1\cdot p_2
    \end{align*}
\end{proof}
Lastly, we lift \cite[Theorem 7]{Kwi+13} to pPA, which reasons about the sum of two (parametric) reward functions $\rewFct_1$, $\rewFct_2$ for $\ppa_1$ and $\ppa_2$, respectively, in the composition $\ppa_1 \parallel \ppa_2$. 
The sum of $\rewFct_1$ and $\rewFct_2$ is reward function $(\rewFct_1 + \rewFct_2) $ on $\alphabetOf{\rewFct_1} \cup  \alphabetOf{\rewFct_2} $ over the parameter set $\parameterSetOf{\rewFct_1} \cup \parameterSetOf{\rewFct_2}$. 
\begin{align*}
    (\rewFct_1 + \rewFct_2)(a) = 
    \begin{cases}
        \rewFct_1(a) + \rewFct_2(a), & \text{for } a \in \alphabetOf{\rewFct_1} \cap \alphabetOf{\rewFct_2} \\
        \rewFct_1(a), & \text{for }a \in \alphabetOf{\rewFct_1} \setminus \alphabetOf{\rewFct_2} \\
        \rewFct_2(a), & \text{for }a \in \alphabetOf{\rewFct_2} \setminus \alphabetOf{\rewFct_1}. 
    \end{cases}
\end{align*}
\begin{restatable}[Reward Sum]{theorem}{restatableRewardRule}\label{theo:pag_reward_rule}

   Let $\region_{\regionIntersection} = \regionIntersectionOf{\region_1}{}{\region_2}$, $\multiobjectiveQuery{}{A}_{\land} = \multiobjectiveQuery{}{A}_{1} \land \multiobjectiveQuery{}{A}_{2}$, and $\ppa = \alphabetExtensionOfTo{(\ppa_1 \parallel \ppa_2)}{\alphabetOf{\multiobjectiveQuery{}{A}_{\land}}}$. 

   $
    \infer{
            \agTriple{{ \ppa}, \region_{\regionIntersection}}{{\fairWrtRegionModel{}{ \decomp_1 \cup \decomp_2 }  }  }{\multiobjectiveQuery{}{A}_{\land} }{\expPredicate{\sim r_1 +r_2 }{\rewFct_1 + \rewFct_2}}
        }
        {\deduce{
                \agTriple{\alphabetExtensionOfTo{\ppa_2}{\multiobjectiveQuery{}{A}_2}, \region_2 }{\fairWrtRegionModel{}{  \decomp_2} }{\multiobjectiveQuery{}{A}_2}{\expPredicate{\sim r_2}{\rewFct_2}}
            }
            {
                \agTriple{\alphabetExtensionOfTo{\ppa_1}{\multiobjectiveQuery{}{A}_1}, \region_1 }{\fairWrtRegionModel{}{ \decomp_1} }{\multiobjectiveQuery{}{A}_1}{\expPredicate{\sim r_1}{\rewFct_1}}
            }
        }
    $

        for  $\sim \in \{ <, \leq, >, \geq\}$ and 
       $\decomp_i \in \setOfdecompsOf{\alphabetOf{\ppa_i}\cup \alphabetOf{\multiobjectiveQuery{}{A}_{i}}}$. 
\end{restatable}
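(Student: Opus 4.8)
The plan is to follow the proof of \Cref{theo:pag_interleaving_rule} almost verbatim, replacing the probabilistic combination $p_1+p_2-p_1\cdot p_2$ by the additive one $r_1+r_2$. First I would check that the guarantee is well-formed: since $\alphabetOf{\rewFct_i} \subseteq \alphabetOf{\ppa_i} \cup \alphabetOf{\multiobjectiveQuery{}{A}_i}$, the sum $\rewFct_1+\rewFct_2$ is defined over $\alphabetOf{\rewFct_1} \cup \alphabetOf{\rewFct_2} \subseteq \alphabetOf{\ppa}$, so $\expPredicate{\sim r_1+r_2}{\rewFct_1+\rewFct_2}$ is a valid query for $\ppa = \alphabetExtensionOfTo{(\ppa_1 \parallel \ppa_2)}{\alphabetOf{\multiobjectiveQuery{}{A}_{\land}}}$. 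If $\region_{\regionIntersection} = \emptyset$ the conclusion is vacuous; otherwise, by \Cref{rem:extension} it suffices to establish the triple for $\alphabetExtensionOfTo{\ppa_1}{\multiobjectiveQuery{}{A}_1} \parallel \alphabetExtensionOfTo{\ppa_2}{\multiobjectiveQuery{}{A}_2}$, which then implies the one for $\ppa$.

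Next I would fix $\valuation \in \region_{\regionIntersection}$ and a $\fairWrtRegionModel{}{\decomp_1 \cup \decomp_2}$ strategy $\strategy$, and assume the antecedent $\multiobjectiveQuery{}{A}_{\land}$ holds under $\valuation,\strategy$. For each $i$, \Cref{theo:lemma3NewDependent} transports the assumption to the projection, so $\alphabetExtensionOfTo{\ppa_i}{\multiobjectiveQuery{}{A}_i}$ satisfies $\multiobjectiveQuery{}{A}_i$ under $\stratProjOfToValuation{\strategy}{i}{\valuation}$; by \Cref{theo:strategy_projection_partial_fair_preserved} this projection is a $\fairWrtRegionModel{}{\decomp_i}$ strategy, so the $i$-th premise delivers $\ExpTot{\alphabetExtensionOfTo{\ppa_i}{\multiobjectiveQuery{}{A}_i}}{\valuation, \stratProjOfToValuation{\strategy}{i}{\valuation}}{\rewFct_i} \sim r_i$.

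The central step is the decomposition of the expected reward of the sum. Because $(\alphabetOf{\ppa_1} \cup \alphabetOf{\multiobjectiveQuery{}{A}_1}) \cap (\alphabetOf{\ppa_2} \cup \alphabetOf{\multiobjectiveQuery{}{A}_2}) = \emptyset$, the label sets of $\rewFct_1$ and $\rewFct_2$ are disjoint, so the accumulated reward is additive on every trace, $(\rewFct_1 + \rewFct_2)[\valuation](\trace) = \rewFct_1[\valuation](\trace) + \rewFct_2[\valuation](\trace)$. Linearity of the integral defining $\ExpTot{}{}{}$ then gives
\[
\ExpTot{\ppa}{\valuation, \strategy}{\rewFct_1 + \rewFct_2} = \ExpTot{\ppa}{\valuation, \strategy}{\rewFct_1} + \ExpTot{\ppa}{\valuation, \strategy}{\rewFct_2},
\]
and a second application of \Cref{theo:lemma3NewDependent}, now to rewards (with the routine bookkeeping that extra self-loops from the alphabet extension leave the reward over $\alphabetOf{\rewFct_i}$ unchanged, cf.\ \Cref{rem:extension}), rewrites each summand as $\ExpTot{\alphabetExtensionOfTo{\ppa_i}{\multiobjectiveQuery{}{A}_i}}{\valuation, \stratProjOfToValuation{\strategy}{i}{\valuation}}{\rewFct_i}$. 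Combining with the bounds from the premises, $a \sim r_1$ and $b \sim r_2$ imply $a + b \sim r_1 + r_2$ for every $\sim \in \{<, \leq, >, \geq\}$, which is exactly $\expPredicate{\sim r_1+r_2}{\rewFct_1+\rewFct_2}$.

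I expect the main obstacle to be precisely this additivity argument: one must simultaneously justify the pointwise additivity of the accumulated reward over traces (which hinges on the disjointness of the component alphabets) and the linearity of the expectation, while ensuring the combination of the inequalities remains sound in the presence of possible $+\infty$ reward values\textemdash{}here the non-negativity of $\rewFct_1,\rewFct_2$ rules out any indeterminate $\infty-\infty$ case. The reduction via \Cref{rem:extension} and the transport of fairness through \Cref{theo:strategy_projection_partial_fair_preserved} are routine once \Cref{theo:pag_interleaving_rule} has been established.
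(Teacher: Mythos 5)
Your proposal matches the paper's proof essentially step for step: reduce via \Cref{rem:extension} to the composition of alphabet extensions, transport the assumption to the projections with \Cref{theo:lemma3NewDependent}, use \Cref{theo:strategy_projection_partial_fair_preserved} to keep the projections fair so the premises apply, split the expectation by linearity, and transport each summand back. One minor inaccuracy: the disjointness $(\alphabetOf{\ppa_1}\cup\alphabetOf{\multiobjectiveQuery{}{A}_1})\cap(\alphabetOf{\ppa_2}\cup\alphabetOf{\multiobjectiveQuery{}{A}_2})=\emptyset$ is a hypothesis of the interleaving rule, not of this one\textemdash{}the pointwise additivity $(\rewFct_1+\rewFct_2)[\valuation](\trace)=\rewFct_1[\valuation](\trace)+\rewFct_2[\valuation](\trace)$ already follows from the definition of $\rewFct_1+\rewFct_2$ even on overlapping alphabets, which is what the paper implicitly uses.
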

\begin{proof}
        The proof is similar to the proof of \cite[Theorem 7]{Kwi+13}. 
        Since $\alphabetOf{\expPredicate{\sim r_1 + r_2}{\rewFct_1 + \rewFct_2}} \subseteq \alphabetOf{\ppa} $,  
        We have that ${\expPredicate{\sim r_1 + r_2}{\rewFct_1 + \rewFct_2}}$ is a valid query for $\ppa$, 
        If $\regionIntersectionOf{\region_{1}}{}{\region_2} = \emptyset$, the conclusion trivially holds. 
        We assume $\regionIntersectionOf{\region_{1}}{}{\region_2} \not = \emptyset$. 

        We show that $\agTriple{{\alphabetExtensionOfTo{\ppa_1}{\alphabetOf{\multiobjectiveQuery{}{A}_1}}\parallel \alphabetExtensionOfTo{\ppa_2}{\alphabetOf{\multiobjectiveQuery{}{A}_2}}}, \region_{\regionIntersection}}{{\fairWrtRegionModel{}{ \decomp_1 \cup \decomp_2 }  }  }{\multiobjectiveQuery{}{A}_{\land} }{{\expPredicate{\sim r_1 + r_2}{\rewFct_1 + \rewFct_2}}}$
        holds, which implies that 
        $\agTriple{{ \ppa}, \region_{\regionIntersection}}{{\fairWrtRegionModel{}{ \decomp_1 \cup \decomp_2 }  }  }{\multiobjectiveQuery{}{A}_{\land} }{{\expPredicate{\sim r_1 + r_2}{\rewFct_1 + \rewFct_2}}}$ holds. 
    
        Let $\valuation$ be a valuation in $\regionIntersectionOf{\region_1}{}{\region_2}$ and $\strategy$  a $\fairWrtRegionModel{}{\decomp_1 \cup  \decomp_2 }$ strategy of 
        $\alphabetExtensionOfTo{\ppa_1}{\alphabetOf{\multiobjectiveQuery{}{A}_1}}\parallel \alphabetExtensionOfTo{\ppa_2}{\alphabetOf{\multiobjectiveQuery{}{A}_2}}[\valuation]$. 
        Assume 
        $ {(\alphabetExtensionOfTo{\ppa_1}{\alphabetOf{\multiobjectiveQuery{}{A}_1}}\parallel \alphabetExtensionOfTo{\ppa_2}{\alphabetOf{\multiobjectiveQuery{}{A}_2}})}[\valuation],{\strategy} \models  (\multiobjectiveQuery{}{A}_1 \land \multiobjectiveQuery{}{A}_2)[\valuation ]$. 
        We show ${{\alphabetExtensionOfTo{\ppa_1}{\alphabetOf{\multiobjectiveQuery{}{A}_1}}\parallel \alphabetExtensionOfTo{\ppa_2}{\alphabetOf{\multiobjectiveQuery{}{A}_2}}}}[\valuation],{\strategy} \models {\expPredicate{\sim r_1 +r_2 }{\rewFct_1 + \rewFct_2}}[\valuation]$. 
        The premises imply that $\agTriple{\alphabetExtensionOfTo{\ppa_i}{\multiobjectiveQuery{}{A}_i}, \regionIntersectionOf{\region_1}{}{\region_2} }{\fairWrtRegionModel{}{\decomp_i} }{\multiobjectiveQuery{}{A}_i}{\expPredicate{\sim r_i}{\rewFct_i}}$ holds 
        for $i \in \{1,2\}$. 
        Thus, 
        \begin{align}
            & {(\alphabetExtensionOfTo{\ppa_1}{\alphabetOf{\multiobjectiveQuery{}{A}_1}}\parallel \alphabetExtensionOfTo{\ppa_2}{\alphabetOf{\multiobjectiveQuery{}{A}_2}})}[\valuation],{\strategy} \models   \multiobjectiveQuery{}{A}_i[ \valuation ] \nonumber \\
            & \Rightarrow \alphabetExtensionOfTo{\ppa_i }{\alphabetOf{\multiobjectiveQuery{}{A}_i}}[\valuation],{\stratProjOfToValuation{\strategy}{i}{\valuation} } \models   \multiobjectiveQuery{}{A}_i[ \valuation ]
            &&  \text{By \Cref{lemma_3_fDependent}} \nonumber \\
            & \Rightarrow 
            \alphabetExtensionOfTo{\ppa_i }{\alphabetOf{\multiobjectiveQuery{}{A}_i}}[{\valuation}],{\stratProjOfToValuation{\strategy}{i}{\valuation} } \models  {\expPredicate{\sim r_i  }{\rewFct_i }}[\valuation]
            && \text{$ \stratProjOfToValuation{\strategy}{i}{\valuation} \in \strategysetOf{{\alphabetExtensionOfTo{\ppa_i}{\alphabetOf{\multiobjectiveQuery{}{A}_i} }}[\valuation]}{\fairWrtRegionModel{}{\decomp_i}}$} \nonumber\\
            & && \text{as $\agTriple{\alphabetExtensionOfTo{\ppa_i}{\alphabetOf{\multiobjectiveQuery{}{A}_i}}, \regionIntersectionOf{\region_1}{}{\region_2} }{\fairWrtRegionModel{}{\decomp_i}}{\multiobjectiveQuery{}{A}_i}{\expPredicate{\sim r_i  }{\rewFct_i }}$} \nonumber 
        \end{align}
        It follows that 
        \begin{align}	\ExpTot{\alphabetExtensionOfTo{\ppa_i}{\alphabetOf{\multiobjectiveQuery{}{A}_i}}}{{\valuation},\stratProjOfToValuation{\strategy}{i}{\valuation}}{\rewFct_i}  \sim r_i && \label{eq:proof_rew_quant_G_sat} 
        \end{align}
        Thus, 
        \begin{align*}
            & \ExpTot{{(\alphabetExtensionOfTo{\ppa_1}{\alphabetOf{\multiobjectiveQuery{}{A}_1}}\parallel \alphabetExtensionOfTo{\ppa_2}{\alphabetOf{\multiobjectiveQuery{}{A}_2}})}}{\valuation,\strategy}{\rewFct_1+\rewFct_2}  \\
            & = \int_{\pi } (\rewFct_1 + \rewFct_2)[\valuation](\pi)  \,d 	\PrOf{{(\alphabetExtensionOfTo{\ppa_1}{\alphabetOf{\multiobjectiveQuery{}{A}_1}}\parallel \alphabetExtensionOfTo{\ppa_2}{\alphabetOf{\multiobjectiveQuery{}{A}_2}})}}{\valuation,\strategy}{}
            && 
            \text{By definition}
            \\
            &  = \sum_{i \in \{1,2\}} \int_{\pi } (\rewFct_i[\valuation])(\pi)  \,d 	\PrOf{{(\alphabetExtensionOfTo{\ppa_1}{\alphabetOf{\multiobjectiveQuery{}{A}_1}}\parallel \alphabetExtensionOfTo{\ppa_2}{\alphabetOf{\multiobjectiveQuery{}{A}_2}})}}{\valuation,\strategy}{} 
            &&   \\
            & = \sum_{i \in \{1,2\}} \ExpTot{{(\alphabetExtensionOfTo{\ppa_1}{\alphabetOf{\multiobjectiveQuery{}{A}_1}}\parallel \alphabetExtensionOfTo{\ppa_2}{\alphabetOf{\multiobjectiveQuery{}{A}_2}})}}{\valuation,\strategy}{\rewFct_i}  && 
            \text{By definition} \\
                & = \sum_{i \in \{1,2\}} \ExpTot{\left( \alphabetExtensionOfTo{\ppa_1}{\alphabetOf{\multiobjectiveQuery{}{A}_1}} \parallel \alphabetExtensionOfTo{\ppa_2}{\alphabetOf{\multiobjectiveQuery{}{A}_2}}
                    \right)}{\valuation,\strategy}{\rewFct_i}  
                &&  \\
                &  = \sum_{i \in \{1,2\}} \underbrace{\ExpTot{ \alphabetExtensionOfTo{\ppa_i}{\alphabetOf{\multiobjectiveQuery{}{A}_i}} 
                }{\valuation,\stratProjOfToValuation{\strategy}{i}{\valuation}}{\rewFct_i} }_{\sim r_i}
                && \text{By \Cref{lemma_3_dDependent}}\\
                & \sim r_1 +r_2  &&\text{ by \Cref{eq:proof_rew_quant_G_sat}}
        \end{align*}
        Thus, we have shown  
        \[
            {(\alphabetExtensionOfTo{\ppa_1}{\alphabetOf{\multiobjectiveQuery{}{A}_1}}\parallel \alphabetExtensionOfTo{\ppa_2}{\alphabetOf{\multiobjectiveQuery{}{A}_2}})}[\valuation],{\strategy} \models {\expPredicate{\sim r_1 +r_2 }{\rewFct_1 + \rewFct_2}}[\valuation]
        \]
    \end{proof}

\section{Differences to \cite{Kwi+13} --- Explained}
\label{sec:explain_differences}
The safety properties in \cite{Kwi+13} are languages over infinite words, whereas we allow finite words. 
This leads to the following problems in \cite{Kwi+13}: 
\begin{itemize}
	\item  \cite[Lemma 1]{Kwi+13} is inconsistent because $\regLang$ is not prefix-closed: 
	In \cite{Kwi+13}, safety properties are defined over infinite words: 
	$\regLang = \{ w \in  \alphabetOf{}^{{\color{myred}{\omega}}} \mid \text{no prefix of $w$ is in } \regLang_{err} \}$. 
	However, the probability of a language is defined as: $\PrOf{\pa}{\strategy}{{  \regLang}} = \PrOf{\pa}{\strategy}{\{ \pi \in \infPathsOf{\pa}{} \mid {\color{myred} \restrOfTo{\traceOf{\pi}}{\alphabetOf{\regLang}} \in \regLang} \}}$.  
	This definition does not properly account for bad prefixes. 
	Specifically, for a path $\pi$ of a PA $\pa$ and a safety $\regLang$ for $\pa$, the fact that $\restrOfTo{\traceOf{\pi}}{\alphabetOf{\regLang}} \not \in \regLang$ does not imply that $\restrOfTo{\traceOf{\pi}}{\alphabetOf{\regLang}}$ is accepted by the bad-prefix automaton of $\regLang$.
	For example, if \(\alphabetOf{\regLang} \subset \alphabetOf{\pa}\), then \(\restrOfTo{\traceOf{\pi}}{\alphabetOf{\regLang}}\) may be a finite prefix of a word in \(\regLang\) without being a bad prefix. 
	Since \(\regLang\) is not prefix-closed, this leads to situations where $\pi \not \in \{ \pi \in \infPathsOf{\pa}{} \mid \restrOfTo{\traceOf{\pi}}{\alphabetOf{\regLang}} \in \regLang \},$ even though \(\restrOfTo{\traceOf{\pi}}{\alphabetOf{\regLang}}\) is not a bad prefix. 
	\item This inconsistency propagates to \cite[Proposition 1]{Kwi+13} and thus also to the asymmetric AG-rule in \cite[Theorem 1]{Kwi+13}, both of which rely on \cite[Lemma 1]{Kwi+13}.
\end{itemize}
By allowing finite words, our definition ensures that safety properties are inherently prefix-closed, resolving these issues: 
$\regLang = \{ w \in  \alphabetOf{}^{{\color{myred}{\infty}}} \mid \text{no prefix of $w$ is in } \regLang_{err} \}$. 

	\begin{figure}[ht] 
		\centering 
		\begin{subfigure}[t]{.2\textwidth} 
			\centering 
			\begin{tikzpicture}[mdp]
	\node[ps, init=left] (s0)  {$s_0$};
	
	\node[ps, below=1.2 of s0] (s1)  {$s_1$};

	\path[ptrans]
	
	(s0) edge[] node[pos=0.5,right] {\tact{\lab}} node[pos=0.5,dist] (d0a) {} node[pos=0.1,right] {\tprob{}} (s1)
	
	(s1) edge[loop right] node[pos=0.5,below] {\tact{\altaltlab}} node[dist] (d1b) {} node[pos=0.1,right] {\tprob{}} node[pos=0.2,below] {} (s1)
	
	;

\end{tikzpicture}
			\caption{PA $\pa_1$}\label{fig:simple_paM1}
		\end{subfigure}
		\begin{subfigure}{.44\textwidth} 
			\centering 
			\begin{subfigure}[t]{.45\textwidth} 
				\centering 
				\renewcommand{\extended}{false}
				\begin{tikzpicture}[mdp]
	\node[ps, init=left] (t0)  {$t_0$};

	\path[ptrans]
	
	(t0) edge[loop right] node[pos=0.5,below] {{\ifthenelse{\equal{\extended}{true}}{\tactext{\lab,\altaltlab},}{}}\tact{\altlab}} node[dist] (d1b) {} node[pos=0.1,below] {\tprob{}} node[pos=0.1,below] {} (t0)
	;
	
\end{tikzpicture}
				\caption{PA $\pa_2$}\label{fig:simple_paM2}
			\end{subfigure}\vspace{0.5cm}
			
			\begin{subfigure}[t]{.45\textwidth} 
				\centering 
				\renewcommand{\extended}{true}
				\begin{tikzpicture}[mdp]
	\node[ps, init=left] (t0)  {$t_0$};

	\path[ptrans]
	
	(t0) edge[loop right] node[pos=0.5,below] {{\ifthenelse{\equal{\extended}{true}}{\tactext{\lab,\altaltlab},}{}}\tact{\altlab}} node[dist] (d1b) {} node[pos=0.1,below] {\tprob{}} node[pos=0.1,below] {} (t0)
	;
	
\end{tikzpicture}
				\caption{PA $\alphabetExtensionOfTo{\pa_2}{{\{\lab,\altaltlab\}}}$}\label{fig:simple_paM2_extended}
			\end{subfigure} 
		\end{subfigure} 
		\begin{subfigure}[t]{.3\textwidth} 
			\centering
			\renewcommand{\extended}{false}
			\begin{tikzpicture}[mdp]
	\node[pswide, init=left] (00)  {$s_0, t_0$};	
	
	\node[pswide, below=1.1 of 00] (10)  {$s_1, t_0$};
	
	\path[ptrans]
	
	(00) edge[loop right] node[pos=0.5,below right] {\tact{\altlab}} node[dist] (d1b) {} node[pos=0.15,below] {\tprob{}} node[pos=0.25,below] {} (00)
	
	(10) edge[loop right] node[pos=0.5,below right] {\tact{\altlab},\tact{\altaltlab}} node[dist] (d1b) {} node[pos=0.15,below] {\tprob{}} node[pos=0.25,below] {} (10)
	
	(00) edge[] node[pos=0.5,right] {\tact{\lab}} node[pos=0.5,dist] (d0a) {} node[pos=0.25,below] {\tprob{}} (10)
	
	;
\end{tikzpicture}
			\caption{PA $\pa_1 \parallel \pa_2$}\label{fig:simple_pas_composed}
		\end{subfigure} 
		\caption{PAs used in \Cref{ex:error_AGsafety_rule_1}} 
	\end{figure}

We present a counterexample to \cite[Lemma 1]{Kwi+13} and \cite[Theorem 1]{Kwi+13}, for safety properties containing only infinite words and $\multiobjectiveQuery{\safe}{G}  \not = \alphabetOf{\pa_1 \parallel \pa_2}$ (a similar counterexample exists if $\multiobjectiveQuery{\safe}{A}  \not = \alphabetOf{\pa_1}$). 
\begin{example}\label{ex:error_AGsafety_rule_1}

		Consider the safety property $\regLang = \{ w \in \{\lab,\altaltlab\}^{\omega} \mid \vert w \vert_{\lab} \leq 1 \}$, i.e., $\alphabetOf{\regLang} = {\lab,\altaltlab}$.  
		Importantly, $\regLang$ consists only of infinite words.
		Now, consider the PAs $\pa_1$ and $\pa_2$ in \Cref{fig:simple_paM1,fig:simple_paM2} 
		and their composition $\pa_1 \parallel \pa_2$ in \Cref{fig:simple_pas_composed}. 
		We define the complete strategy $\strategy$ that always chooses action $b$ in $(s_0, t_0)$ with probability 1. 
		Then, $\PrOf{\pa_1 \parallel \pa_2}{\strategy}{\regLang} = 0$ since $\restrOfTo{b^{\omega}}{\alphabetOf{\regLang}} = \epsilon \not \in \regLang$. 
		Thus, we obtain $\pa_1 \parallel \pa_2  \not \modelsWrt{\comp} \multiobjectiveQuery{\safe}{G}$. 
		However, by \cite[Lemma 1]{Kwi+13}, it would follow that $\pa_1 \parallel \pa_2  \modelsWrt{\comp} \multiobjectiveQuery{\safe}{G}$ as there is no strategy $\strategy'$  such that 
        \[
             1- \PrOf{(\pa_1 \parallel \pa_2) {\otimes} \bpAutomatonOf{\regLang}}{\strategy'}{\Diamond bad_{\regLang}} = 1= \PrOf{\pa_1 \parallel \pa_2}{\strategy}{\regLang}. 
        \]
		Since $\PrOf{\pa_1 \parallel \pa_2}{\strategy}{\regLang} = 0$, there cannot be a bijection, contradicting \cite[Lemma 1]{Kwi+13}. 

		This also provides a counterexample to \cite[Theorem 1]{Kwi+13}.
		Let $\multiobjectiveQuery{\safe}{A} = \multiobjectiveQuery{\safe}{G}  = \{ \probPredicate{\geq 1}{\regLang} \}$.    
		For $\pa_1$ in \Cref{fig:simple_paM1}, we have that $\pa_1 \modelsWrt{\comp} \multiobjectiveQuery{\safe}{A}$. 
		Additionally, for $\alphabetExtensionOfTo{\pa_2}{\alphabetOf{\multiobjectiveQuery{\safe}{A}}}$ in \Cref{fig:simple_paM2_extended}, we obtain $\agTriple{\alphabetExtensionOfTo{\pa_2}{\alphabetOf{\regLang}}}{\prt}{\multiobjectiveQuery{\safe}{A}}{\multiobjectiveQuery{\safe}{G}}$, which trivially holds, as it is a tautology. 
		By \cite[Theorem 1]{Kwi+13} this implies that $\pa_1 \parallel \pa_2  \modelsWrt{\comp} \multiobjectiveQuery{\safe}{G}$.  
\end{example}
\cite[Lemma 1, Theorem 1]{Kwi+13} yields results we would expect using our definition of safety properties, which consider both finite and infinite words.

}{
}

\end{document}